\newlength{\abstractwidth}
\tikzstyle arrowstyle=[scale=1]
\tikzstyle directed=[postaction={decorate,decoration={markings,
    mark=at position .65 with {\arrow[arrowstyle]{stealth}}}}]
\tikzstyle reverse directed=[postaction={decorate,decoration={markings,
    mark=at position .65 with {\arrowreversed[arrowstyle]{stealth};}}}]
\renewcommand{\thefootnote}{\fnsymbol{footnote}}
\renewcommand{\thanks}[1]{\footnote{#1}}
\newcommand{\starttext}{
\setcounter{footnote}{0}
\renewcommand{\thefootnote}{\arabic{footnote}}}
\numberwithin{equation}{section}
\newcommand{\bea}{\begin{eqnarray}}
\newcommand{\eea}{\end{eqnarray}}
\newcommand{\be}{\begin{eqnarray}}
\newcommand{\ee}{\end{eqnarray}}
\newcommand{\<}{\langle}
\renewcommand{\>}{\rangle}
\newcommand{\bma}{\begin{matrix}}
\newcommand{\ema}{\end{matrix}}
\def\cG{{\cal G}}
\def\cH{{\cal H}}
\def\cJ{{\cal J}}
\def\cK{{\cal K}}
\def\cL{{\cal L}}
\def\cM{{\cal M}}
\def\cO{{\cal O}}
\def\cP{{\cal P}}
\def\cQ{{\cal Q}}
\def\cR{{\cal R}}
\def\cS{{\cal S}}
\def\cT{{\cal T}}
\def\cU{{\cal U}}
\def\cV{{\cal V}}
\def\cW{{\cal W}}
\def\mA{\mathfrak{A}}
\def\mB{\mathfrak{B}}
\def\mC{\mathfrak{C}}
\def\mH{\mathfrak{H}}
\def\mJ{\mathfrak{J}}
\def\mg{\mathfrak{g}}
\def\muu{\mathfrak{u}}
\def\mw{\mathfrak{w}}
\def\CC{{\mathbb C}}
\def\ZZ{{\mathbb Z}}
\def\Re{{\rm Re \,}}
\def\Im{{\rm Im \,}}
\def\half{{1\over 2}}
\def\thalf{\tfrac{1}{2}}
\def\p{\partial}
\def\a{\alpha}
\def\ep{\varepsilon}
\def\om{\omega}
\def\ta{{\hat a}}
\def\tb{{\hat b}}
\def\ti{{t}}
\def\bom{{ \, \overline \omega}}
\def\Sp{{\text{Sp}}}
\def\GL{{\text{GL}}}
\def\bGam{{\boldsymbol{\Gamma}}}
\def\MM{\cM}
\def\LL{\cL}
\def\TT{\cT}
\def\xih{{\hat \xi}}
\def\etah{{\hat \eta}}
\def\PPer{{\cal Z}}
\def\notc{{\cal Z}}
\def\gDHS{\bGam_{\rm DHS}}
\def\gminus{\bGam_{-}}
\def\gE{\bGam_{\rm E}}
\def\ach{\check{a}}
\def\bch{\check{b}}
\def\xich{\check{\xi}}
\def\etach{\check{\eta}}
\newcommand{\tGamma}[4]{\tilde \Gamma\Big(   \smallmatrix#1 \\  #2 \\ #3
 \endsmallmatrix ;#4\Big)}
\def\no{\nonumber}
\def\sm{\smallskip}
\definecolor{Cyan}{cmyk}{1.,0,0,0}
\definecolor{Magenta}{cmyk}{0,1.,0,0}
\definecolor{Yellow}{cmyk}{0,0,1.,0}
\definecolor{White}{cmyk}{0,0,0,0}
\definecolor{Orange}{cmyk}{0,0.61,0.87,0}
\definecolor{RedOrange}{cmyk}{0,0.77,0.87,0}
\definecolor{Red}{cmyk}{0,1.,1.,0}
\definecolor{Purple}{cmyk}{0.45,0.86,0,0}
\definecolor{Violet}{cmyk}{0.79,0.88,0,0}
\definecolor{Blue}{cmyk}{1,0.5,0,0}
\definecolor{ProcessBlue}{cmyk}{0.96,0,0,0}
\definecolor{GreenYellow}{cmyk}{0.6,0,1.,0}
\definecolor{Black}{cmyk}{0,0,0,1}
\definecolor{dgreen}{rgb}{0,0.70,0.30}
\newtheorem{thm}{Theorem}[section]
\newtheorem{lem}[thm]{Lemma}
\newtheorem{cor}[thm]{Corollary}
\newtheorem{deff}[thm]{Definition}
\newtheorem{rmk}[thm]{Remark}
\def\qq{y}
\def\XX{{\cal X}}
\def\YY{{\cal Y}}
\def\tp{t'}
\newif\ifnote 
\begin{document}
\starttext
\setcounter{footnote}{0}

\begin{flushright}
2025 December 10
 \\
 UUITP-01/25
\end{flushright}

\vskip 0.3in

\begin{center}

{\Large \bf Relating flat connections and polylogarithms}

\vskip 0.1in

{\Large \bf  on higher genus Riemann surfaces}

\vskip 0.2in

{\large Eric D'Hoker${}^{a}$, Benjamin Enriquez${}^{b}$, Oliver Schlotterer${}^{c}$, Federico Zerbini${}^{d}$} 

\vskip 0.15in

{ \sl ${}^{a}$Mani L. Bhaumik Institute for Theoretical Physics}\\
{\sl  Department of Physics and Astronomy}\\
{\sl University of California, Los Angeles, CA 90095, USA}

\vskip 0.1in 

{\sl ${}^b$IRMA and Universit\'e de Strasbourg}\\
{\sl  7, Rue Descartes, 67084 Strasbourg, France}

\vskip 0.1in

{\sl ${}^c$Department of Physics and Astronomy,} \\
  { \sl Department of Mathematics,} \\
  { \sl Centre for Geometry and Physics,} \\ 
  {\sl Uppsala University, 75120 Uppsala, Sweden}
 
 \vskip 0.1in

{\sl ${}^{d}$Departamento de Matem\'aticas Fundamentales, UNED}\\
{\sl Calle de Juan del Rosal, 28040 Madrid, Spain}

\vskip 0.15in 

{\tt \small dhoker@physics.ucla.edu, enriquez@math.unistra.fr, oliver.schlotterer@physics.uu.se, f.zerbini@mat.uned.es}

\vskip 0.2in

\begin{abstract}
\vskip 0.1in
In this work, we relate two recent constructions that generalize classical (genus-zero) polylogarithms to higher-genus Riemann surfaces. A flat connection valued in a freely generated Lie algebra on a punctured Riemann surface of arbitrary genus produces an infinite family of homotopy-invariant iterated integrals associated to all possible words in the alphabet of the Lie algebra generators. Each iterated integral associated to a word is a higher-genus polylogarithm.  Different flat connections taking values in the same Lie algebra on a given Riemann surface may be related to one another by the composition of a gauge transformation and an automorphism of the Lie algebra, thus producing closely related families of polylogarithms. In this paper we provide two methods, which are inverses of one another,  to explicitly relate in this way the meromorphic multiple-valued connection introduced by Enriquez in e-Print 1112.0864 and the non-meromorphic single-valued and modular-invariant connection introduced by D'Hoker, Hidding and Schlotterer,  in e-Print 2306.08644.

\end{abstract}
\end{center}

\newpage

\setcounter{tocdepth}{2} 
\tableofcontents

\baselineskip=15pt
\setcounter{equation}{0}
\setcounter{footnote}{0}

\newpage

\section{Introduction}
\setcounter{equation}{0}
\label{sec:Intro}

Perturbative computations in quantum field theory and string theory involve complicated multidimensional integrals. It has become increasingly clear over the past few decades that these integrals may profitably be organized in terms of \textit{polylogarithms}, and various generalizations thereof. Broadly defined, the term polylogarithm is being used here for a \emph{multiple-valued function} on a manifold~$M$ defined as a \emph{homotopy-invariant iterated integral} over an integration path $\gamma:[0,1]\to M$ starting at some fixed integration base-point $\gamma(0)=y$ and depending on the integration end-point $\gamma(1)=x$. By homotopy-invariant iterated integral we mean a suitable linear combination~\cite{Chen} of iterated integrals along the path~$\gamma$ over differential 1-forms $\phi_1,\cdots,\phi_k$,
\bea
\int_{0}^1\phi_1\big (\gamma(t_1) \big )\int_{0}^{t_1}\phi_2 \big ( \gamma(t_2) \big )\cdots \int_{0}^{t_{k-1}}\phi_k \big ( \gamma(t_k) \big ),
\eea
whose value depends on the base-point~$y$, on the end-point~$x$ and on the 
homotopy class~$[\gamma]$ of the integration path~$\gamma$, but not upon the specific 
path~$\gamma$ in a given homotopy class~$[\gamma]$.

\sm

Polylogarithms on the punctured sphere, also known as \emph{hyperlogarithms} \cite{Lappo:1953, Goncharov:1995, Brown:2009qja, Panzer:2015ida}, complete the space of rational functions to a space of multiple-valued functions that is closed under addition, multiplication, differentiation and the taking of primitives. Polylogarithms were generalized to the elliptic case \cite{Levin:1997, Levin:2007, BrownLevin, Broedel:2014vla}, where they complete the space of elliptic functions to a function space on the punctured torus with the same properties as its genus-zero analogue \cite{Broedel:2017kkb, Enriquez:2023}.
When evaluated at special points, genus-zero polylogarithms produce multiple zeta values while their genus-one counterparts produce elliptic multiple zeta values \cite{Enriquez:Emzv, Broedel:2015hia, Matthes:thesis, Zerbini:thesis} (see also section \ref{sec:further} of this introduction).  Recently, different further generalizations of polylogarithms to Riemann surfaces of arbitrary genus have been proposed  with the goal of providing spaces of functions that are closed under the taking of primitives \cite{Enriquez:2022, DHS:2023, Baune:2024}. There is good evidence that these higher genus polylogarithms may provide a useful organizational set-up for perturbative string theory calculations \cite{DHoker:2017pvk, DHoker:2018mys, DHoker:2020tcq, DHoker:2023khh} while also promising relevance to higher loop Feynman integrals in quantum field theory \cite{Huang:2013kh, Georgoudis:2015hca, Doran:2023yzu, Marzucca:2023gto, Jockers:2024tpc, Duhr:2024uid}.

\sm

A general and efficient geometrical construction of polylogarithms is by taking the path-ordered exponential solution of the differential equation induced by a flat connection valued in a (completed) free Lie algebra. Indeed, the coefficient of the path-ordered exponential corresponding to a given word in the Lie algebra generators is an iterated integral which is homotopy invariant, due to the flatness of the connection. 
At genus zero classical polylogarithms arise from the Knizhnik--Zamolodchikov connection, whose holonomy is closely related to the Drinfeld associator \cite{Drinf1, Drinf2};
 the significance of the latter to number theory has been discussed in \cite{Le:Murakami, Terasoma:Selberg, Enriquez:Emzv}, whereas its relevance to string amplitudes was proposed in \cite{Drummond:2013vz, Broedel:2013aza, Kaderli:2019dny, Baune:2024uwj}.

\sm

The approaches to generalize polylogarithms to arbitrary punctured Riemann surfaces of genus $h\geq 1$ in terms of flat connections can be divided into three different categories. Chronologically, the first is via a holomorphic multiple-valued\footnote{The expression ``multiple-valued connection'' (resp.\ ``single-valued connection'') is an abuse of terminology, and expresses here the fact that, if a connection $\nabla$ is written as $d-\mathcal J$, then $\mathcal J$ is a multiple-valued (resp.\ single-valued) differential 1-form. } connection with a regular singularity at the puncture: the connection $d-\cK_\text{E}$ introduced by Enriquez~\cite{Enriquez:2011}. The second is via a holomorphic single-valued connection with an irregular singularity at the puncture, which is the case of the families of connections introduced by Enriquez and Zerbini in \cite{Enriquez:2021, Enriquez:2022}. The third is via a non-holomorphic single-valued and modular-invariant connection with a regular singularity at the puncture: the connection $d-\cJ_\text{DHS}$  introduced by D'Hoker, Hidding, and Schlotterer (DHS) in \cite{DHS:2023}. All these connections take values in the (completed) freely generated Lie algebra $\mg$ on $2h$ generators and give rise to different, but closely related, families of polylogarithms. The relation between the first two approaches was partly discussed in \cite{Enriquez:2021} and will be the subject of the forthcoming article \cite{Enriquez:next}.

\sm

In the present paper, we shall relate the first and the third approaches by showing that $d-\cK_\text{E}$ can be obtained from $d-\cJ_\text{DHS}$ by combining a gauge transformation and an automorphism of the Lie algebra $\mg$. More precisely, using different methods, we will construct two such relations between these connections, and show that the two corresponding pairs of an automorphism and a gauge transformation are inverses of one another. Combining the two constructions we will obtain a relation between the function spaces generated by the two corresponding families of polylogarithms. We shall provide explicit formulas for the relations between connections and generating functions for polylogarithms, and evaluate the general formulas to low orders.

\subsection{Flat connections, iterated integrals, and polylogarithms}\label{sec:1.1}

An efficient geometric construction of polylogarithms starts from a flat connection $d_x-\cJ(x;c)$, with $\cJ(x;c)$ a  differential 1-form in $x$ on a Riemann surface~$\Sigma$, possibly multiple-valued, taking values in the completion\footnote{If one assigns degree~1 to each variable $c_i$, the Lie algebra freely generated by $c$ is a graded Lie algebra (of Lie polynomials); its completion with respect to this grading is the Lie algebra $\mg$ of Lie series.} ~$\mg$ of the Lie algebra that is freely generated by a set $c=\{c_1, \cdots, c_n\}$. Flatness of the connection, which is expressed in terms of the Maurer--Cartan equation
\bea
\label{1.MC}
d_x \cJ(x;c) - \cJ(x;c) \wedge \cJ(x;c) =0,
\eea
guarantees the integrability of  the differential equation
\bea
d_x \bGam(x,y;c) = \cJ(x;c)  \, \bGam (x,y;c),
\label{diffgamma}
\eea 
subject to the initial condition $\bGam(y,y;c)=1$, for a function $\bGam (x,y;c)$ that is a scalar in  $x$ and $y$ and which takes values in the Lie group\footnote{\label{label3}This is defined to be the group of group-like elements in the (Hopf) algebra $\mathbb C\langle\!\langle c\rangle\!\rangle$ of formal series in non-commutative generators $c_1, \cdots, c_n$. The latter is the free associative algebra generated by $c$ equipped with the Hopf algebra structure associated with word concatenation, and should be interpreted here as the degree completion of the universal enveloping algebra of the free Lie algebra generated by $c$.} $\exp(\mg)$ of $\mg$. The solution of (\ref{diffgamma}) can be written in terms of the path-ordered exponential of $\cJ$,\footnote{It will often be convenient to indicate the variable over which a given integration is being carried out, especially so when several integrations are involved; we shall reserve the letter $\ti$ for this purpose.}
\bea
\label{1.PE}
\bGam (x,y;c) = \text{P} \exp \int _y ^x \cJ(\ti;c).
\eea
Taylor expanding the path-ordered exponential in powers of $\cJ$ gives an explicit expression for  $\bGam(x,y;c)$ in terms of a 
series of iterated integrals, each of which takes values  in $\mathbb C\langle\!\langle c\rangle\!\rangle$ (see footnote \ref{label3}),
\bea
\label{1.exp}
\bGam (x,y;c) = 1 + \sum_{k=1}^\infty \int _y ^x \cJ(\ti_1;c) \int _y ^{\ti_1} \cJ(t_2;c) \cdots \int _y^{\ti_{k-1}} \cJ(\ti_k;c).
\eea
Flatness of $\cJ(x;c)$ further guarantees that $\bGam (x,y;c)$ is homotopy invariant, namely,  it depends\footnote{The dependence on the path is suppressed from the notation, because we will rather consider $x$ and $y$ as variables on the universal cover~$\tilde\Sigma$ of~$\Sigma$, which is equivalent to specifying the class of the path between points $x$ and $y$ on $\Sigma$.} only on the homotopy class of the integration path from $y$ to $x$ and is independent of the specific representative in the class. Also, $\bGam$ satisfies the path-concatenation formula
\bea
\label{1.comp}
\bGam (x,z;c) = \bGam (x,y;c)\, \bGam (y,z;c),
\eea
where the product on the right is taken in the group $\exp(\mg)$ and concatenates the words in $c$ from the two factors. This property can be used to deduce useful formulas for the monodromy of the solution, namely for the value of $\bGam (\gamma \cdot x,y;c)$ in terms of $\bGam (x,y;c)$, with $\gamma$ an element of the fundamental group\footnote{\label{ftbasept} Here and elsewhere we are denoting by $\qq$ both a chosen integration base-point in the universal cover~$\tilde\Sigma$ and its image in~$\Sigma$. The notation $\gamma \cdot \qq$, inspired by the induced isomorphism $\pi_1(\Sigma,\qq)\simeq \mathrm{Aut}(\tilde\Sigma/\Sigma)$, stands then for the endpoint of the (unique) lift to~$\tilde\Sigma$ of the path $\gamma$ which starts at~$\qq$. There is then a unique element $\gamma$ of $\mathrm{Aut}(\tilde\Sigma/\Sigma)$ which takes $\qq$ to $\gamma \cdot \qq$, and $\gamma \cdot x$ denotes the image of $x$ under this automorphism.} $\pi_1(\Sigma,\qq)$. Indeed, using \eqref{1.comp} to write $\bGam (\gamma \cdot x,y;c) = \bGam (\gamma \cdot x, \gamma \cdot y;c)\, \bGam (\gamma \cdot y,y;c)$, setting $\mu(\gamma,y;c)=\bGam( \gamma \cdot y , y;c)$, and assuming that $\cJ(x;c)$ is single-valued, we obtain the following formula 
\bea
\label{1.eqmon}
\bGam (\gamma \cdot x,y;c) =\bGam (x,y;c)\,\mu(\gamma,y;c).
\eea 
For fixed~$y$ and~$c$, and assuming that $\cJ(x;c)$ is single-valued, one verifies using \eqref{1.comp} that~$\mu$ is a homomorphism from $\pi_1(\Sigma,\qq)$ to the group $\exp(\mg)$, namely that\footnote{Here $\gamma_1  \!\star\! \gamma _2$ stands for the composition of two paths $\gamma _1, \gamma _2 \in  \pi_1 (\Sigma,\qq)$ with the convention that the composed path traverses first $\gamma_1$ followed by $\gamma_2$.
This implies that $(\gamma_1  \!\star\! \gamma _2) \! \cdot  \! y = \gamma_2 \! \cdot \! (\gamma_1\! \cdot \! y)$
and therefore $\mu (\gamma_1 \!\star\! \gamma _2,y;c)
= \bGam ( \gamma_2\!\cdot \! (\gamma_1 \!  \cdot  \! y),y;c)
= \bGam ( \gamma_2\!\cdot \! (\gamma_1 \!  \cdot  \! y), \gamma_2\!\cdot \! y;c) \bGam ( \gamma_2\!\cdot \!y,y;c)$ which
then leads to the concatenation order on the left side of (\ref{1.eqcomplawmu})
since $ \bGam ( \gamma_2\!\cdot \! (\gamma_1 \!  \cdot  \! y), \gamma_2\!\cdot \! y;c)
=  \bGam ( \gamma_1 \!  \cdot  \! y,  y;c)$
and $\bGam ( \gamma_j \!  \cdot  \! y,  y;c) =  \mu (\gamma _j,y;c)$.}
\bea\label{1.eqcomplawmu}
 \mu(\gamma_1,y;c) \mu (\gamma _2,y;c) = \mu (\gamma_1 \!\star\! \gamma _2,y;c),
 \eea
 thus yielding a monodromy representation.

\sm

While $\bGam$ is homotopy invariant, the contribution to the series in (\ref{1.exp}) from a single value of $k \geq 1$ is not homotopy invariant.  Thus, it is understood that the iterated integrals for all values of $k$ are taken along the same path in a given homotopy class. The expansion in powers of $\cJ$ may be expressed as a series over \textit{words} $\mw \in \mathcal W(c)$, 
\bea
\label{1.words}
\bGam (x,y;c) =  \sum_{\mw \in \mathcal W(c)} \mw \, \Gamma  (\mw; x,y),
\eea
where $\mathcal W(c)$ is the set of all words in the alphabet of letters $\{ c_1, \cdots, c_n\}$, which is the \textit{monoid} freely generated by $c$, where the sum in (\ref{1.words}) includes the empty word $\emptyset$ for which $\Gamma(\emptyset; x,y)=1$. For each non-empty word $\mw$, the coefficient $\Gamma (\mw;x,y)$ is a homotopy-invariant iterated integral, to which we shall generally refer as a \emph{polylogarithm associated with the connection~$\cJ$}. It follows from the fact that $\bGam (x,y;c)$ takes values in $\exp(\mg)$ that the product of polylogarithms for words $\mw_1$ and $\mw_2$ and identical endpoints $x,y$ may be expressed as a sum of polylogarithms associated with words $\mw$ belonging to the shuffle product\footnote{\label{ft:shuf}We recall that the shuffle product $\mw_1 \shuffle \, \mw_2$ is the subset of $\mathcal W(c)$ containing the words $\mw$ obtained from all possible ways of interlacing the letters of $\mw_1$ and $\mw_2$ such that the order of the letters in each word is preserved, see appendix \ref{sec:shprf}. The shuffle product extends to a commutative and associative operation which turns the $\mathbb C$-vector space of linear combination of words $\mw\in \mathcal W(c)$ into a ring, with neutral element given by the empty word $\emptyset$. A useful reference on free Lie algebras and the shuffle product is  \cite{Reutenauer}.} $\mw_1 \shuffle \, \mw_2$ (see \cite{Brown:UNP} and \cite{Reutenauer} for proofs and further discussions). 
\bea
\label{1.shuffle}
\Gamma(\mw_1;x,y) \Gamma(\mw_2;x,y) = \sum _{\mw \in \mw_1 \shuffle \, \mw_2} \Gamma (\mw; x,y).
\eea

\subsection{Relating flat connections and polylogarithms}\label{sec:1.2}

We shall from now on consider connections on a once-punctured surface $\Sigma _p = \Sigma \smallsetminus \{ p \}$, where~$\Sigma $ is a compact Riemann surface of arbitrary genus $h \geq 1$ (not necessarily hyperelliptic as those higher-genus surfaces encountered in the current particle-physics literature \cite{Huang:2013kh, Georgoudis:2015hca, Doran:2023yzu, Marzucca:2023gto, Jockers:2024tpc, Duhr:2024uid}). The fundamental group $\pi_1(\Sigma_p,\qq)$, with base-point $\qq \in \Sigma_p$, is freely generated by closed loops $\mA^I, \mB_I$ for $I=1,\cdots, h$ which can be chosen such that the intersection pairing $\mJ$ of their images (denoted with the same symbol) in the homology group $H_1(\Sigma_p,\mathbb Z)$ satisfies $\mJ(\mA^I, \mA^J)=\mJ(\mB_I, \mB_J)=0$ and $\mJ(\mA^I, \mB_J)=\delta ^I_J$. The completed free Lie algebra $\mg$ in which the connections take their values will be chosen to be generated by $2h$ independent elements. We shall denote a choice of such generators of $\mg$ by $a \cup b$, where $a = \{  a^1 , \cdots , a^h\}$ and $b=\{ b_1, \cdots, b_h \}$; these are elements of $\mg^h$ and can be interpreted as a basis of the dual of $H^1_{\mathrm{dR}}(\Sigma_p)$, corresponding to the cycles $\mA^1, \cdots, \mA^h, \mB_1 , \cdots , \mB_h$ via the isomorphism induced by the period pairing (see \cite{Enriquez:2022}). 

\sm

In section \ref{sec:new5} (see Theorem \ref{sec04}(d)), we will introduce the notion of flat connection over a principal $\exp(\mathfrak{g})$-bundle over $\Sigma_p$. We define the monodromy of such a flat connection and we show that any two such connections ${\cal J}$ and ${\cal K}$ can be related by the combination of a gauge transformation and an automorphism of $\mathfrak{g}$, provided that their monodromy representation $\mu_{\cal J},\mu_{\cal K}$ are such that the families $([\log (\mu_\bullet(\mA^1))]_1,\cdots ,[\log (\mu_\bullet(\mA^h))]_1, [\log (\mu_\bullet(\mB_1))]_1,$ $\cdots ,[\log (\mu_\bullet(\mB_h))]_1)$ (where $[\cdot]_1$ is the projection to the degree-one part of~$\mg$) for $\bullet = \cJ,\cK$ are bases of the vector space generated by $a\cup b$.

\sm

In the remainder of this paper we shall consider two specific connections on~$\Sigma_p$, both taking values in~$\mg$. The first is a holomorphic multiple-valued connection $d_x-\cK_\text{E}(x,p;a,b)$ given by specializing a more general construction from \cite{Enriquez:2011}, while the second is the non-holomorphic, single-valued and modular-invariant connection $d_x-\cJ_\text{DHS}(x,p;a,b)$ introduced  in \cite{DHS:2023}.  
Their definition and properties will be reviewed in section \ref{sec:2} below. More precisely, we will uniquely characterize the $\mg$-valued differentials $\cK_\text{E}$ and $\cJ_\text{DHS}$ through their functional properties in Theorems \ref{2.thm:1} and \ref{2.thm:2}, respectively. Generalizations of both connections to acquire simple poles in $x$ at an arbitrary number of punctures may be found in \cite{Enriquez:2011, DHS:2023}.

\sm

Since both connections $d_x-\cK_\text{E}(x,p;a,b)$ and  $d_x-\cJ_\text{DHS}(x,p;a,b)$ satisfy the 
assumptions of Theorem \ref{sec04}(d), it follows that~$d-\cJ_\text{DHS}$ and~$d-\cK_{\rm E}$ must be related by combining a gauge transformation with a Lie algebra automorphism of~$\mg$. The key result of this paper is to provide two different methods for the explicit construction of the gauge transformation and of the automorphism which relate the two connections. In both constructions, the gauge transformation will be a smooth multiple-valued function on~$\Sigma \times \Sigma$ which takes values in $\exp(\mg_b)$, where~$\mg_b$ is the Lie sub-algebra of $\mg$ given by the (completed) free Lie algebra generated by~$b$.
 
 \sm
 
In section \ref{sec:new5}, we will show that these combinations of a gauge transformation and a Lie algebra automorphism are part of a group acting on the set of flat connections. We then prove that said combinations are mutually inverse elements of this group action. By separately constructing both pairs of gauge transformation and Lie algebra automorphism, we obtain complementary information on the translation between the polylogarithm functions constructed from $d-\cJ_\text{DHS}$ and~$d-\cK_{\rm E}$: first, structural relations between the algebras of functions as detailed in section \ref{sec:5}, and second, convenient starting points for
practical calculations where each formulation of polylogarithms and their integration kernels is directly expressed in terms of ingredients of the other.

\subsubsection{First construction: $\cK_\text{E}$ from $ \cJ _\text{DHS}$}
\label{subsub:1.2.1}

More specifically, section \ref{sec:3} is devoted to the construction (see Theorem \ref{3.thm:1})  of a gauge transformation $ \cU_{\rm DHS}(x,p)$, which will be a smooth function in $x$ on the universal cover of $\Sigma_p$,  and an automorphism mapping $a \cup b$ to an alternative set of generators $ \ta \cup \tb$ of $\mg$ such that 
\bea
\label{1.KJ}
\cK_\text{E} (x,p; a,b) & = & \cU_{\rm DHS}(x,p)^{-1}  \cJ _\text{DHS} (x,p;\ta, \tb) \,\cU_{\rm DHS}(x,p)
\no \\ &&
-\cU_{\rm DHS}(x,p)^{-1} d_x \, \cU_{\rm DHS}(x,p),
\eea
which is equivalent to the following relation  between the two connections
\bea
\label{gaugetransf1}
d_x-\cK_\text{E} (x,p; a, b) = \cU_{\rm DHS}(x,p)^{-1}  \big ( d_x-\cJ _\text{DHS} (x,p;\ta, \tb) \big ) \, \cU_{\rm DHS}(x,p).
\eea
Here, the gauge transformation $\cU_{\rm DHS}(x,p)$ is obtained from the path-ordered exponential 
\bea
\gDHS (x,y,p;\xi,\eta) = \text{P} \exp \int _y ^x \cJ_{\rm DHS}(\ti,p;\xi,\eta)
\eea 
by specializing the free non-commutative variables $\xi=\{ \xi^1, \cdots , \xi^h\}$ and $\eta=\{\eta_1, \cdots, \eta_h\}$ to appropriate values $\hat\xi,\hat\eta\in \mg^h_b$ which satisfy\footnote{Throughout, unless otherwise indicated, we will follow the Einstein convention in which a pair of identical upper and lower indices are summed over $1,2,\cdots,h$, without writing the summation sign.} $[\hat \eta_I, \hat \xi^I]=0$.
This in turn implies that $\cJ_{\rm DHS}(t,p;\hat \xi, \hat \eta)$ is regular at $t=p$, and therefore that $\gDHS (x,y,p; \xih,\etah)$ is regular when $y$ approaches $p$; we are therefore allowed to set 
\bea
\cU_{\rm DHS}(x,p)=\gDHS (x,p,p;\hat\xi,\hat\eta).
\eea
Throughout we shall suppress the variables $\hat \xi$ and $\hat \eta$ in writing $\cU_\text{DHS}(x,p)$. The notation $\ta, \tb$ stands for an alternative set of generators $\ta = \{ \ta^1, \ldots, \ta^h \}$ and $\tb=\{ \tb_1, \ldots, \tb_h \}$ of~$\mg$, so that each hatted element is a Lie series in the original unhatted elements from the set $a \cup b$, and the map $a \cup b \to \ta \cup \tb$ from unhatted to hatted elements can be viewed as an automorphism of the Lie algebra~$\mg$.

\sm

We outline a procedure to determine the gauge transformation $\cU_{\rm DHS}(x,p)$ and the automorphism $a \cup b \to \ta \cup \tb$, constructed above, in a series expansions in powers of the generators $b$.
This procedure leads to explicit formulas relating the expansion coefficients $g^{I_1\cdots  I_r}{}_J(x,p)$ and $ f^{I_1 \cdots I_r}{}_J(x,p)$  of $\cK_\text{E} (x,p; a,b)$ and $\cJ _\text{DHS} (x,p;a, b) $, respectively, which furnish the integration kernels for the associated polylogarithms \cite{Enriquez:2011, DHS:2023}. For example, to low degree, the formulas of Proposition \ref{3.prop:5} imply the relations
\begin{align}
g^I{}_J(x,p) &= f^I{}_J(x,p)+ \TT^{I}(x,p) \omega_J(x) + \omega_K(x) \MM^{KI}{}_J(p),
\label{intro.41} \\
g^{I_1 I_2}{}_J(x,p) &= f^{I_1 I_2}{}_J(x,p)
+ \TT^{I_1}(x,p) f^{I_2}{}_J(x,p)
\no \\ &\quad
+f^{I_1}{}_K(x,p) \MM^{K I_2}{}_J(p)
- \MM^{I_1 I_2}{}_K(p) f^K{}_J(x,p) \notag \\
&\quad +\TT^{I_1 I_2}(x,p) \omega_J(x) + \TT^{I_1}(x,p) \omega_K(x) \MM^{K I_2}{}_J(p)\notag \\
&\quad -\MM^{I_1 I_2}{}_K(p) \TT^K(x,p) \omega_J(x) + \omega_K(x) \MM^{KI_1 I_2}{}_J(p).
\notag 
\end{align}
The smooth functions $\TT^{I}(x,p),\TT^{I_1 I_2}(x,p)$ and  $\MM^{KI}{}_J(p), \MM^{KI_1 I_2}{}_J(p)$ 
arise as expansion coefficients of the gauge transformation $\cU_{\rm DHS}(x,p)$ and the automorphism 
$a \cup \, b \to \ta \cup \, \tb$ in powers of $b$, respectively, and may be algorithmically computed to arbitrary rank.

\subsubsection{Second construction: $ \cJ _\text{DHS}$ from $\cK_\text{E}$}

Section \ref{sec:4} is devoted to the construction (see Theorem \ref{4.thm:1}) of another gauge transformation $ \cU_{\rm E}(x,p)$ and automorphism mapping $a \cup b$ to alternative generators $\check{a} \cup \check{b}$ of $\mg$ such that \begin{equation}
\label{1.JK}
\cJ_\text{DHS} (x,p;a,b) = \cU_{\rm E}(x,p)^{-1}  \cK _\text{E} (x,p;\check{a}, \check{b}) \,\cU_{\rm E}(x,p) - \cU_{\rm E}(x,p)^{-1} d_x \, \cU_{\rm E}(x,p),
\end{equation}
which is equivalent to the following relation between the two connections
\bea
\label{gaugetransf2}
d_x-\cJ_\text{DHS} (x,p;a,b) =  \cU_{\rm E}(x,p)^{-1}   \big(d_x-\cK _\text{E} (x,p;\check{a}, \check{b}) \big) \, \cU_{\rm E}(x,p).
\eea 
Here, the full gauge transformation 
\bea
\cU_{\rm E}(x,p)=\gE(x,p,p;\check{\xi},\check{\eta})\,\gminus(x,p;b)^{-1} 
\label{Uefactors}
\eea
 takes the form of a product. The second factor is the inverse of the (anti-holomorphic) path-ordered exponential
\bea
\label{1.Gamb}
\gminus(x,p;b)=\text{P} \exp\int_p^x\big({-}\pi\bar\omega^I(t)b_I \big),
\eea 
with suitably normalized anti-holomorphic Abelian differentials $\bar\omega^I$ defined in (\ref{permat}) and \eqref{2.eqdefombar} below (also, see footnote \ref{footnote:Y} for the raising of indices of $\bar\omega^I$ in (\ref{1.Gamb}) through the inverse of the imaginary part of the period matrix).  The first factor in (\ref{Uefactors}) is obtained by specializing the arguments $\xi,\eta$ of the path-ordered exponential 
\bea
\gE(x,y,p;\xi,\eta) = \text{P} \exp \int _y ^x \cK _\text{E}(\ti,p;\xi,\eta)
\label{keisreg}
\eea
to  appropriate values $\check{\xi},\check{\eta}\in \mg^h_b$ which satisfy $[ \check{\eta}_I , \check{\xi}^I ] = 0$. Since the residue of the pole of $\cK _\text{E}(\ti,p;\xich,\etach)$ in $t$ at $p$ is proportional to $[ \check{\eta}_I , \check{\xi}^I ]$, this in turn implies that $\cK_{\rm E}$ in the integrand of (\ref{keisreg}) is regular at $t=p$, and therefore that we are allowed to take $y=p$ as integration base-point. As in the first construction from (\ref{1.KJ}), the dependence of $\cU_{\rm E}(x,p)$ on the Lie algebra elements is omitted, and the notation $\check{a}, \check{b}$ stands for an alternative set of generators $\check{a} = \{ \check{a}^1, \ldots, \check{a}^h \}$ and $\check b=\{ \check b_1, \ldots, \check b_h \}$ of~$\mg$ which are Lie series in the original generators $a,b$, so that the map $a \, \cup \, b \to \check a \, \cup \, \check b$ induces an automorphism of the Lie algebra~$\mg$.

\sm

The expansion of this second construction in the generators $b$ leads to an
equivalent set of explicit formulae between the integration kernels $g^{I_1\cdots  I_r}{}_J(x,p)$ and $ f^{I_1 \cdots I_r}{}_J(x,p)$. The resulting analogues of the low-order relations (\ref{intro.41}) feature an
alternative formulation of its coefficients $\TT$ and $\MM$ which can also be algorithmically computed
to any desired order.

\subsubsection{Relating the polylogarithms obtained from $ \cJ _\text{DHS}$ and $\cK_\text{E}$}

The path-ordered exponentials $\bGam _\text{DHS}$ and $\bGam_\text{E}$ exploited in the construction of the gauge transformations are the generating series of polylogarithms associated to the connections $d- \cJ _\text{DHS}$ and $d-\cK_\text{E}$, respectively. These two series are related by 
\bea
\bGam_\text{E} (x,y,p;a,b) = \cU_{\rm DHS}(x,p)^{-1} \, \bGam _\text{DHS}(x,y,p;\ta, \tb) \, \cU_{\rm DHS}(y,p)
\label{gaevsgadhs}
\eea
as well as by
\bea
\bGam_\text{DHS} (x,y,p;a,b) = \cU_{\rm E}(x,p)^{-1} \, \bGam _\text{E}(x,y,p;\check{a}, \check{b}) \, \cU_{\rm E}(y,p),
\label{gaevsgadhs2}
\eea
because it follows from \eqref{1.KJ} (resp.\ \eqref{1.JK}) that both sides of \eqref{gaevsgadhs} (resp.\ \eqref{gaevsgadhs2}) satisfy the same differential equation with the same initial condition.
The expansion of $\cU_{\rm DHS}$ and $ \cU_{\rm E}$ in words, analogous to the expansion  of~$\bGam$ in  (\ref{1.words}), together with the expansion of the Lie series $\hat\xi,\hat\eta,\hat a, \hat b$ and $\check{\xi},\check{\eta},\check{a},\check{b}$, leads to two different families of explicit formulas which can be used to relate the polylogarithms for the two connections. By comparing these formulas one can obtain non-trivial identities among functions of $y$, $p$ and of the moduli of the surface.

\sm

Moreover, one can use \eqref{gaevsgadhs} and \eqref{gaevsgadhs2} to deduce information about the spaces of functions generated by the respective polylogarithms. Suppose we denote by $\mathcal H(\mathcal J)$ the algebra of polylogarithms associated with a flat connection $d-\mathcal J$, namely the subalgebra of the algebra of smooth functions on the universal cover of $\Sigma_p$ which is generated by the coefficients $\Gamma(\mw;x,y)$ in (\ref{1.words}) (for a fixed punctured Riemann surface~$\Sigma_p$ and any\footnote{Notice that, by the path-concatenation formula (\ref{1.comp}), changing the base point $y$ does not change the space of polylogarithms $\mathcal H(\mathcal J)$.} fixed $y\in\tilde\Sigma_p$) of the path-ordered exponential $\bGam(x,y;c)=\text{P} \exp \int _y ^x \cJ(\ti;c)$.  Then, combining the two gauge transformations\footnote{The transformation \eqref{gaevsgadhs2} is used to prove the inclusion $\subseteq$ in \eqref{hspaces}, whereas the transformation \eqref{gaevsgadhs}  is used to prove the opposite inclusion $\supseteq$.}, we will prove (see Theorem \ref{5.thmpolspaces} below) the relation
\bea
\mathcal H(\cJ_{\rm DHS})\,=\,\mathcal H(\cK_{\rm E})\cdot \mathcal H(\cJ^{(0,1)}_{\rm DHS}),
\label{hspaces}
\eea
where $\cJ^{(0,1)}_{\rm DHS}(x;b)=-\pi\bar\omega^I(x)b_I$ is the (purely anti-holomorphic) $(0,1)$-part of the differential form $\cJ_{\rm DHS}$. In other words, the polylogarithms generated from the single-valued but non-meromorphic connection $d-\cJ_{\rm DHS}$ in (\ref{dhspoly}) are polynomials in the polylogarithms constructed from the meromorphic connection $d-\cK_{\rm E}$ in (\ref{epoly}) and in the iterated integrals of $\bar \omega^I$, whose coefficients will in general depend on~$y$ and on the moduli of $\Sigma_p$.  Moreover, we deduce from (\ref{hspaces}) that $\mathcal H(\cK_{\rm E})$ is given by the intersection of $\mathcal H(\cJ_{\rm DHS})$ with the algebra of holomorphic multiple-valued functions (see Corollary \ref{5.corol}).


\subsection{Further directions}
\label{sec:further}

The results of this work suggest a variety of follow-up questions and future lines of investigation that should be  relevant to both mathematicians and physicists.

\sm

First, various iterated integrals over $\mA^I$ and $\mB_I$ cycles that arise as expansion coefficients of the gauge transformations and automorphisms constructed in this work may be viewed as higher-genus analogues of elliptic multiple zeta values \cite{Enriquez:Emzv, Broedel:2015hia, Matthes:thesis, Zerbini:thesis}. Their detailed structure and  interrelations remain to be explored and their evaluation remains to be further simplified beyond the genus-one case.  More generally, an improved understanding of the relations among higher-genus multiple zeta values may shed light on the generalization of Tsunogai's derivation algebra \cite{Tsunogai} 
 and analogues beyond genus one of the elliptic associator \cite{CEE, Enriquez:ass, Hain:KZB} (see e.g.\ \cite{Gonzalez:2020} for associators at higher genus). 

\sm

Second, our investigations into the monodromies of higher-genus polylogarithms have implications for the  construction of their single-valued counterparts, which is relegated to  future work. At genus one already, special values of single-valued analogues of elliptic polylogarithms, known as modular graph functions \cite{DHoker:2015gmr, DHoker:2015wxz} and modular graph forms \cite{DHoker:2016mwo}, are playing a key role in organizing the low energy expansion of string theory (recent overviews may be found in \cite{Berkovits:2022ivl, DHoker:2022dxx}; see also \cite{DHoker:2019blr, Gerken:2020xte}). Number theoretic properties of modular graph forms were further studied in \cite{Zerbini:2015rss, Brown:2017qwo, Brown:2017mgf, DHoker:2019xef, Zagier:2019eus}. At higher genus, single-valued versions of the polylogarithms encountered in this work generalize the modular  tensors introduced in \cite{vdG3, Kawazumi, Kawazumi:lecture,DHoker:2020uid} and the higher genus modular graph forms introduced in \cite{DHoker:2017pvk, DHoker:2018mys} to depend on various marked points on the surface. They provide powerful tools for string-amplitude computations, offer a novel perspective on the Fay identities for products of Szeg\"o kernels \cite{DHoker:2023khh,DHoker:2024ozn,Baune:2024ber} and provide promising new angles on the theory of single-valued periods \cite{Brown:2013gia, Brown:2018omk}.


\subsection*{Organization}

The remainder of this paper is organized as follows. Section \ref{sec:2}  is dedicated to a review of flat connections on Riemann surfaces of arbitrary genus. We discuss the meromorphic multiple-valued connection $d-\cK_\text{E}$  introduced in \cite{Enriquez:2011}, and the non-meromorphic but single-valued and modular-invariant connection $d-\cJ_\text{DHS}$ of \cite{DHS:2023}, and their respective restriction to genus-one surfaces. In section \ref{sec:3} and section \ref{sec:4} we present the two announced constructions of a relation between  $d-\cK_\text{E}$ and $d-\cJ_\text{DHS}$ given by the composition of a gauge transformation and an automorphism of the Lie algebra $\mg$. The two pairs of a gauge transformation and an automorphism will be shown to be inverses of one another in section \ref{sec:new5}. In section \ref{sec:5} we relate the two associated spaces of polylogarithms. For some of the results in the main text, the proofs are relegated to appendices  \ref{sec:C} and \ref{sec:cor}. Explicit expressions for the relation between the connections are worked out to low orders in appendix \ref{sec:A}, and the structure of the general  construction is provided in appendix \ref{sec:B}. 


\subsection*{Acknowledgments}

We are grateful to Martijn Hidding for helpful discussions in early stages of this project
and collaboration on related topics. The research of ED is supported in part by NSF grant PHY-22-09700. The research of BE has been partially funded by ANR grant ``Project HighAGT ANR20-CE40-0016''. The research of OS is supported by the European Research Council under ERC-STG-804286 UNISCAMP as well as the strength area ``Universe and mathematical physics'' which is funded by the Faculty of Science and Technology at Uppsala University. This research was supported in part by grant NSF PHY-2309135 to the Kavli Institute for Theoretical Physics (KITP), and ED and OS thank the organizers of the program ``What is string theory'' for creating a stimulating atmosphere in which part of this work was done. OS and FZ are grateful to the Hausdorff Research Institute for Mathematics in Bonn 
 for their hospitality and the organizers and participants of the Follow-Up Workshop ``Periods in Physics, Number Theory and Algebraic Geometry'' for valuable discussions. 
 Moreover, OS and FZ cordially thank the
Galileo Galilei Institute (GGI) for Theoretical Physics in Florence for the hospitality and the INFN for partial support during the program ``Resurgence and Modularity in QFT and String Theory''. OS is grateful to the Simons foundation for financial support during the GGI programme. Finally, the research of OS and FZ was supported by the Munich Institute for Astro-, Particle and BioPhysics (MIAPbP) which is funded by the Deutsche Forschungsgemeinschaft (DFG, German Research Foundation) under Germany's Excellence Strategy -- EXC-2094 -- 390783311. The research of FZ was partly supported by the Spanish Ministry of Science, Innovation and Universities under the 2023 grant ``Proyecto de generación de conocimiento'' PID2023-152822NB-I00, as well as by the Royal Society, under the grant URF\textbackslash R1\textbackslash 201473.

The authors declare that the data supporting the findings of this study are available within the paper and its bibliography.

\newpage

\section{Review of the flat connections}
\setcounter{equation}{0}
\label{sec:2}

Let $\Sigma$ be a compact Riemann surface of genus $h$. We denote by $\tilde \Sigma$ the universal (simply-connected) cover of $\Sigma$, and the associated canonical projection by $\pi : \tilde \Sigma \to \Sigma$. 
For any $p\in\Sigma$, let us consider also the once-punctured surface $\Sigma _p = \Sigma \smallsetminus \{ p \}$. Its fundamental group $\pi_1(\Sigma_p,\qq)$ of $\Sigma _p$ is freely generated by $\mA^I, \mB_I$, $I=1,\cdots, h$, which are $2h$ closed loops in $\Sigma$ based at $\qq$ that do not contain the point~$p$. We assume that, when viewed as generators of $\pi_1(\Sigma,\qq)$, they satisfy the relation 
\bea
\label{2.ABrel}
\prod_{I=1}^h \, \mA^I \star \mB_I \star (\mA^I)^{-1} \!\star (\mB_I)^{-1}=1.
\eea
Let $\tilde\Sigma_p$ denote the universal cover of $\Sigma_p$. Then, a choice of preferred pre-image of~$\qq\in \Sigma_p$ in $\tilde\Sigma_p$ induces a canonical identification of the fundamental group $\pi_1(\Sigma_p,\qq)$ with the automorphism group $\mathrm{Aut}(\tilde\Sigma_p /\Sigma_p)$. This canonical identification sets the action of  an element $\gamma \in \pi_1(\Sigma_p,\qq)$ on $\tilde\Sigma_p$ (see footnote~\ref{ftbasept}).
The image of $x\in\tilde\Sigma_p$ under $\gamma$ will be denoted  by $\gamma \cdot x$.  The preferred pre-image of $\qq$ in $\tilde\Sigma_p$, as well as its image in $\tilde\Sigma\smallsetminus \pi^{-1}(p)$ via the the natural map $\tilde\Sigma_p\to\tilde\Sigma\smallsetminus \pi^{-1}(p)$, will also be denoted by $\qq$, and is part of the topological setup of our construction. Part of this setting is illustrated in figure \ref{fig:1} for a surface of genus two.

\begin{figure}[htb]
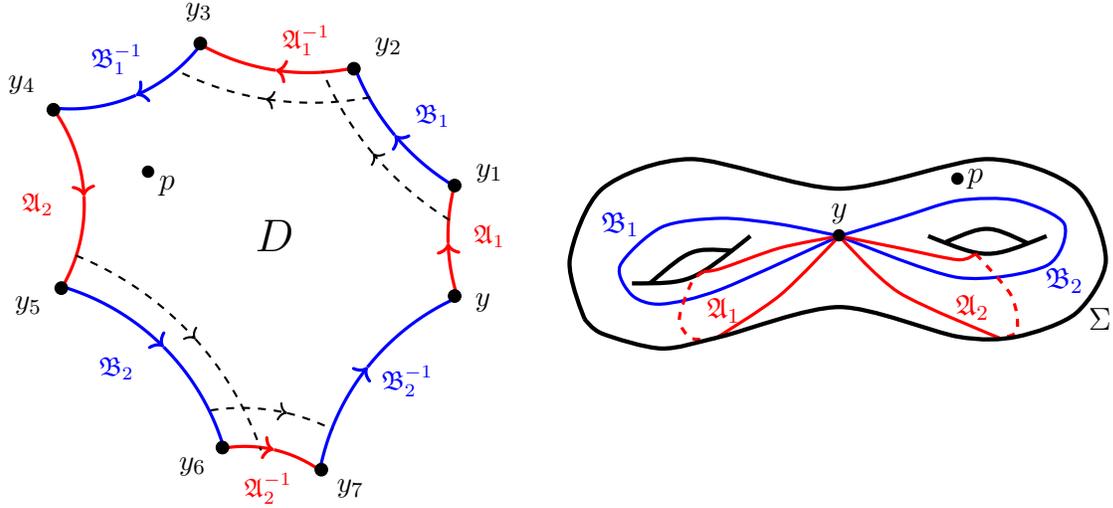

\begin{center}
\tikzpicture[scale=1.05]
\scope[xshift=0cm,yshift=0cm, scale=4]
\draw  [very  thick, color=red, <-, domain=182:198] plot ({cosh(0.6)*cos(0)+sinh(0.6)*cos(\x)}, {cosh(0.6)*sin(0)+sinh(0.6)*sin(\x)});
\draw  [very  thick, color=red, domain=166:182] plot ({cosh(0.6)*cos(0)+sinh(0.6)*cos(\x)}, {cosh(0.6)*sin(0)+sinh(0.6)*sin(\x)});
\draw  [very  thick, color=red, <-, domain=261.5:282] plot ({cosh(0.65)*cos(85)+sinh(0.65)*cos(\x)}, {cosh(0.65)*sin(85)+sinh(0.65)*sin(\x)});
\draw  [very  thick, color=red,  domain=241:261.5] plot ({cosh(0.65)*cos(85)+sinh(0.65)*cos(\x)}, {cosh(0.65)*sin(85)+sinh(0.65)*sin(\x)});
\draw  [very  thick, color=red, <-, domain=363:396] plot ({cosh(0.5)*cos(175)+sinh(0.5)*cos(\x)}, {cosh(0.5)*sin(175)+sinh(0.5)*sin(\x)});
\draw  [very  thick, color=red,  domain=330:363] plot ({cosh(0.5)*cos(175)+sinh(0.5)*cos(\x)}, {cosh(0.5)*sin(175)+sinh(0.5)*sin(\x)});
\draw  [very  thick, color=red, <-, domain=77:100] plot ({cosh(0.4)*cos(265)+sinh(0.4)*cos(\x)}, {cosh(0.4)*sin(265)+sinh(0.4)*sin(\x)});
\draw  [very  thick, color=red,  domain=54:77] plot ({cosh(0.4)*cos(265)+sinh(0.4)*cos(\x)}, {cosh(0.4)*sin(265)+sinh(0.4)*sin(\x)});
\draw  [very  thick, color=blue, <-, domain=220:239] plot ({cosh(0.7)*cos(40)+sinh(0.7)*cos(\x)}, {cosh(0.7)*sin(40)+sinh(0.7)*sin(\x)});
\draw  [very  thick, color=blue, domain=201:220] plot ({cosh(0.7)*cos(40)+sinh(0.7)*cos(\x)}, {cosh(0.7)*sin(40)+sinh(0.7)*sin(\x)});
\draw  [very  thick, color=blue, <-, domain=293.5:323] plot ({cosh(0.5)*cos(125)+sinh(0.5)*cos(\x)}, {cosh(0.5)*sin(125)+sinh(0.5)*sin(\x)});
\draw  [very  thick, color=blue, domain=264:293.5] plot ({cosh(0.5)*cos(125)+sinh(0.5)*cos(\x)}, {cosh(0.5)*sin(125)+sinh(0.5)*sin(\x)});
\draw  [very  thick, color=blue, <-, domain=45:73] plot ({cosh(0.7)*cos(225)+sinh(0.7)*cos(\x)}, {cosh(0.7)*sin(225)+sinh(0.7)*sin(\x)});
\draw  [very  thick, color=blue, domain=17:45] plot ({cosh(0.7)*cos(225)+sinh(0.7)*cos(\x)}, {cosh(0.7)*sin(225)+sinh(0.7)*sin(\x)});
\draw  [very  thick, color=blue, <-, domain=141.5:169] plot ({cosh(0.7)*cos(315)+sinh(0.7)*cos(\x)}, {cosh(0.7)*sin(315)+sinh(0.7)*sin(\x)});
\draw  [very  thick, color=blue, domain=114:141.5] plot ({cosh(0.7)*cos(315)+sinh(0.7)*cos(\x)}, {cosh(0.7)*sin(315)+sinh(0.7)*sin(\x)});
\draw  [thick, color=black, dashed, ->, domain=238:220] plot ({cosh(0.9)*cos(40)+sinh(0.9)*cos(\x)}, {cosh(0.9)*sin(40)+sinh(0.9)*sin(\x)});
\draw  [thick, color=black, dashed,  domain=220:203] plot ({cosh(0.9)*cos(40)+sinh(0.9)*cos(\x)}, {cosh(0.9)*sin(40)+sinh(0.9)*sin(\x)});
\draw (0.148,-0.74) [fill=black] circle(0.02cm) ;
\draw  [thick, color=black, dashed, ->, domain=68:42] plot ({cosh(0.9)*cos(225)+sinh(0.9)*cos(\x)}, {cosh(0.9)*sin(225)+sinh(0.9)*sin(\x)});
\draw  [thick, color=black, dashed,  domain=42:19] plot ({cosh(0.9)*cos(225)+sinh(0.9)*cos(\x)}, {cosh(0.9)*sin(225)+sinh(0.9)*sin(\x)});
\draw  [thick, color=black, dashed, ->, domain=281:261] plot ({cosh(0.85)*cos(85)+sinh(0.85)*cos(\x)}, {cosh(0.85)*sin(85)+sinh(0.85)*sin(\x)});
\draw  [thick, color=black, dashed, domain=261:243] plot ({cosh(0.85)*cos(85)+sinh(0.85)*cos(\x)}, {cosh(0.85)*sin(85)+sinh(0.85)*sin(\x)});
\draw  [thick, color=black, dashed, ->, domain=99:77] plot ({cosh(0.6)*cos(265)+sinh(0.6)*cos(\x)}, {cosh(0.6)*sin(265)+sinh(0.6)*sin(\x)});
\draw  [thick, color=black, dashed, domain=77:63] plot ({cosh(0.6)*cos(265)+sinh(0.6)*cos(\x)}, {cosh(0.6)*sin(265)+sinh(0.6)*sin(\x)});
\draw (0.57,0.16) [fill=black] circle(0.02cm) ;
\draw (0.251,0.53) [fill=black] circle(0.02cm) ;
\draw (-0.7,0.4) [fill=black] circle(0.02cm) ;
\draw (-0.235,0.61) [fill=black] circle(0.02cm) ;
\draw (-0.675,-0.165) [fill=black] circle(0.02cm) ;
\draw (-0.165,-0.67) [fill=black] circle(0.02cm) ;
\draw (0.57,-0.19) [fill=black] circle(0.02cm) ;

\draw (0.66, -0.21) node{\small $\qq$};
\draw (0.68, 0.2) node{\small $\qq_1$};
\draw (0.36, 0.6) node{\small $\qq_2$};
\draw (-0.24, 0.71) node{\small $\qq_3$};
\draw (-0.80, 0.48) node{\small $\qq_4$};
\draw (-0.78, -0.22) node{\small $\qq_5$};
\draw (-0.26, -0.73) node{\small $\qq_6$};
\draw (0.24, -0.8) node{\small $\qq_7$};

\draw [color=red] (0.68,0) node{\footnotesize $\mA_1$};
\draw [color=blue] (0.5,0.38) node{\footnotesize $\mB_1$};
\draw [color=red] (0.1,0.62) node{\footnotesize $\mA_1^{-1}$};
\draw [color=blue] (-0.5,0.56) node{\footnotesize $\mB_1^{-1}$};
\draw [color=red] (-0.75,0.1) node{\footnotesize $\mA_2$};
\draw [color=blue]  (-0.5,-0.42) node{\footnotesize $\mB_2$};
\draw [color=red] (-0.02,-0.8) node{\footnotesize $\mA_2^{-1}$};
\draw [color=blue]  (0.42,-0.46) node{\footnotesize $\mB_2^{-1}$};
\draw (0,0) node{{\Large $D$}};
\draw (-0.4,0.2) node{{$\bullet$}};
\draw (-0.34,0.16) node{{$p$}};
\endscope
\scope[xshift=3.4cm,yshift=-1.5cm, scale=0.75]
\draw[ultra thick] (1.5,1.2) .. controls (2.5, 1.2) .. (3.5,2);
\draw[ultra thick] (1.8,1.2) .. controls (2.5, 1.8) .. (3.2,1.75);
\draw[ultra thick] (6.5,2) .. controls (7.5, 1.6) .. (8.5,2);
\draw[ultra thick] (6.8,1.9) .. controls (7.5, 2.2) .. (8.2,1.87);
\draw[very thick, color=blue] plot [smooth] coordinates {(5,2) (4,2.2) (3, 2.3) (2, 2.15) (1.3,1.5) (1.5,1) ( 2, 0.85) (3,1.1) (5,2) };
\draw[very thick, color=blue] plot [smooth] coordinates {(5,2) (6,2.3) (7, 2.6) (8, 2.6) (8.8,2.2) (8.5,1.5)  (7,1.3) (5,2) };
\draw[very thick, color=red] plot [smooth] coordinates { (5,2) (4,1.8) (3,1.45) (2.7,1.4)};
\draw[very thick, color=red, dashed] plot [smooth] coordinates { (2.7,1.4) (2.5,1.2) (2.3,0.6) (2.5, 0.275) (2.9, 0.265)};
\draw[very thick, color=red] plot [smooth] coordinates { (2.9,0.265) (4,1) (5,2)};
\draw[very thick, color=red] plot [smooth] coordinates { (5,2) (6,1.8) (7,1.6) (7.3,1.7)};
\draw[very thick, color=red, dashed] plot [smooth] coordinates { (7.3,1.7) (7.9,1) (8,0.4) (7.7, 0.28)};
\draw[very thick, color=red] plot [smooth] coordinates { (7.7, 0.28) (6,1.05) (5,2)};
\draw[ultra thick] plot [smooth] coordinates {(0.7,0.6) (0.45,1.6) (1,2.8) (2.5, 3.3)  (5,2.8) (7.5, 3.3) (9,2.8) (9.5,1.6) (9,0.7) (8,0.3) (7, 0.3) (5,0.8) (3, 0.3) (2,0.1) (1,0.35) (0.7,0.6) };
\draw [color=red] (3.05,0.75) node{\small $\mA_1$};
\draw [color=red] (7.25,0.8) node{\small $\mA_2$};
\draw [color=blue] (1.3,2.2) node{\small $\mB_1$};
\draw [color=blue]  (8.8,1.25) node{\small $\mB_2$};
\draw (9.4,0.6) node{{$\Sigma$}};
\draw (5,2) node{$\bullet$};
\draw (5,2.38) node{ $\qq$};
\draw (7,2.95) node{ $\bullet$};
\draw (7.3,2.97) node{ $p$};
\endscope
\endtikzpicture
\caption{The left panel represents  a genus-two Riemann surface $\Sigma$ in terms of a fundamental domain $D \subset \tilde \Sigma$ for the action of $\mathrm{Aut}(\tilde\Sigma /\Sigma)\simeq \pi_1(\Sigma, \qq)$, which can be obtained by cutting $\Sigma$ along the cycles in the right panel. The surface $\Sigma$ may be reconstructed from $D$ by pairwise identifying inverse boundary components with one another under the dashed arrows; the projection $\pi: \tilde \Sigma \to \Sigma$ maps all the vertices $\qq$ and $\qq_i$ for $i=1,\cdots , 7$ of $D$ to the same point~$\qq$ in $\Sigma$.  The points $\qq_i \in \Tilde \Sigma$ are related to $\qq \in \tilde \Sigma$ by $\qq_1 =\mA_1 \cdot \qq$, $\qq_2=\mB_1 \cdot \qq_1$,    $\qq_3 = \mA_1^{-1} \cdot \qq_2$, $\qq_4=   \mB_1^{-1} \cdot \qq_3$, $\qq_5=\mA_2 \cdot \qq_4$, $\qq_6 = \mB_2 \cdot \qq_5$ and $\qq_7 =  \mA_2^{-1} \cdot \qq_6$,  the product of loops being understood here as a composition of the corresponding elements in $\mathrm{Aut}(\tilde\Sigma /\Sigma)$ with $\mB_2^{-1} \cdot \qq_7=\qq$ in view of (\ref{2.ABrel}). 
 \label{fig:1}}
\end{center}
\end{figure}

\sm

The homology groups $H_1(\Sigma, \ZZ)$ and $H_1(\Sigma_p, \ZZ)$ are both isomorphic to~$\mathbb Z^{2h}$, and one can choose the loops $\mA^I, \mB_I$ in such a way that their image in $H_1(\Sigma, \ZZ)=\pi_1^{\mathrm{ab}}(\Sigma,\qq)$ (resp.\ $H_1(\Sigma_p, \ZZ)=\pi_1^{\mathrm{ab}}(\Sigma_p,\qq)$) is a symplectic basis with respect to the canonical intersection pairing $\mJ$, i.e.\ $\mJ(\mA^I, \mA^J)=\mJ(\mB_I, \mB_J)=0$ and $\mJ(\mA^I, \mB_J)=\delta ^I_J$ for $I,J =1, \cdots, h$. The symplectic group $\Sp(2h,\ZZ)$ takes symplectic bases to symplectic bases as follows,\footnote{We reiterate that, unless otherwise indicated, we will follow the Einstein convention in which a pair of identical upper and lower indices are summed over, without writing the summation sign. Indices may be lowered or raised with the help of the \textit{metric} $Y = \Im (\Omega)$, whose  components $Y_{IJ}$ are defined in (\ref{permat}),  or its  inverse $Y^{-1}$, whose components are denoted by $Y^{IJ}$. Following these conventions we have, for example, $\om_I = Y_{IJ} \, \om^J$,  $\om^I= Y^{IJ} \om_J$,  $\bom^I = Y^{IJ} \bom_J$ and $Y_{IJ} Y^{JK} = \delta ^K_I$. \label{footnote:Y}}
\bea
\label{2.mod}
M: \begin{cases}
\mB_I \to   A_I{}^J \, \mB_J + B_{IJ} \, \mA^J
\cr
\mA^I \, \to  C^{IJ} \,  \mB_J + D^I{}_J  \, \mA^J
\end{cases}
\hskip 0.8 in 
M=\left ( \bma A & B \cr C & D \ema \right ) \in \Sp(2h,\ZZ).
\eea
The choice of a symplectic basis of $H_1(\Sigma, \ZZ)$ induces a canonical choice of representatives for the $h$ generators of $H^{1,0}(\Sigma)$, namely the holomorphic Abelian differentials~$\om_I$, with $I=1,\cdots, h$. These differentials are normalized on the $\mA^J$ cycles, and their integrals on the $\mB_J$ cycles give the components of the period matrix $\Omega$:
\bea
\oint _{\mA^J} \om_I = \delta ^J_I,
\hskip 0.8in 
\oint _{\mB_J} \om_I = \Omega _{IJ},
\hskip 0.8in 
Y_{IJ} = \Im (\Omega_{IJ}).
\label{permat}
\eea
By the Riemann relations we have $\Omega^t=\Omega$ and the matrix $Y= \Im (\Omega)$ is positive definite. Equivalently, the following pairing holds, 
\bea
\label{2.Riem}
{ i \over 2} \int _\Sigma \omega _J \wedge \bom^I = \delta ^I_J,
\eea
where, as mentioned in the footnote \ref{footnote:Y}, 
\bea 
\label{2.eqdefombar}
\bom^I = Y^{IJ} \bom_J.
\eea
Under a modular transformation $M \in \Sp(2h,\ZZ)$ of (\ref{2.mod}), the row matrix $\omega$ of holomorphic Abelian differentials and the period matrix  $\Omega$ transform by $M: \omega \to  \om (C\Omega+D)^{-1}$ and  $M: \Omega \to  (A \Omega+B)(C\Omega+D)^{-1}$, respectively.   

\sm

We recall from section \ref{sec:1.2} that the Lie algebra $\mg$, in which the connections that we will consider take their values, is freely generated by a set of $2h$ elements $a \cup b$ where $a=\{ a^1, \cdots, a^h\}$ and $b= \{ b_1, \cdots, b_h\}$ and is completed with respect to the degree, while the (completed) sub-algebra of $\mg$ that is freely generated by the elements of~$b$ alone is denoted $\mg_b$.  In the remainder of this section, we will review the construction of the two $\mg$-valued connections $d-\cK_\text{E}$ and $d-\cJ_{\rm DHS}$, and compare them with their more classical genus-one analogues. More precisely, in sections~\ref{sec:2.1} and~\ref{sec:2.2} we will uniquely characterize $\cK_\text{E}$ and $\cJ_{\rm DHS}$, respectively, through their functional properties, and review some further features. In section \ref{sec:2.3} we will compare them with their genus-one analogues, introduced by Calaque--Enriquez--Etingof and Brown--Levin, respectively.

\subsection{The Enriquez connection $d-\cK_\text{E}$}
\label{sec:2.1}

In this subsection we shall give the definition and list some basic properties of the Enriquez connection $d-\cK_\text{E}$ needed in this paper. The following result, which defines $\cK_\text{E}$ through a functional characterization, essentially follows from \cite{Enriquez:2011} but was never stated in this form. We therefore include also its proof for completeness. The flatness of the corresponding connection trivially follows from the meromorphicity of $\cK_\text{E}$.

{\thm 
\label{2.thm:1}
For any fixed $p\in\Sigma$ there exists a unique differential form (in the variable~$x$) $\cK_{\rm E}(x,p;a,b)$ which is multiple-valued on~$\Sigma$, meromorphic on~$\tilde \Sigma$ with simple poles at all points in $\pi^{-1}(p)$ and holomorphic elsewhere, takes values in~$\mg$ and satisfies:
\begin{enumerate}
\itemsep=-0.02in
\item the monodromy conditions
\bea
\label{2.E1}
\cK_{\rm E} (\mA^K \cdot x, p;a, b) & = & \cK_{\rm E} (x,p;a, b),
\no \\
\cK_{\rm E} (\mB_K \cdot x, p;a, b) & = & e^{-2\pi i b_K} \cK_{\rm E} (x,p;a, b) \, e^{2\pi ib_K} ;
\eea
\item the residue condition (where a preferred pre-image of $p$ in $\tilde \Sigma$ is also denoted by $p$)\footnote{\label{taupoles}The residues of the poles at the points in $\pi^{-1}(p)$, other than $p$ itself, may be obtained by combining (\ref{2.E2a}) with the monodromy relations of (\ref{2.E1}).} 
\bea
\label{2.E2a}
\mathrm{Res}_{x=p} \, \cK_{\rm E} (x,p;a, b) =
[b_I, a^I];
\eea
\item  it is linear in the generators $a^I$.
\end{enumerate}
}
\begin{proof} It follows from \cite{Enriquez:2011}, Lemma\footnote{Condition (a) in the original reference (which is  (\ref{A-cycleintg}) in this work) is not necessary, as it follows from the other properties (see our proof of the uniqueness of the family).} 6, that for fixed $p\in\Sigma$ there exists a family of multiple-valued differentials (in the variable $x$) $g^{I_1 \cdots I_r} {}_J (x,p)$, meromorphic on~$\tilde \Sigma$ with simple poles at (a subset of) points in $\pi^{-1}(p)$ and holomorphic elsewhere, which satisfy the monodromy conditions\footnote{Here the generalized Kronecker symbol $\delta ^{I_1 \cdots I_s }_K$  is defined for arbitrary $s\geq 1$  by the product of standard Kronecker symbols: $\delta^{I_1 \cdots I_s}_K = \delta ^{I_1}_K \cdots \delta ^{I_s}_K$.} 
\bea
\label{2.mon}
g^{I_1 \cdots I_r} {}_J (\mA^K \cdot x ,p) & = &g^{I_1 \cdots I_r} {}_J (x,p),
\no \\
g^{I_1 \cdots I_r} {}_J (\mB_K \cdot x ,p) & = &g^{I_1 \cdots I_r} {}_J (x,p) 
+ \sum _{s=1}^r { (-2\pi i)^s \over s!} \, \delta ^{I_1 \cdots I_s } _K \, g ^{I_{s+1} \cdots I_r} {}_J (x,p),
\eea 
and which are holomorphic at (the preferred pre-image of) $p$ for $r\geq 2$ (see footnote \ref{taupoles}), whereas for $r=1$ they satisfy the residue condition
\bea
\label{2.E3a}
\mathrm{Res}_{x=p} \, g^{I} {}_J (x,p) =  \delta^I_J.
\eea
It follows that, if we define $\cK_{\rm E}$ as the $\mg$-valued generating series of this family, 
\bea
\label{2.Kexp}
\cK_{\rm E} (x,p;a, b)=\sum_{r=0}^\infty g^{I_1 \cdots I_r} {}_J (x,p)  B_{I_1} \cdots B_{I_r} a^J,
\eea
where we set $B_I X = [b_I, X]$ for arbitrary $X \in \mg$, then $\cK_{\rm E}$ satisfies all the required properties, which proves the existence part of the statement. 

\sm

For the uniqueness, notice first that, by properties of Lie brackets, linearity in the generators $a^I$ is equivalent for $\cK_{\rm E}$ to have an expansion like \eqref{2.Kexp}, hence it is enough to verify the uniqueness of a family of differentials $g^{I_1 \cdots I_r} {}_J (x,p)$ as above. By Cauchy's residue theorem, one can prove, using \eqref{2.mon} and \eqref{2.E3a}, that
\bea\label{A-cycleintg}
\oint _{\mA^K} g^{I_1 \cdots I_r} {}_J (\ti,p) = (-2\pi i)^r { B_r \over r!} \, \delta ^{I_1 \cdots I_r K}_J,
\eea
where $\mA^K$ denote the pre-images of the $\mA$-cycles in a fundamental domain $D\subset \tilde\Sigma$ containing $p$ (see figure \ref{fig:1}) and $B_r$ denote the Bernoulli numbers. For $r=0$ the coefficients
\bea
\label{2.firstord}
g^{} {}_J (x,p) =  \omega_J(x),
\eea
are the holomorphic Abelian differentials on $\Sigma$. This implies that this family is uniquely determined by \eqref{2.mon} and \eqref{2.E3a}, which in turn implies the uniqueness of $\cK_{\rm E} (x,p;a, b)$.
\end{proof}
\begin{rmk}
We have departed from the conventions in the original reference \cite{Enriquez:2011}, as well as \cite{Enriquez:2021} and \cite{Baune:2024}, by factors of $-2\pi i$  to align the genus-one instance of the Enriquez connections and its expansion coefficients with the common  conventions of the particle-physics and string-theory literature reviewed in section \ref{sec:2.3} below. More specifically, the conventions of this work are obtained by rescaling the generators $b_I \rightarrow - 2\pi i b_I$ in\footnote{The notation used for these generators in the original reference \cite{Enriquez:2011} is actually $x_1,\ldots ,x_h$, and the translation into our notation is $x_I=- 2\pi i b_I$. Also, rather than considering a $\mg$-valued differential with a simple pole at $p$, one considers in \cite{Enriquez:2011} a differential valued in the (completion of) the quotient $\mathrm{Lie}(a,b)/[a^I,b_I]$, which turns out to be independent of $p$, but whose analytic construction is exactly the same as that presented here. One may also think of the differential $\cK_{\rm E}$ as obtained as the restriction from $\Sigma^2\smallsetminus \{z_1=z_2\}$ to $\Sigma_p$ of the two-variable version of the Enriquez connection of \cite{Enriquez:2011}, fixing the second component to $z_2=p$ and restricting the target Lie algebra accordingly.} \cite{Enriquez:2021,Baune:2024} (while
leaving the $a^J$ unchanged) which leads to slightly different conditions for the monodromies \eqref{2.E1} and the residue \eqref{2.E2a}, and to the dictionary
\[ g^{I_1 \cdots I_r} {}_J (x,p) = (-2\pi i)^r  \omega^{I_1 \cdots I_r} {}_J (x,p) \]
between the differentials $g^{I_1 \cdots I_r} {}_J (x,p)$ in (\ref{2.Kexp}) and the
$\omega^{I_1 \cdots I_r} {}_J (x,p)$ in \cite{Enriquez:2011}.
\end{rmk}

\sm

\begin{rmk}
It will be useful in the sequel to rephrase the residue conditions \eqref{2.E2a} and \eqref{2.E3a} in terms of distributions. Consider the Dirac $\delta$-function $\delta(x,y)$ which is of type $(1,1)$ in $x$ and type $(0,0)$ in $y$, normalized by $\int _\Sigma \delta (x,y) \phi(x) = \phi(y)$ for an arbitrary scalar test function $\phi$. It is given in local coordinates by $\delta (x,y) = \tfrac{i} {2} dx \wedge d \bar x \, \delta^{(2)}(x,y) $, where $\delta^{(2)}(x,y)$ is the standard coordinate $\delta$-function. Then \eqref{2.E2a} is equivalent to\footnote{Throughout, we shall use the notations $\p_x = dx \, \p / \p x$ and $\bar \p_x = d\bar x \, \p / \p \bar x$, so that the total differential is given by $d_x = \p_x + \bar \p_x$.}
\bea
\label{2.E2}
\bar \p_x \,  \cK_\text{E} (x,p;a, b) =
2\pi i\,  [b_I, a^I]\, \delta(x,p),
\eea
and \eqref{2.E3a} is equivalent to
\bea
\label{2.E3}
\bar \p_x \, g^I{}_J(x,p) = 2 \pi i \, \delta ^I_J \, \delta (x,p).
\eea
\end{rmk}

{\cor[See \cite{Enriquez:2011}, Lemma 9] For $r=0$ or if $I_r\neq J$ the differential form $g^{I_1 \cdots I_r} {}_J (x,p)$ is independent of $p$, otherwise it is a meromorphic multiple-valued function of $p$, whose monodromies are given by 
\bea
g^{I_1 \cdots I_r} {}_J (x,   \mA^K \cdot p) & = & g^{I_1 \cdots I_r} {}_J (x,p),
\no \\
g^{I_1 \cdots I_r} {}_J (x, \mB_K \cdot p) & = & g^{I_1 \cdots I_r} {}_J (x,p) 
+ \delta ^{I_r}_J \sum_{s=0}^{r-1} { (2\pi i)^{r-s} \over (r-s)!} \, g^{I_1 \cdots I_s}{}_K (x,p) \, \delta_K^{I_{s+1} \cdots I_{r-1}}.  \label{2.mon.p}
\eea
}

\noindent
Following (\ref{1.PE}) and (\ref{1.words}), the path-ordered exponential and its expansion in terms of words $\mw$ in non-commutative letters from the set $a\cup b$ and associated polylogarithms $\Gamma_\text{E} (\mw; x,y,p) $ for the connection $\cK_\text{E}$ are given by
\bea
\bGam _\text{E} (x,y,p;a,b) 
= \text{P} \exp \int^x _y \cK_\text{E} (t,p;a,b) 
= \sum _{\mw \, \in \, \cW(a \, \cup \, b)} \mw \, \Gamma_\text{E} (\mw; x,y,p).
\label{epoly}
\eea
The resulting Enriquez polylogarithms $\Gamma_\text{E} (\mw; x,y,p)$ are multiple-valued functions of $x$, $y$, $p \in \Sigma$, which for certain choices of $\mw$ have a logarithmic singularities at $x=p$ and $y=p$.  They can be straightforwardly expressed in terms of the iterated integrals introduced in \cite{Baune:2024, DHoker:2024ozn}
\begin{align}
 \tGamma{ \overrightarrow{I}_{\!\!1} &\overrightarrow{I}_{\!\!2} &\cdots &\overrightarrow{I}_{\!\!\ell} }{
J_1 &J_2 &\cdots &J_\ell }{ p_1 &p_2 &\cdots &p_\ell}{x,y} 
= 
\int^x_y dt \, g^{  \overrightarrow{I}_{\!\!1} }{}_{J_1} (t,p_1) \,
 \tGamma{  \overrightarrow{I}_{\!\!2} &\cdots &\overrightarrow{I}_{\!\!\ell} }{
 J_2 &\cdots &J_\ell }{ p_2 &\cdots &p_\ell}{t,y} \, , \ \ \ \ \ \
  \tGamma{ \emptyset }{ \emptyset }{ \emptyset }{x,y} = 1
  \label{gatw.01}
\end{align}
upon specializing $p_1=p_2=\cdots =p_\ell = p$, where the multi-indices $ \overrightarrow{I}_{\!\! j} = I_j^1 I_j^2 \ldots I_j^r$ may be empty to recover the integration kernels $\omega_J(t)$.  
It should be possible to define by tangential base point regularization \cite{Deligne:1989, Panzer:2015ida, Abreu:2022mfk} also the value at $x=p$ or $y=p$, where the integral is otherwise logarithmically divergent, see \cite{Broedel:2014vla, Matthes:thesis, Enriquez:2023} for the genus-one case.

\sm

Formulas for the integration kernels $g^{I_1\cdots I_r}{}_J(x,p)$ of the higher-genus polylogarithms $\Gamma_\text{E} (\mw; x,y,p)$ in terms of the fundamental form of the third kind and of iterated integrals of Abelian differentials, or in terms of averages on the Schottky cover in the case of real hyperelliptic curves, can be deduced from \cite{Enriquez:2021} (see section~5) and \cite{Baune:2024}, respectively.

\subsection{The DHS connection $d-\cJ_\text{DHS}$}
\label{sec:2.2}

The DHS connection $\cJ_\text{DHS}(x,p;a,b)$ is a single-valued smooth differential form in $x \in \Sigma_p$, which is the sum of a $(1,0)$-form $\cJ_\text{DHS} ^{(1,0)}$ that has a regular singularity at $x=p$, and a $(0,1)$-form $\cJ_\text{DHS} ^{(0,1)}$ that is purely anti-holomorphic and single-valued on the whole $\Sigma$. The connection $\cJ_\text{DHS}(x,p;a,b)$ takes values in the Lie algebra $\mg$. It was defined in \cite{DHS:2023} via an expansion, similar to the one given for $\cK_\text{E}$ in (\ref{2.Kexp}), which we shall repeat below, and may also be characterized by its functional properties as follows.

{\thm 
\label{2.thm:2}
For any fixed $p\in\Sigma$ there exists a unique differential form (in the variable~$x$) $\cJ_{\rm DHS}(x,p;a,b)$ which is smooth and single-valued on $\Sigma_p$, takes values in~$\mg$, and satisfies:
\begin{enumerate}
\itemsep=-0.02in
\item the Maurer--Cartan equation for $x \not = p$, and it has a simple pole in $x$ at $p$:
\bea
\label{2.MC1}
d_x \cJ_{\rm DHS} (x,p;a,b) - \cJ_{\rm DHS} (x,p;a,b) \wedge \cJ_{\rm DHS} (x,p;a,b)= 2 \pi i \, \delta (x,p) \, [b_I, a^I];
\qquad
\eea
\item its $(0,1)$ component is given by $\cJ_{\rm DHS} ^{(0,1)} (x,p;a,b) = - \pi b_I \bom^I(x)$ with $\bom^I$ given in (\ref{2.Riem});
\item its $(1,0)$ component $\cJ_{\rm DHS} ^{(1,0)} (x,p;a,b)$ is linear in the generators $a^J$.
\end{enumerate}

\sm

\begin{proof} The existence follows from the explicit construction of \cite{DHS:2023} of a family of differentials $f^{I_1 \cdots I_r} {}_J$ whose generating series, similar to the expansion of $\cK_\text{E}$ in (\ref{2.Kexp}), is a $(1,0)$ differential $\cJ_\text{DHS} ^{(1,0)} (x,p;a,b)$ which satisfies the required properties. Let us show here how the properties in the statement uniquely characterize the construction of \cite{DHS:2023}.
The combination of item 1 and item 2 gives a differential equation for the $(1,0)$ component,
\bea
\label{2.MC2}
\bar \p _x \, \cJ_\text{DHS} ^{(1,0)} (x,p;a,b) 
+ \pi \bom^I(x)  \wedge \big [ b_I, \cJ_\text{DHS} ^{(1,0)} (x,p;a,b) \big ]= 2 \pi i \, \delta (x,p) \, [b_I, a^I].
\eea
Linearity of $\cJ_\text{DHS} ^{(1,0)} (x,p;a,b)$ in the generators of $a$, as prescribed by item 3, combined with 
the structure of the Maurer--Cartan equation given in (\ref{2.MC2}) is equivalent to  the condition that $\cJ_\text{DHS} ^{(1,0)} (x,p;a,b) $ admits the following expansion
\bea
\cJ_\text{DHS} ^{(1,0)} (x,p;a,b) = 
\om_J(x) a^J + \sum_{r=1}^\infty f^{I_1 \cdots I_r} {}_J (x,p) B_{I_1} \cdots B_{I_r} a^J,
\label{jdhs10}
\eea
where we recall that $B_I X = [b_I, X]$ for $X \in \mg$, and where  $ f^{I_1 \cdots I_r} {}_J (x,p)$ are $(1,0)$ forms in~$x$ and scalars in $p$ that are single-valued for $(x,p) \in \Sigma \times \Sigma$, with a simple pole at $x=p$ for $r=1$ and smooth otherwise. The equation (\ref{2.MC2}) translates into the following set of differential equations for $ f^{I_1 \cdots I_r} {}_J (x,p)$, referred to as a Massey system,
\bea
\label{2.massey}
\bar \p_x  f ^I{} _J(x,p) & = & - \pi \bom ^I (x) \wedge \om_J (x) + 2\pi i \, \delta ^I_J \, \delta (x,p), 
\no  \\
\bar \p_x  f^{I_1 \cdots I_r} {}_J (x,p) & = & - \pi \bom ^{I_1} (x) \wedge  f^{I_2 \cdots I_r} {}_J (x,p), \hskip 1in r\geq 2. 
\eea
Note that integrability of the Massey system requires that $f^{I_1 \cdots I_r} {}_J (x,p)  \in {\rm Range} (\p_x)$ for all $r \geq 1$ since the integral of the left side of each equation vanishes and therefore so must the right side. The solution of this system is the content of the following lemma.

{\lem
\label{2.lem:1}
The solution for $f^I{}_J(x,p)$ of the first equation in the Massey system of (\ref{2.massey})  is unique and given in terms of the Arakelov Green function $\cG(x,y)$ as follows,
\bea
\label{2.r1}
f^I{}_J(x,p) = \p_x \int _\Sigma \cG(x,\ti) \Big (  {-} {i \over 2} \bom^I(\ti ) \wedge \om_J(\ti ) - \delta ^I_J \, \delta(\ti,p) \Big ).
\eea
The solution for $  f^{I_1 \cdots I_r} {}_J(x,p)$ with $r\geq 2$ of the second equation of the Massey system of (\ref{2.massey}) is unique and given recursively in $r$ by the absolutely convergent integrals for $x \not=p$\bea
\label{2.r2}
  f^{I_1 \cdots I_r} {}_J(x,p) 
= \p_x  \int _\Sigma  \cG(x,\ti) \Big ( {-} {i \over 2}  \bom^{I_1} (\ti) \wedge f^{I_2 \cdots I_r}{}_J(\ti,p)  \Big ).
\eea

\sm

\begin{proof}[Proof of the lemma] 
To prove the lemma, we use the Arakelov Green function $\cG(x,y)$, which  was introduced in mathematics in \cite{Faltings} and applied in physics in \cite{Alvarez-Gaume:1986nqf, DHoker:2017pvk}. It is the unique real-valued symmetric function of $(x,y) \in \Sigma \times \Sigma$ which is smooth for $y \not= x$ and satisfies the following equations, 
\bea
\label{2.Arak}
\bar \p_x  \p_x \, \cG(x,y) = 2 \pi i \Big ( \kappa (x) - \delta (x,y)  \Big ),
\hskip 1in 
\int _\Sigma \kappa (\ti ) \, \cG(\ti ,y)=0,
\eea
where $\kappa$ is the canonical volume form on $\Sigma$, given by,
\bea
\label{2.kappa}
\kappa = { i \over 2h} \, \om_I \wedge \bom^I,
\hskip 1in 
\int _\Sigma \kappa =1.
\eea 
Thus, $\cG(x,y)$ is a smooth function away from the diagonal $x=y$, where it has a logarithmic singularity given by $\cG(x,y) = - \ln |x-y|^2 + \hbox{regular}$, as a result of which its differential has a simple pole,
\bea
\p_x \, \cG(x,y) = - { dx \over x-y} + \hbox{regular},
\eea
and the integrals in (\ref{2.r2}) are absolutely convergent for $x \not=p$. The limit $x\to p$ of (\ref{2.r2}) and its regularization in case of $r=2$ were discussed in section 8 of \cite{DHoker:2024ozn}. 

\sm

Returning to establishing the solution to the Massey system, we readily obtain the solution (\ref{2.r1}) to the equation for $r=1$ in terms of $\cG(x,y)$  since the right side of the first equation in (\ref{2.massey})  integrates to zero. By construction, the solution for $f^I{}_J(x,p)$ given in (\ref{2.r1}) belongs to the range of the operator $\p_x$, as argued already earlier. As  a result the integral over $x$ of the right side of the $r=2$ equation in (\ref{2.r2}) vanishes, so that one may solve the $r=2$ equation for $f^{I_1 I_2}{}_J(x,p)$ in terms of the Arakelov Green function as well. By induction on $r$ one establishes (\ref{2.r2}) for all values of~$r$. This concludes the proof of Lemma~\ref{2.lem:1}.
\end{proof}}

\sm

The family of differentials $f^{I_1 \cdots I_r} {}_J(x,p)$ obtained from the lemma, which coincides with the family originally defined in \cite{DHS:2023}, is therefore uniquely determined by items 1--3, thus concluding the proof of the theorem.
\end{proof}}

\begin{rmk}
Note that linearity of $\cJ_{\rm DHS} ^{(1,0)} (x,p;a,b)$ in $a^J$ which is required in item 3  is consistent with the linearity of the right side of (\ref{2.MC1}). 
\end{rmk}

\subsubsection{Modular invariance of the DHS connection}

The  holomorphic Abelian differentials $\om_I$, and their conjugates $\bom^I$ introduced in (\ref{2.Riem}),  transform under non-linear realizations of the modular group $P,Q: \Sp(2h,\ZZ) \times \mathfrak H_h \to \GL(h,\CC)$, where $\mathfrak H_h$ denotes the Siegel upper half-space, whose action is given as follows, 
\bea
\label{2.mod2}
\begin{cases}  \om_I  \to P(M,\Omega) _I {}^J \om_J \cr  \bom^I \to Q(M,\Omega) ^I {}_J \bom^J \end{cases}
\hskip 0.3in 
\begin{cases} P(M,\Omega) = \big ( Q(M,\Omega)^t \big ) ^{-1} \cr Q(M,\Omega)  = C \Omega+D \end{cases}
\hskip 0.2in 
M= \left ( \bma A & B \cr C & D \ema \right ),
\quad
\eea
for $M \in \Sp(2h,\ZZ)$ and $\Omega \in \mH_h$. The composition law for $M_1, M_2 \in \Sp(2h,\ZZ)$ is given by,
\bea
\begin{cases} P(M_1 M_2 , \Omega) = P(M_1,   \Omega') P(M_2 , \Omega) \cr
Q(M_1 M_2 , \Omega) = Q(M_1,   \Omega') Q(M_2, \Omega) \end{cases}
\hskip 0.35in 
 \Omega' = (A_2 \Omega + B_2) (C_2 \Omega + D_2)^{-1}. \
\eea
The equations defining the Arakelov Green function in (\ref{2.Arak}) are modular invariant, and so is $\cG(x,y)$. Therefore, the functions $f^{I_1 \cdots I_r}{}_J(x,p)$ are actually modular tensors in the sense introduced in  \cite{DHoker:2020uid},  whose  transformation law may be deduced from that of the Abelian differentials in (\ref{2.mod2}),
\bea
M: f^{I_1 \cdots I_r} {}_J(x,p) \to Q^{I_1}{}_{K_1} \cdots Q^{I_r}{}_{K_r} P_J{}^L 
f^{K_1 \cdots K_r} {}_L(x,p),
\eea 
where we have abbreviated $P=P(M,\Omega)$ and $Q=Q(M,\Omega)$ for $M \in \Sp(2h,\ZZ)$.  Note that the dependence of both $P(M,\Omega)$ and $Q(M,\Omega)$ on $\Omega$ is holomorphic, as is the entire transformation factor of the modular tensors $f^{I_1 \cdots I_r} {}_J(x,p)$. The factors $P$ and $Q$ and their tensor products generalize the automorphy factors $(c\tau+d)^k$ associated with ${\rm SL}(2,\ZZ)$ modular transformations of genus one. 

\sm

The result may be summarized by the following proposition (see Theorem 3.2 of \cite{DHS:2023}).
{\prop
\label{2.cor:1}
The connection $\cJ_{\rm DHS} (x,p;a,b) $ is invariant under the action of the modular group $\Sp(2h,\ZZ)$ provided the generators of the Lie algebra $\mg$ transform as follows,
\bea
M: a^I \to Q(M,\Omega) ^I{}_J \, a^J,
\hskip 1in
M : b_I \to P(M,\Omega) _I{}^J \, b_J , 
\label{modabtrf}
\eea
where $P(M,\Omega)$ and $Q(M,\Omega)$ are defined in (\ref{2.mod2}) for $M \in \Sp(2h,\ZZ)$.}

\subsubsection{Modular properties of the DHS polylogarithms}

Following (\ref{1.PE}) and (\ref{1.words}), the path-ordered exponential and its expansion in terms of words $\mw$ in non-commutative letters from the set $a\cup b$ and associated polylogarithms $\Gamma_\text{DHS} (\mw; x,y,p) $ for the connection $\cJ_\text{DHS}$ are introduced as follows,
\bea
\bGam _\text{DHS} (x,y,p;a,b) 
= \text{P} \exp \int^x _y \cJ_\text{DHS} (t,p;a,b) 
= \sum _{ \mw \, \in \, \cW(a \, \cup \, b)} \mw \, \Gamma_\text{DHS} (\mw; x,y,p).
\label{dhspoly}
\eea
The implications for the modular transformation law of the polylogarithms $\Gamma_\text{DHS} (\mw;x,y,p)$  are summarized by the following proposition (see section 4.4 of \cite{DHS:2023}). 

{\prop
\label{2.cor:2}
The polylogarithms $\Gamma _{\rm DHS}(\mw;x,y,p)$ associated with words composed of the alphabet $a \cup b$ with $a=\{ a^1, \cdots, a^h \}$ and $b=\{ b_1, \cdots, b_h\}$ map to modular tensors,
\bea
\Gamma _{\rm DHS}(a^{I_1} \cdots a^{I_m} b_{J_1} \cdots b_{J_n} \cdots ;x,y,p)  
&  =  &
\Gamma _{I_1 \cdots I_m}{}^{J_1 \cdots J_n} {}_{\cdots} {}^{\cdots} (x,y,p),
\eea
whose transformation law under $M \in \Sp(2h,\ZZ)$ is    given by, 
\bea
 M:  ~ \Gamma _{I_1 \cdots I_m}{}^{J_1 \cdots J_n} {}_{\cdots} {} ^{\cdots} 
& \to  & P_{I_1} {}^{K_1} \cdots P_{I_m} {}^{K_m} Q^{J_1}{}_{L_1} \cdots Q^{J_n}{}_{L_n} \cdots 
\Gamma _{K_1 \cdots K_m}{}^{L_1 \cdots L_n} {}_{\cdots} {}^{\cdots},
\quad
\eea
where again $P=P(M,\Omega)$ and $Q= Q(M,\Omega)$ for brevity.}

\subsection{Restriction to genus one}
\label{sec:2.3}

We follow the customary notation $\tau = \Omega_{11}$ for the restriction of the period matrix (\ref{permat}) to genus $h=1$, and we identify~$\Sigma$ with its Jacobian, the complex torus $\CC/(\ZZ+ \tau \ZZ)$ for $\Im\tau >0$. The restriction of both the Enriquez connection ${\cal K}_{\rm E}$ and the DHS connection ${\cal J}_{\rm DHS}$ to genus one can be explicitly expressed in terms of the odd Jacobi theta function,
\bea
\vartheta_1(x) = 2 q^{1/8} \sin(\pi x) \prod_{n=1}^{\infty} (1-q^n) (1-e^{2\pi i x} q^n) (1-e^{-2\pi i x} q^n)\,, 
 \ \ \ \ q= e^{2\pi i \tau}.
\label{h1.01}
\eea
Specifically, the genus-one connections involve the Kronecker function $F(x;\a)$ (also known as Kronecker-Eisenstein series), which can be defined as \cite{Kron1, Kron2} 
\bea
\label{2.FG}
F(x;\a) = \frac{ \vartheta'_1(0) \vartheta_1(x+\a) }{ \vartheta_1(x) \vartheta_1(\a) }.
\eea
The function $F(x;\a)$ is  meromorphic on $\CC \times \CC$ and, viewed as a function on 
$\Sigma \times \Sigma$, is multiple-valued and has the following monodromies in the variable $x$, 
\bea
F(x+1;\a) = F(x;\a),
\hskip 1in
F(x+\tau;\a)= e^{-2\pi i \alpha} F(x;\a).
\label{h1.03}
\eea
These monodromies cancel if one considers the modified version\footnote{The notation $\Omega(x;\a)$ is customary and not to be confused with the period matrix.}
\bea
\label{2.FGa}
\Omega (x;\a) = \exp \left ( 2 \pi i  \, { \Im x \over \Im \tau} \, \a \right )\,F(x;\a),
\eea
which is doubly periodic but non-meromorphic in $x$. The Laurent expansions at $\alpha=0$ of \eqref{2.FG} and its single-valued analogue \eqref{2.FGa} produce the integration kernels $g^{(r)}$ and $f^{(r)}$, respectively:
\bea
\label{2.fg}
F(x;\a) = \sum_{r=0}^\infty \a^{r-1}  g^{(r)}(x),
\hskip 0.4in
\Omega (x;\a) = \sum_{r=0}^\infty \a^{r-1}  f^{(r)}(x),
\eea
with $g^{(0)}(x) = f^{(0)}(x)=1$, $g^{(1)}(x) = \tfrac{\partial}{\partial x} \log \vartheta_1(x)$ and $f^{(1)}(x) = \tfrac{\partial}{\partial x} \log \vartheta_1(x)+ 2\pi i \,  \frac{\Im x}{\Im \tau}$.

\sm

At genus one, the Lie algebra~$\mg$ in which the Enriquez and DHS connections take values is generated by two elements $a$ and $b$, with $a=a^1$ and $b=b_1$ in our earlier notation. In terms of the Kronecker function defined in (\ref{2.FG}), the Enriquez and DHS  connections 
in Theorems \ref{2.thm:1} and \ref{2.thm:2} reduce to the following expressions at genus one, 
\begin{align}
\label{h1.04}
\cK_\text{E} (x,p;a,b) \Big|_{h=1} & =  dx \,  F (x-p;  B)  B a,
 \\
\cJ_\text{DHS}(x,p;a,b) \Big|_{h=1} & = dx \, \Omega (x-p;B)  B a - d \bar x \,{ \pi \, b \over \Im \tau},
\no
\end{align}
where $BX = [b,X]$ for any $X \in \mg$. The right side of the first line is a connection introduced and generalized to multiple variables by\footnote{This connection was independently introduced in the one-variable case by Levin--Racinet~\cite{Levin:2007}.} Calaque--Enriquez--Etingof \cite{CEE}, whereas the right side of the second line coincides (upon adding $dx\, \pi \, b / \Im \tau$) with a connection introduced and generalized to multiple variables by Brown--Levin \cite{BrownLevin}.  The Enriquez  kernels $g^{I_1 \cdots I_r} {}_J (x,p) $ introduced  in (\ref{2.Kexp})  and the DHS kernels $f^{I_1 \cdots I_r} {}_J (x,p) $ introduced in  (\ref{jdhs10}) on a Riemann surface of arbitrary genus,  reduce at genus one to the genus-one kernels $ g^{(r)}(x-p)$  and $f^{(r)}(x-p)$ introduced in (\ref{2.fg}) as follows,
\bea
g^{I_1 \cdots I_r} {}_J (x,p) \Big|_{h=1} & = & g^{(r)}(x-p) \, dx,
\no \\
f^{I_1 \cdots I_r} {}_J (x,p) \Big|_{h=1} & = & f^{(r)}(x-p) \, dx.
\label{h1.05}
\eea
In (\ref{h1.04}) and (\ref{h1.05}) the restrictions to genus one of $\cK_\text{E}(x,p;a,b)$ and $\cJ_\text{DHS}(x,p;a,b)$ become dependent only on the difference $x-p$ thanks to translation invariance on the torus $\Sigma$. 
Without loss of generality one may fix $p$ to be at the origin of $\Sigma$. The monodromy conditions (\ref{2.mon}), (\ref{2.mon.p}) both  specialize to the genus one case as follows,
\bea
g^{(r)}(x+1) = g^{(r)}(x),  
\hskip 0.6in
g^{(r)}(x+\tau) = g^{(r)}(x) + \sum_{s=1}^{r} \frac{(-2\pi i)^s}{s!}  g^{(r-s)}(x),
\label{h1.06}
\eea
consistently with the Laurent expansion of (\ref{h1.03}) with respect to $\alpha$.

\sm

The meromorphic multiple-valued connection of Calaque--Enriquez--Etingof and the non-meromorphic single-valued connection of Brown--Levin on the right side of the two equations in (\ref{h1.04}) may be related by a gauge transformation and an automorphism of the Lie algebra $\mg$. This relation provides an explicit realization of the correspondence (\ref{1.KJ}) in the introduction. Indeed, one readily verifies that the gauge transformation
\bea
{\cal U}_{\rm BL}(x)  =\exp \bigg(  2\pi i \, \frac{\Im x}{\Im \tau} \, b \bigg)
\label{h1.11}
\eea
together with the automorphism $\hat a= a + \pi b / (\Im \tau)$ and $\hat b= b$ reproduce the relation 
\bea
\cK_\text{E} (x,p;a,b) \Big|_{h=1} &=&
\cU_\text{BL} (x{-}p)^{-1}  \cJ_\text{DHS}(x,p;\hat a,\hat b) \Big|_{h=1} \, \cU_\text{BL} (x{-}p) 
\no \\ &&
- \cU_\text{BL} (x{-}p)^{-1} d_x \, \cU_\text{BL} (x{-}p),
\label{h1.12}
\eea
which implies that the two connections are related by
\begin{equation}
d-\cK_\text{E} (x,p;a,b) \Big|_{h=1}\,=\,\cU_\text{BL} (x{-}p)^{-1}\,\Big(d- \cJ_\text{DHS}(x,p;\hat a,\hat b) \Big|_{h=1}\Big)\,\cU_\text{BL} (x{-}p).
\end{equation}
The anti-holomorphic dependence of (\ref{h1.11}) ensures that the $(0,1)$-form components
cancel between the two terms on the right side of (\ref{h1.12}).

\newpage

\section{Gauge transforming $\cJ_\text{DHS}$ to $\cK_\text{E}$ }
\setcounter{equation}{0}
\label{sec:3}

In this section, we give the first explicit construction of a gauge transformation and an automorphism of the Lie algebra $\mg$ that relate the connection $d-\cK_\text{E}$ with the connection $d-\cJ_\text{DHS}$. More precisely, we will relate the differential forms $ \cK_\text{E} (x,p;a,b)$ and $\cJ_\text{DHS}(x,p;\ta, \tb)$ by a gauge transformation $\cU_{\rm DHS}(x,p)$, where $a, b$ and $\ta, \tb$ are two distinct sets of generators of the algebra $\mg$, whose elements are given by
\begin{align} 
a & = \{ a^1 , \cdots, a^h \}, & \ta = \{ \ta ^1, \cdots, \ta^h \},
\no \\
b & = \{ b_1 , \cdots, b_h \}, & \tb = \{ \tb _1, \cdots, \tb_h \}.
\end{align}
The replacement $a \cup b\to \ta \cup \tb$ corresponds to an automorphism of $\mg$ whose explicit form we shall construct. The construction of the full relation between $ \cK_\text{E} (x,p;a,b)$ and $\cJ_\text{DHS}(x,p;\ta, \tb)$ may be conveniently  decomposed into two parts: first the construction of the gauge transformation $\cU_{\rm DHS}(x,p)$ and second the construction of the automorphism. We shall now proceed to each part in turn.

\subsection{Construction of the gauge transformation $\cU_{\rm DHS}$}

To prove the existence of the relation of (\ref{1.KJ}), in this section we shall construct a suitable gauge transformation $\cU_{\rm DHS}(x,p)$ in terms of the connection $\cJ_\text{DHS}$ subject to (\ref{1.KJ}), which we repeat here for convenience, 
\bea
\label{3.KJa}
\cK_\text{E} (x,p; a,b) & = & \cU_{\rm DHS} (x,p)^{-1}  \cJ _\text{DHS} (x,p;\ta,\tb) \, \cU_{\rm DHS} (x,p) 
\no \\ &&
- \cU_{\rm DHS} (x,p)^{-1} d_x \, \cU_{\rm DHS} (x,p).
\eea
The key role of $\cU_{\rm DHS}(x,p)$ is to produce the monodromy of $\cK_\text{E}$, given in equation (\ref{2.E1}) of Theorem \ref{2.thm:1},  starting from the connection  $\cJ_\text{DHS}$ which has trivial monodromy. Thus, we seek to construct a gauge transformation with the following monodromy, 
\bea
\label{monUDHS}
\cU_{\rm DHS}(\mA^K \cdot x, p) & = & \cU_{\rm DHS}(x,p),
\no \\
\cU_{\rm DHS}(\mB_K \cdot x, p) & = & \cU_{\rm DHS} (x,p) \, e^{ 2\pi i b_K} .
\eea

\sm

To obtain $\cU_{\rm DHS}$ in terms of $\cJ_\text{DHS}$,  we begin by considering the solution $\gDHS$  to the following differential equation in the variable $x$,
\bea
\label{3.Ueq}
d_x \, \gDHS (x,y,p; \xi, \eta) & = & \cJ _\text{DHS} (x,p;\xi, \eta) \, \gDHS (x,y,p; \xi, \eta),
\eea
along with the initial condition $\gDHS (y,y,p; \xi, \eta)=1$. Here, $\xi$ and $\eta$ are taken to be arbitrary elements of $\mg^h$ so that $\cJ_\text{DHS}$ and $\gDHS $ take values in $\mg$ and $\exp(\mg)$, respectively. While $\cJ_\text{DHS}(x,p;\xi, \eta)$ is single-valued in $x$ on $\Sigma_p$, the function $\gDHS(x,y,p;\xi,\eta)$ is multiple-valued in $x,y$ on $\Sigma_p$. Therefore, we shall consider (\ref{3.Ueq}) for $x,y \in \tilde \Sigma_p$, and represent~$\Sigma$ by the fundamental domain~$D$ illustrated in figure~\ref{fig:1}, where the points~$y$ in the left and right panels of figure \ref{fig:1} denote the image of $y$ under the natural maps $\tilde \Sigma_p\to\tilde\Sigma\smallsetminus\pi^{-1}(p)$ and $\tilde \Sigma_p\to\Sigma_p$, respectively. The presence of the pole in~$x$ at the point~$p$, 
\bea
\label{3.pole}
\cJ _\text{DHS} (x,p;\xi, \eta) = { [\eta_J, \xi^J] \over x-p} dx + \hbox{ regular},
\eea 
implies that, for generic $\xi,\eta$, the path-ordered exponential $\gDHS (x,y,p; \xi, \eta) $ is singular in~$x$ at~$p$ and has non-trivial monodromy as $x \in D$ circles around $p$.  The solution to (\ref{3.Ueq}) and the initial condition is given by the path-ordered exponential, 
\bea
\label{3.po}
\gDHS (x,y,p;\xi, \eta) = \text{P} \exp \int _y^x \cJ_\text{DHS} (\ti ,p;\xi, \eta).
\eea

\subsubsection{Implementing the monodromy conditions}

In the remainder of this subsection, and in subsection \ref{sec:3.2} below, we will develop a systematic 
method to determine special elements $\xih$ and $\etah$ of $\mg^h_b$ such that $\gDHS(x,y,p;\xih,\etah)$
specializes for $y=p$ to the desired gauge transformation $\cU_{\rm DHS}(x,p)$ in (\ref{3.KJa}).

\sm

To proceed, recall from section \ref{sec:1.1} that the monodromy action of an element $\gamma \in \pi_1 (\Sigma_p, \qq)$ on the end-point $x \in \tilde \Sigma _p$ of the path-ordered exponential $\gDHS$ for arbitrary elements $\xi, \eta \in \mg^h$ is such that
\begin{align}
\gDHS(\gamma \cdot x, y ,p;\xi, \eta) &= \gDHS (x,y, p;\xi, \eta) \, \mu_{\rm DHS}(\gamma, y, p;\xi, \eta),
\label{3.pomu}
\end{align}
where we set $\mu_{\rm DHS}(\gamma, y, p;\xi, \eta)=\gDHS(\gamma \cdot y, y ,p;\xi, \eta)$, inspired by the notations of \eqref{1.eqmon}. 
The monodromy conditions of (\ref{monUDHS}) are implemented using the following lemma. 

{\lem
\label{3.lem:1} For any $y\neq p$ there exist unique elements $\xih(y,p)= \{ \xih^1(y,p), \cdots, \xih^h(y,p)\}$ and $\etah(y,p) =\{ \etah _1(y,p), \cdots, \etah _h(y,p)\}$ of $\mg_b^h$ such that the monodromy of the path-ordered exponential $\gDHS(x,y,p;\xih(y,p), \etah(y,p))$ is given by
\bea
\label{3.mon}
\mu_{\rm DHS}\big(\mA^K,y, p;\xih(y,p), \etah(y,p) \big) & = & 1,
\no \\
\mu_{\rm DHS}\big(\mB_K,y, p;\xih(y,p), \etah(y,p) \big) & = &  e^{2\pi i b_K}.
\eea}
Before proceeding to the proof of this lemma we notice that, intuitively, its statement is justified by the fact that, given $b_K$ for $K=1,\cdots, h$, there are $2h$ equations for $2h$ unknowns $\xih, \etah$.

\begin{proof} By construction, for any $\gamma$ the element $\mu_{\rm DHS}(\gamma, y, p;\xi, \eta)$ belongs to $\mathbb C\langle\!\langle \xi,\eta\rangle\!\rangle$, namely,  it is a formal series in $2h$ non-commutative variables $\xi^I,\eta_I$. The first order in $\xi$ and $\eta$ is obtained by expanding the path-ordered exponential of (\ref{3.pomu}) to first order in $\cJ_\text{DHS}$, and then retaining from the latter only its part linear in $\xi$ and $\eta$, namely $\cJ_\text{DHS}(x,p;\xi, \eta) = \om_J(x) \xi^J - \pi \bom^I (x) \eta_I + \cO(\xi\eta)$,  which gives the following contributions,
\bea
\mu_{\rm DHS} (\mA^K, y, p; \xi, \eta) & = & 1 - \xi^K + \pi \eta ^K + \cO( \xi^2, \eta^2, \xi \eta),
\no \\
\mu_{\rm DHS} (\mB_K, y, p; \xi, \eta) & = & 1 - \Omega _{KJ} \xi^J + \pi \bar \Omega_{KJ} \eta ^J + \cO( \xi^2, \eta^2, \xi \eta).
\eea
Combining this with the desired monodromy relations of \eqref{3.mon} leads to simple first order relations between $\xi, \eta$ and $b$,
\bea
\label{3.xietaA}
\xi^K - \pi \eta ^K + \cO( \xi^2, \eta^2, \xi \eta) & = & 0,
\no \\
\Omega _{KJ} \xi^J - \pi \bar \Omega_{KJ} \eta ^J + \cO( \xi^2, \eta^2, \xi \eta) & = & 2\pi i b_K,
\eea
whose unique solution $\xih,\etah$ to this order is given by
\bea
\label{3.xieta}
 \etah_I(y,p) =  b_I +  \cO(b^2),
\hskip 1in
\xih^I(y,p) =  \pi b^I + \cO(b^2).
\eea
This can be used to prove inductively that there is a unique solution $\xih(y,p),\etah(y,p)$ to all orders whose components belong to $\mathbb C\langle\!\langle b\rangle\!\rangle$. The fact that these components actually belong to the subset $\mg_b$ of $\mathbb C\langle\!\langle b\rangle\!\rangle$ follows from the fact that both sides of \eqref{3.mon} belong to $\exp(\mg_b)$, because one can take the logarithm on both sides and inductively deduce that the components of $\xih(y,p)$ and $\etah(y,p)$ are Lie series.
\end{proof}

\subsubsection{Regularity of the gauge transformation}
\label{sec:3.2}

Although for generic $\xi,\eta\in\mg^h$ the differential $\cJ_\text{DHS}(x,p;\xi,\eta)$ has a pole in $x$ at $p$, which produces a logarithmic singularity in $\gDHS$, the lemma below guarantees that, for $\xih=\xih(y,p), \ \etah=\etah(y,p)$ as in Lemma \ref{3.lem:1}, the coefficient $[\etah_J, \xih^J]$ of this singularity vanishes, so that in this case $\gDHS$ is smooth and single-valued on $\tilde \Sigma$. 

{\lem
\label{3.lem:2}
The quantity $[\etah_J, \xih^J]$ vanishes when $\xih=\xih(y,p),\ \etah=\etah(y,p)\in\mg_b^h$ are such that $\gDHS(x,y,p;\xih,\etah) $ satisfies the monodromy conditions (\ref{3.mon}).}

\begin{proof} To prove this lemma, we observe that $[\etah_J, \xih^J]$ is the residue of $\cJ_{\rm DHS}(x,p;\xih,\etah)$ at the pole at $p$ and consider the fundamental domain $D_p \subset \tilde\Sigma$ for $\Sigma$ depicted in figure \ref{fig:2},  whose boundary curve $\p D_p$ is the $4h$-gon obtained from the union of the curves $\mA^I$, $\mB_I$ and their inverses, with vertices in $\pi^{-1}(\pi(\qq))$, as illustrated in figure~\ref{fig:2}. The curves are chosen  such that a preferred preimage $p\in\tilde\Sigma$ of the point $p\in\Sigma$ is in the interior of~$D_p$. The closed boundary curve $\p D_p\subset\tilde\Sigma\smallsetminus\pi^{-1}(p)$ is homotopic to a small circle $\mC_p$ around the point $p$ (see figure \ref{fig:2}), so that the homotopy invariance of the integral defining $\gDHS$ implies the relation\footnote{Here and elsewhere, $\p D_p$ and $\mC_p$ denote also the lifts of $\p D_p$ and $\mC_p$ from $\tilde\Sigma\smallsetminus\pi^{-1}(p)$ to~$\tilde\Sigma_p$ which are uniquely determined by the choice of a preferred preimage of~$\qq$ in~$\tilde\Sigma_p$.} 
\bea
\gDHS (\p D_p \cdot x, y, p;\xih, \etah) = \gDHS (\mC_p \cdot  x, y, p;\xih, \etah).
\eea
Using the composition law \eqref{1.eqcomplawmu} of the map $\mu_{\rm DHS}$, considered here as a map on $\pi_1(\Sigma_p,\qq)$ with every other dependence omitted, we evaluate $\gDHS (\p D_p \cdot x, y, p;\xih, \etah) $ explicitly, 
\bea
\label{3.Umu}
\gDHS (\p D_p \cdot x, y, p;\xih, \etah) & = & 
\gDHS (x,y,p;\xih,\etah)  
 \\ && \times 
\prod _{K=1}^h \mu_{\rm DHS}(\mA^K) \mu_{\rm DHS} (\mB_K) 
\mu_{\rm DHS} (\mA^K)^{-1} \mu_{\rm DHS} (\mB_K)^{-1} .
\no
\eea
Since $\xih$ and $\etah$ are the solutions to the monodromy relations of (\ref{3.mon}), the factors of $\mu_{\rm DHS}(\gamma) = \mu_{\rm DHS}(\gamma, y, p;\xih, \etah)$  take the values $ \mu_{\rm DHS}(\mA^K) =1$ and  $ \mu_{\rm DHS} (\mB_K) = e^{2 \pi i b_K}$, so that their product on the right side of (\ref{3.Umu}) cancels. As a result, the monodromy of $\gDHS(x,y,p;\xih,\etah)$ around the point $p$ is trivial,
\bea
\label{3.Unomon}
\gDHS (\mC_p \cdot x, y, p;\xih, \etah)  = \gDHS (x, y, p;\xih, \etah).
\eea
Finally, the monodromy may also be evaluated via explicit calculation by choosing local coordinates $x$ in a neighborhood of $p$,  parametrizing the circle $\mC_p$ by polar coordinates $x_\ep (\theta) -p=\ep \, e^{i \theta}$ for $\ep >0$, and deriving from (\ref{3.Ueq}) a differential equation in $\theta$, 
\bea
{ \p \over \p \theta} \, \gDHS (x_\ep (\theta), y, p;\xih, \etah) 
= \Big ( i [\etah_J, \xih^J] + \cO(\ep) \Big ) \, \gDHS (x_\ep (\theta), y, p;\xih, \etah).
\eea
Integrating this equation in $\theta$ from 0 to $2 \pi$,  in the limit $\ep \to 0$, we obtain, 
\bea
\label{3.Umon}
\gDHS (\mC_p \cdot x, y, p;\xih, \etah)  
= e^{ 2 \pi i [\etah_J , \xih^J]} \, \gDHS (x, y, p;\xih, \etah).
\eea
Combining equations (\ref{3.Unomon}) and (\ref{3.Umon}) imposes the condition $e^{2\pi i \, [\etah_J, \xih^J]}=1$. Since the Lie algebra $\mg_b$ is freely generated this implies that the residue of $\cJ_{\rm DHS}(x,p;\xih,\etah)$ vanishes,
\bea
{} [\etah_J, \xih^J]=0,
\eea
and therefore that $\cJ_{\rm DHS}(x,p;\xih,\etah)$ is smooth for $x \in \Sigma$. 
\end{proof}

\begin{figure}[htb]
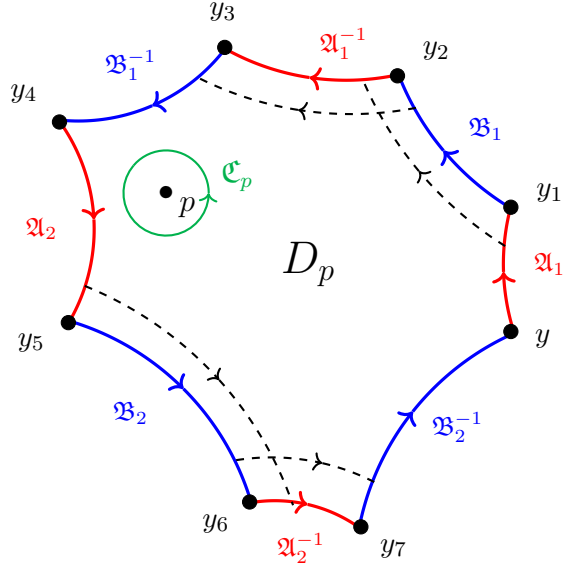

\begin{center}
\tikzpicture[scale=1.05]
\scope[xshift=0cm,yshift=0cm, scale=4.5]
\draw  [very  thick, color=red, <-, domain=182:198] plot ({cosh(0.6)*cos(0)+sinh(0.6)*cos(\x)}, {cosh(0.6)*sin(0)+sinh(0.6)*sin(\x)});
\draw  [very  thick, color=red, domain=166:182] plot ({cosh(0.6)*cos(0)+sinh(0.6)*cos(\x)}, {cosh(0.6)*sin(0)+sinh(0.6)*sin(\x)});
\draw  [very  thick, color=red, <-, domain=261.5:282] plot ({cosh(0.65)*cos(85)+sinh(0.65)*cos(\x)}, {cosh(0.65)*sin(85)+sinh(0.65)*sin(\x)});
\draw  [very  thick, color=red,  domain=241:261.5] plot ({cosh(0.65)*cos(85)+sinh(0.65)*cos(\x)}, {cosh(0.65)*sin(85)+sinh(0.65)*sin(\x)});
\draw  [very  thick, color=red, <-, domain=363:396] plot ({cosh(0.5)*cos(175)+sinh(0.5)*cos(\x)}, {cosh(0.5)*sin(175)+sinh(0.5)*sin(\x)});
\draw  [very  thick, color=red,  domain=330:363] plot ({cosh(0.5)*cos(175)+sinh(0.5)*cos(\x)}, {cosh(0.5)*sin(175)+sinh(0.5)*sin(\x)});
\draw  [very  thick, color=red, <-, domain=77:100] plot ({cosh(0.4)*cos(265)+sinh(0.4)*cos(\x)}, {cosh(0.4)*sin(265)+sinh(0.4)*sin(\x)});
\draw  [very  thick, color=red,  domain=54:77] plot ({cosh(0.4)*cos(265)+sinh(0.4)*cos(\x)}, {cosh(0.4)*sin(265)+sinh(0.4)*sin(\x)});
\draw  [very  thick, color=blue, <-, domain=220:239] plot ({cosh(0.7)*cos(40)+sinh(0.7)*cos(\x)}, {cosh(0.7)*sin(40)+sinh(0.7)*sin(\x)});
\draw  [very  thick, color=blue, domain=201:220] plot ({cosh(0.7)*cos(40)+sinh(0.7)*cos(\x)}, {cosh(0.7)*sin(40)+sinh(0.7)*sin(\x)});
\draw  [very  thick, color=blue, <-, domain=293.5:323] plot ({cosh(0.5)*cos(125)+sinh(0.5)*cos(\x)}, {cosh(0.5)*sin(125)+sinh(0.5)*sin(\x)});
\draw  [very  thick, color=blue, domain=264:293.5] plot ({cosh(0.5)*cos(125)+sinh(0.5)*cos(\x)}, {cosh(0.5)*sin(125)+sinh(0.5)*sin(\x)});
\draw  [very  thick, color=blue, <-, domain=45:73] plot ({cosh(0.7)*cos(225)+sinh(0.7)*cos(\x)}, {cosh(0.7)*sin(225)+sinh(0.7)*sin(\x)});
\draw  [very  thick, color=blue, domain=17:45] plot ({cosh(0.7)*cos(225)+sinh(0.7)*cos(\x)}, {cosh(0.7)*sin(225)+sinh(0.7)*sin(\x)});
\draw  [very  thick, color=blue, <-, domain=141.5:169] plot ({cosh(0.7)*cos(315)+sinh(0.7)*cos(\x)}, {cosh(0.7)*sin(315)+sinh(0.7)*sin(\x)});
\draw  [very  thick, color=blue, domain=114:141.5] plot ({cosh(0.7)*cos(315)+sinh(0.7)*cos(\x)}, {cosh(0.7)*sin(315)+sinh(0.7)*sin(\x)});
\draw  [thick, color=black, dashed, ->, domain=238:220] plot ({cosh(0.9)*cos(40)+sinh(0.9)*cos(\x)}, {cosh(0.9)*sin(40)+sinh(0.9)*sin(\x)});
\draw  [thick, color=black, dashed,  domain=220:203] plot ({cosh(0.9)*cos(40)+sinh(0.9)*cos(\x)}, {cosh(0.9)*sin(40)+sinh(0.9)*sin(\x)});
\draw (0.148,-0.74) [fill=black] circle(0.02cm) ;
\draw  [thick, color=black, dashed, ->, domain=68:42] plot ({cosh(0.9)*cos(225)+sinh(0.9)*cos(\x)}, {cosh(0.9)*sin(225)+sinh(0.9)*sin(\x)});
\draw  [thick, color=black, dashed,  domain=42:19] plot ({cosh(0.9)*cos(225)+sinh(0.9)*cos(\x)}, {cosh(0.9)*sin(225)+sinh(0.9)*sin(\x)});
\draw  [thick, color=black, dashed, ->, domain=281:261] plot ({cosh(0.85)*cos(85)+sinh(0.85)*cos(\x)}, {cosh(0.85)*sin(85)+sinh(0.85)*sin(\x)});
\draw  [thick, color=black, dashed, domain=261:243] plot ({cosh(0.85)*cos(85)+sinh(0.85)*cos(\x)}, {cosh(0.85)*sin(85)+sinh(0.85)*sin(\x)});
\draw  [thick, color=black, dashed, ->, domain=99:77] plot ({cosh(0.6)*cos(265)+sinh(0.6)*cos(\x)}, {cosh(0.6)*sin(265)+sinh(0.6)*sin(\x)});
\draw  [thick, color=black, dashed, domain=77:63] plot ({cosh(0.6)*cos(265)+sinh(0.6)*cos(\x)}, {cosh(0.6)*sin(265)+sinh(0.6)*sin(\x)});
\draw  [thick, color=dgreen,  domain=0:140] plot ({-0.4+0.12*cos(\x)}, {0.2+0.12*sin(\x)});
\draw  [thick, color=dgreen,  domain=140:360,->] plot ({-0.4+0.12*cos(\x)}, {0.2+0.12*sin(\x)});
%
%
\draw (0.57,0.16) [fill=black] circle(0.02cm) ;
\draw (0.251,0.53) [fill=black] circle(0.02cm) ;
\draw (-0.7,0.4) [fill=black] circle(0.02cm) ;
\draw (-0.235,0.61) [fill=black] circle(0.02cm) ;
\draw (-0.675,-0.165) [fill=black] circle(0.02cm) ;
\draw (-0.165,-0.67) [fill=black] circle(0.02cm) ;
\draw (0.57,-0.19) [fill=black] circle(0.02cm) ;

\draw (0.66, -0.21) node{\small $\qq$};
\draw (0.68, 0.2) node{\small $\qq_1$};
\draw (0.36, 0.6) node{\small $\qq_2$};
\draw (-0.24, 0.71) node{\small $\qq_3$};
\draw (-0.80, 0.48) node{\small $\qq_4$};
\draw (-0.78, -0.22) node{\small $\qq_5$};
\draw (-0.26, -0.73) node{\small $\qq_6$};
\draw (0.24, -0.8) node{\small $\qq_7$};

\draw [color=red] (0.68,0) node{\footnotesize $\mA_1$};
\draw [color=blue] (0.5,0.38) node{\footnotesize $\mB_1$};
\draw [color=red] (0.1,0.62) node{\footnotesize $\mA_1^{-1}$};
\draw [color=blue] (-0.5,0.56) node{\footnotesize $\mB_1^{-1}$};
\draw [color=red] (-0.75,0.1) node{\footnotesize $\mA_2$};
\draw [color=blue]  (-0.5,-0.42) node{\footnotesize $\mB_2$};
\draw [color=red] (-0.02,-0.8) node{\footnotesize $\mA_2^{-1}$};
\draw [color=blue]  (0.42,-0.46) node{\footnotesize $\mB_2^{-1}$};
\draw (0,0) node{{\Large $D_p$}};
\draw (-0.4,0.2) node{{$\bullet$}};
\draw (-0.34,0.16) node{{$p$}};
\draw[color=dgreen]  (-0.2,0.25) node{{$\mC_p$}};
\endscope
\endtikzpicture
\caption{A genus-two Riemann surface $\Sigma_p$ with puncture $p$ can be represented in terms of a (punctured) fundamental domain $D_p \subset \tilde \Sigma\smallsetminus\pi^{-1}(p)$. The surface $\Sigma_p$ may be reconstructed from $D_p$ by pairwise identifying inverse cycles with one another under the dashed arrows.  The points $\qq_i \in \tilde \Sigma$ are related to $\qq \in \tilde \Sigma$  as detailed in the caption of figure \ref{fig:1}. The curve $\mC_p$ is homotopic to the boundary curve $\p D_p$.
 \label{fig:2}}
\end{center}
\end{figure}

It is instructive to verify that the relation $[\etah_J, \xih^J]=0$ is obeyed to low orders of $\xih$ and $\etah$ in $b$. To first order, the result readily follows from (\ref{3.xieta}). The verification of the relation to second order is relegated to appendix \ref{sec:A}.

{\lem
\label{lem:benj:vanishing} 
\begin{enumerate}
\itemsep=-0.02in
 \item In \cite{DHS:2023} one defines inductively a family of smooth functions $\Phi ^{I_1 \cdots I_r} {}_J(x)$ of $x\in\Sigma$ by setting for $r\geq 2$
 \bea \label{eq:251114n1}
 \Phi ^{I_1 \cdots I_r} {}_J(x)=\int_\Sigma\mathcal G(x,t)\, \overline\omega^{I_1}(t) \wedge \partial_t\Phi ^{I_2 \cdots I_{r}} {}_J(t),
 \eea
 where $\mathcal G$ is the Arakelov Green function defined in the proof of Lemma \ref{2.lem:1}, and by setting for $r=1$
\bea \label{eq:251114n2}
 \Phi ^{I} {}_J(x)=\int_\Sigma\mathcal G(x,t)\,\overline\omega^{I}(t) \wedge \omega_J(t).
 \eea 
Then one has 
\bea
\p_x \Phi ^{I_1 \cdots I_r}{}_J(x) = f^{I_1 \cdots I_r} {}_J (x,p)  
- { 1 \over h}  \, f^{I_1 \cdots I_{r-1} K} {}_K (x,p) \, \delta ^{I_r}_J ,
\label{phiastrless}
\eea
where the differentials $f^{I_1 \cdots I_r} {}_J (x,p) $ are as in Lemma \ref{2.lem:1}. 
In particular, the right side in this equation is independent of $p$. 
\item If $\lambda=\{ \lambda^1 , \cdots, \lambda ^h \}$ and  $\mu=\{ \mu _1, \cdots, \mu_h\}$  
are in $\mathfrak g^h$ such that $[\mu_I, \lambda^I]=0$,
then the differential $\cJ_{\rm DHS}(x,p;\lambda,\mu)$ is 
independent of $p$, and has the expansion 
\bea
\label{3.lem:3.3}
\cJ_{\rm DHS} (x,\cdot\,;\lambda,\mu) = 
\om_J(x) \lambda^J - \pi \bom^I(x) \mu_I + \sum _{r=1}^\infty \p_x \Phi ^{I_1 \cdots I_r} {}_J(x) 
M_{I_1} \cdots M_{I_r} \lambda^J,
\quad
\eea
where $M_I X = [\mu_I, X]$ for all $X \in \mg$. 
\end{enumerate}}

\begin{proof} 
1. We first prove the case $r=1$ and then show how the case for arbitrary $r \geq 2$ can be derived  from the case $r=1$ via a recursion relation. Equations \eqref{eq:251114n1} and \eqref{eq:251114n2} imply the  following relations,  
\bea
\label{rec:benj}
 \p_x \Phi ^{I}{}_J(x) & = &
- { i \over 2}  \int_\Sigma \partial_x\mathcal G(x,z) \, \bom^{I}(z) \wedge \om_J(z)  ,
\no \\
 \p_x \Phi ^{I_1 \cdots I_r}{}_J(x) & = &
- { i \over 2}  \int_\Sigma \partial_x\mathcal G(x,z) \, \bom^{I_1}(z) \wedge \partial_z\Phi ^{I_2 \cdots I_r}{}_J(z)  ,
\eea
while the analogous equations for $f^I{}_J(x,p)$ and $f ^{I_1 \cdots I_r}{}_J(x,p)$ are given in (\ref{2.r1}) and (\ref{2.r2}) of  Lemma \ref{2.lem:1}, respectively. Expressing the combination on the right side of (\ref{phiastrless}) with the help of (\ref{2.r1}), we observe that the term proportional to $\delta^I_J$ under the parentheses in the integrand of (\ref{2.r1}) cancels so that we obtain,
\bea
f^I{}_J(x,p) - { 1 \over h }  \, f^K{}_K(x,p) \, \delta ^I_J
= \p_x \int _\Sigma \cG(x,t) \Big ( -{i \over 2} \bom^I(t) \wedge \om_J(t) \Big )
\eea
which is readily seen to coincide with $\p_x \Phi^I{}_J(x)$, thereby establishing item 1 for the case $r=1$. For the case $r \geq 2$, we express the combination on the right side of (\ref{phiastrless}) with the help of (\ref{2.r2}) and obtain the following recursion relation, 
\bea
&&  f^{I_1 \cdots I_r} {}_J(x,p)-{1 \over h}   f^{I_1 \cdots I_{r-1}K} {}_K(x,p) \, \delta ^{I_r}_J
\no \\ && \qquad \! \! \!
= {-} {i \over 2} \int _\Sigma  \partial_x\cG(x,\ti) \left (   \bom ^{I_1}(t) \wedge 
\Big \{ f^{I_2 \cdots I_r}{}_J(\ti,p)-{1 \over h} \, \delta ^{I_r}_Jf^{I_2 \cdots 
I_{r-1}K}{}_K(\ti,p) \Big \} \right ) .
\eea
This recursion relation is identical to the second line in \eqref{rec:benj}, so that the identity for $r=1$ implies the identity for any $r$. The result is manifestly independent of $p$,  thereby completing the proof of item 1 of the lemma.\footnote{An alternative proof may be obtained by combining formula (30) of \cite{DHoker:2023khh} with its trace over the indices~$I_r$ and $J$ and using the tracelessness of $\p_x \Phi ^{I_1 \cdots I_r}{}_J(x)$ which follows from the tracelessness of $\p_x \Phi^I{}_J(x)$ with the recursion relation in the second line of (\ref{rec:benj}).}  

\sm

2. Expressing the coefficients $f^{I_1 \cdots I_r}{}_J(x,p)$ in \eqref{jdhs10} in terms of $\p_x \Phi ^{I_1 \cdots I_r}{}_J(x) $ and $ f^{I_1 \cdots I_{r-1} K} {}_K (x,p) \, \delta ^{I_r}_J$ using  (\ref{phiastrless}) and setting $a =\lambda$ and $b = \mu$ we obtain, 
\bea
\cJ_\text{DHS} ^{(1,0)} (x,p;\lambda,\mu) & = &
\om_J(x) \lambda^J 
+ \sum_{r=1}^\infty \Big (\p_x \Phi ^{I_1 \cdots I_r} {}_J(x) 
\no \\ && \hskip 1.1in 
+ {1\over h}f^{I_1...I_{r-1}K}{}_K(x,p) \delta^{I_r}_J \Big ) 
M_{I_1} \cdots M_{I_r} \lambda^J.
\eea
Using the assumption $[ \mu_I ,  \lambda^I]=0$ of the lemma, we see that the effect of the Kronecker $\delta ^{I_r}_J$ factor is to produce the combination $\delta ^{I_r }_J M_{I_r} \lambda ^J = [\mu_J, \lambda ^J]=0$, so that the terms in $1/h$ cancel. Combining the result with the $\cJ_\text{DHS} ^{(0,1)}(x,p;\lambda, \mu)= - \pi \bom ^J(x) \mu_J$ part proves the expression claimed in (\ref{3.lem:3.3}).
\end{proof}

{\cor
\label{3.cor:1}
For $\xih=\xih(y,p)$ and $\etah=\etah(y,p)$ obeying the monodromy conditions (\ref{3.mon}) of Lemma \ref{3.lem:1}, the differential $\cJ_{\rm DHS}(x,p;\xih,\etah)$  is given by 
\bea
\cJ_{\rm DHS} (x,\cdot\, ;\xih, \etah) = \om_J(x) \xih^J - \pi \bom^I(x) \etah_I + \sum _{r=1}^\infty \p_x \Phi ^{I_1 \cdots I_r} {}_J(x) \hat H_{I_1} \cdots \hat H_{I_r} \xih^J. 
\label{dhsxieta}
\eea
where $\hat H _I X = [\hat \eta _I, X]$ for $X \in \mg$.}
\sm

\begin{proof}
This follows from combining item 2 of Lemma \ref{lem:benj:vanishing} and Lemma \ref{3.lem:2}. 
\end{proof}

{\lem
\label{new:lem:benjamin} 
If $\lambda,\mu \in \mg^h$ are such that $[\mu_I, \lambda^I]=0$, then for any  $y\in\tilde\Sigma$ and any $K \in \{ 1, \cdots, h\}$, the functions  $p\mapsto \mu_{\mathrm{DHS}} (\mA^K,y,p;\lambda,\mu)$ and $p\mapsto\mu_{\mathrm{DHS}} (\mB_K,y,p;\lambda,\mu)$ (defined for $p$ in the complement in $\tilde\Sigma$ of $\mathrm{Aut}(\tilde\Sigma/\Sigma)\cdot y$) are constant.  The maps taking $(y,\lambda,\mu)$, where $y\in\tilde\Sigma$ and $\lambda,\mu \in \mg^h$  such that $[\mu_I, \lambda^I]=0$, to these constant values will be denoted
$(y,\lambda,\mu)\mapsto \mu_{\mathrm{DHS}} (\mA^K,y,\cdot \, ;\lambda,\mu)$ and $(y,\lambda,\mu)\mapsto \mu_{\mathrm{DHS}} (\mB_K,y,\cdot \, ;\lambda,\mu)$. 
}

\begin{proof}
It follows from Lemma \ref{lem:benj:vanishing} and from the assumption on $\lambda,\mu$ that 
$\mathcal J_{\mathrm{DHS}}(x,p;\lambda,\mu)$ is independent on $p$. 
These functions, being defined as the holonomies of the connection 
$d_x-\mathcal J_{\mathrm{DHS}}(x,p;\lambda,\mu)$ based at $y$, are therefore also independent of $p$. 
\end{proof}

{\cor
\label{3.coroll2:BIS}
For any $y\in\tilde\Sigma$, the maps
$p\mapsto \xih(y,p), \ \etah(y,p)\in\mg_b^h$, where $p\in\tilde\Sigma\smallsetminus \mathrm{Aut}(\tilde\Sigma/\Sigma)\cdot y$ given by
Lemma \ref{3.lem:1}, are constant; we denote by $\xih(y,\cdot), \ \etah(y,\cdot)$
these constant values. The maps $y\mapsto \xih(y,\cdot), \ \etah(y,\cdot)$
are smooth functions of $y\in\tilde\Sigma$, and therefore well-defined for $y=p$; their values 
at this point satisfy (and are uniquely determined by) the monodromy conditions
\bea
\label{3.monp}
\mu_{\rm DHS}\big(\mA^K,p, \cdot\,;\xih(p,\cdot\,), \etah(p,\cdot\,) \big) & = & 1,
\no \\
\mu_{\rm DHS}\big(\mB_K,p, \cdot\,;\xih(p,\cdot\,), \etah(p,\cdot\,) \big) & = &  e^{2\pi i b_K}. 
\eea
}

\begin{proof}
The identity $[\hat\eta_I(y,p), \hat\xi^I(y,p)]=0$ follows from Lemma \ref{3.lem:2}.  This fact, combined with
Lemma \ref{new:lem:benjamin}, enables us to rewrite the system \eqref{3.mon} as follows 
\bea
\mu_{\rm DHS}\big(\mA^K,y, \cdot\,;\xih(y,p), \etah(y,p) \big) & = & 1,
\no \\
\mu_{\rm DHS}\big(\mB_K,y, \cdot\,;\xih(y,p), \etah(y,p) \big) & = &  e^{2\pi i b_K}.
\eea
The self-map of $\mathfrak g^{2h}_b$ given by 
\bea
(\lambda, \mu)\mapsto 
\Big ( \mathrm{log} \, \mu_{\rm DHS}\big(\mA^K,y, \cdot \, ;\lambda,\mu \big),
\mathrm{log} \, \mu_{\rm DHS}\big(\mB_K,y, \cdot \, ;\lambda,\mu \big) \Big )_{K=1,\cdots, h}
\eea 
is smooth in $y$, triangular with respect to the  degree filtration of $\mathfrak g^{2h}_b$, 
and bijective. The inverse of this self-map is therefore also smooth in $y$. 
The above system implies that the pair $(\xih(y,p), \etah(y,p))$ is the image of $(0,2\pi i b)$ by this 
inverse map, which proves at the same time its independence in $p$ and its smoothness in $y$. 
\end{proof}

{\deff \label{3deffff}
Henceforth, the notation $\xih,\etah$ will refer to the elements $\xih(p,\cdot\,), \etah(p,\cdot\,)$ of~$\mg_b^h$  which solve the monodromy conditions \eqref{3.monp} from Corollary \ref{3.coroll2:BIS} (i.e.\ abandoning the notation of Lemma \ref{3.lem:2} and Corollary \ref{3.cor:1}).
}

\sm

An explicit evaluation of $\xih$ and $\etah$ to second order in $b$ is relegated to appendix \ref{sec:A}. Unlike the contributions linear in $b_I$ spelled out in (\ref{3.xieta}) which are independent of the moduli\footnote{\label{topfoot} Among the moduli of $\Sigma_p$, we include here and elsewhere also the topological datum of a choice of a preferred preimage $\qq\in\tilde\Sigma_p$ which determines the action of the fundamental group $\pi_1(\Sigma_p,\qq)$ on~$\tilde\Sigma_p$.} of $\Sigma_p$ (besides $b^I= Y^{IK} b_K$), all the higher order contributions in~$b$ to~$\etah_I$ and~$\xih^I$ will be non-trivial  functions of the moduli of $\Sigma_p$.

\subsubsection{Construction of the gauge transformation $\cU_{\rm DHS}(x,p)$}

Finally, we define the gauge transformation $\cU_{\rm DHS}(x,p)$ as follows.
{\deff
\label{3.defU}
For $\xih$ and $\etah$ as in Definition~\ref{3deffff}, the gauge transformation $\cU_{\rm DHS}(x,p) = \cU_{\rm DHS}(x,p;\xih,\etah)$ is defined to be the specialization $\gDHS(x,p,p;\xih,\etah)$ of $\gDHS$, namely the unique solution of the differential equation
\bea
\label{3.cUa}
d_x \, \cU_{\rm DHS}(x,p;\xih,\etah) = \cJ_{\rm DHS}(x,\cdot\,;\xih, \etah) \, \cU_{\rm DHS}(x,p;\xih,\etah)
\eea
with the boundary condition $\cU_{\rm DHS}(p,p;\xih,\etah)=1$.}

\sm

It follows from Lemma \ref{3.lem:2} that the gauge transformation $\cU_{\rm DHS}(x,p)$ is a smooth function of $x,p \in \tilde \Sigma$. It can be written as a path-ordered exponential,
\bea
\label{3.cUb}
\cU_{\rm DHS}(x,p;\xih, \etah) = \text{P} \exp \int ^x_p \cJ_\text{DHS} (\ti,\cdot\,; \xih, \etah),
\eea
with $\cJ_\text{DHS} (\ti,\cdot\,; \xih, \etah)$ as in (\ref{dhsxieta}). Notice that $\cU_{\rm DHS}(x,p;\xih,\etah)$ obeys the following monodromy conditions equivalent to (\ref{3.monp}) which determine both $\xih^I$ and $\etah_I$ as a Lie series in $b_K$ (see section \ref{sec:3.6} below for details),
\bea
\label{3.mon1}
\cU_{\rm DHS}(\mA^K \cdot p, p;\xih, \etah) & = & 1,
\no \\
\cU_{\rm DHS}(\mB_K \cdot p , p;\xih, \etah) & = &  e^{2 \pi i b_K}.
\eea
{\cor \label{dhsrmk}
The function $ \cU_{\rm DHS}(x, p;\xih, \etah) $ in (\ref{3.cUb}) is a solution of the equations (\ref{monUDHS}), namely it satisfies the monodromy conditions
\bea
\label{3.eqmon}
\cU_{\rm DHS}(\mA^K \cdot x, p;\xih, \etah) & = & \cU_{\rm DHS}(x, p;\xih, \etah) ,
\no \\
\cU_{\rm DHS}(\mB_K \cdot x , p;\xih, \etah) & = & \cU_{\rm DHS}(x, p;\xih, \etah)   e^{2 \pi i b_K}.
\eea
}
\begin{proof}
The corollary is readily proven by combining the general property (\ref{1.eqmon}) of path-ordered exponentials of single-valued connections with the monodromy conditions (\ref{3.mon1}).
\end{proof}

\subsection{Relating the connections $d-\cJ_{\rm DHS}$ and $d-\cK_{\rm E}$}

In this subsection we shall show that the gauge transformation $\cU_{\rm DHS}(x,p)$ introduced in Definition \ref{3.defU}, combined with a suitable automorphism of the Lie algebra~$\mg$, relates $\cJ_\text{DHS}$ and $\cK_\text{E}$. The result is summarized in the following theorem.
{\thm 
\label{3.thm:1}
The flat connections $d_x - \cK_{\rm E} (x,p; a,b) $ and $d_x - \cJ _{\rm DHS} (x,p;\ta,\tb)$ are related by the gauge transformation $\cU_{\rm DHS}(x,p) = \cU_{\rm DHS}(x,p;\xih,\etah)$ defined by \eqref{3.cUb}, whose arguments $\xih, \etah \in \mg_b^h $ are the uniquely determined solutions
of \eqref{3.mon1}, so that
\bea
\label{3.KJ}
d_x - \cK_{\rm E} (x,p; a,b)= 
\cU_{\rm DHS} (x,p;\xih,\etah)^{-1} \Big (d_x - \cJ _{\rm DHS} (x,p;\ta,\tb) \Big ) \cU_{\rm DHS}  (x,p;\xih,\etah).
\eea
The elements $\ta, \tb \in \mg^h$ are uniquely determined in terms of $a,b$, and the moduli of~$\Sigma_p$,
 by the linearity of $\cK_{\rm E}(x,p;a,b)$ in~$a$, the linearity of $\cJ_{\rm DHS}^{(1,0)}(x,p;\ta,\tb)$ in $\ta$ and the following residue matching relations, 
\bea
\label{3.hats}
\tb_I = \etah_I,
\hskip 1in 
[b_I, a^I] =  [\etah_I, \ta^I - \xih^I].
\eea
These conditions for $\hat a^I$ and $\hat b_I$ may be recursively solved as formal series in (the non-commutative components of) $b$, the leading order solution being given by $\hat a^I= a^I+ \cO(b)$ and $\hat b_I = b_I + \cO(b^2)$. }

\begin{proof} We begin by noting that the right side of (\ref{3.KJ}) has the same monodromy as~$\cK_\text{E}$ given in (\ref{2.E1}) for \textit{arbitrary} $\ta, \tb$, thanks to the monodromy condition on the gauge transformation 
$\cU_{\rm DHS}(x,p;\xih,\etah)$ of Corollary \ref{dhsrmk}. Next, requiring the right side of (\ref{3.KJ}) to be a form of type $(1,0)$ in~$x$, as indeed $\cK_\text{E}$ should be, is equivalent to requiring the vanishing on its $(0,1)$ component, which is equivalent to the following relation
\bea
\label{3.JU}
\bar \p_x \, \cU_{\rm DHS}(x,p;\xih,\etah)\, \, \cU_{\rm DHS}(x,p;\xih,\etah)^{-1} = \cJ _\text{DHS}^{(0,1)} (x,p;\ta,\tb).
 \eea
By construction in (\ref{3.cUa}), we have 
 \bea
 \bar \p_x \, \cU_{\rm DHS}(x,p;\xih,\etah)\, \, \cU_{\rm DHS}(x,p;\xih,\etah)^{-1}= \cJ _\text{DHS}^{(0,1)} (x,p;\xih,\etah),
 \eea 
 while we have  $\cJ _\text{DHS}^{(0,1)} (x,p;\ta,\tb) = - \pi \bom^I (x) \tb_I$ in view of item 2 of Theorem \ref{2.thm:2}. Combining these results with (\ref{3.JU}) reduces to $\tb _I = \etah_I$, which gives the first relation in (\ref{3.hats}). 
 
 \sm
 
To determine $\ta$, we substitute  the $(1,0)$ component of (\ref{3.cUa}) into (\ref{3.KJ}) and use the linearity of $\cJ_\text{DHS}^{(1,0)}  (x,p; \ta, \etah)$ in $\ta$,  to arrive at the following simplified relation,
\bea
\label{3.JKa}
\cK_\text{E} (x,p;a,b) = 
\cU_{\rm DHS}(x,p;\xih, \etah)^{-1} \,  \cJ_\text{DHS}^{(1,0)}  (x,p; \ta - \xih, \etah) \, \cU_{\rm DHS}(x,p;\xih, \etah).
\eea
By Corollary \ref{dhsrmk}, the right side has the monodromies required on $\cK_\text{E}(x,p;a,b)$ and is meromorphic in~$x$, because it is a $(1,0)$ form which satisfies the Maurer--Cartan equation \eqref{1.MC} (recall that gauge transformations preserve flatness). Moreover, it is clear by construction that its only poles in~$x$ are placed at all pre-images $\pi^{-1}(p)$ of~$p$ on $\tilde\Sigma$, and are all simple. In view of Theorem \ref{2.thm:1}, to identify the right side with $\cK_\text{E}(x,p;a,b)$ it remains to match the residues of their poles in~$x$, which leads to the condition
\bea
\label{3.res-m}
{} [b_I, a^I] =   [\etah_I, \ta^I - \xih^I]  
\eea
and gives the second relation in (\ref{3.hats}). Using the lowest-order solution of the monodromy conditions for $\xih$ and $\etah$, given in (\ref{3.xieta}), reduces the conditions of  (\ref{3.hats}) to $\hat b_I = b_I + \cO(b^2) $ and $[b_I, a^I]= [b_I, \hat a^I] + \cO(b)$ whose solution is the one stated in the last line of Theorem \ref{3.thm:1},  thereby completing its proof. 
\end{proof}

{\rmk
Equation (\ref{3.KJ}) of Theorem \ref{3.thm:1}  is invariant under right-multiplication of the gauge transformation by an $x$-independent element $\cV \in \exp(\mg)$ accompanied by a conjugation of $a$ and $b$, while leaving $\xih^I, \etah_I, \hat a^I$ and $\hat b_I$ invariant,
\bea
\cU_{\rm DHS}(x,p;\xih,\etah) \to \cU_{\rm DHS}(x,p;\xih, \etah) \, \cV ,
\hskip 1in
\begin{cases} a ^J  \to \,  \cV \, a ^J \, \cV^{-1} , \\ b_I \,  \to \, \cV \, b_I \, \cV^{-1} . \end{cases}
\eea
The residue-matching conditions transform covariantly as follows,
\bea
\label{3.hats.1}
\tb_I = \etah_I,
\hskip 1in
\cV [b_I, a^I] \cV^{-1} =  [\etah_I, \ta^I - \xih^I].
\eea
Linearity of $\cK_{\rm E}(x,p;a,b)$ in the generators $a^I$ requires restricting to $\cV \in \exp(\mg_b)$. In particular, changing the base point of the path-ordered integral in (\ref{3.cUb}) from $p$ to an arbitrary point $p' \in \tilde \Sigma$ is equivalent to multiplying $\cU(x,p;\xih,\etah)$ to the right by the $x$-independent factor $\cV= \cU(p',p;\xih,\etah)^{-1} \in \exp(\mg_b)$. }

{\rmk
\label{BLrem}
Upon restriction to genus $h=1$, the results in Theorem \ref{3.thm:1} and its proof reproduce
the automorphism $(\hat a,\hat b) = (a{+}\pi b/(\Im \tau), b)$ and the gauge element (\ref{h1.11}) 
mapping the Brown--Levin connection $d-{\cal J}_{\rm DHS} |_{h=1}$ to that of Calaque--Enriquez--Etingof 
$d-{\cal K}_{\rm E} |_{h=1}$ via (\ref{h1.12}). This can be seen from the fact that any bracket $[b_I,b_J]$
vanishes at $h=1$, which truncates the Lie-series expansion of $\xih$ and $\etah$ to their first order $\pi b/(\Im \tau)$ and~$b$, respectively, see (\ref{3.xieta}). As a consequence, the gauge transformation (\ref{3.cUb}) reduces to the path-ordered exponential of $\pi (dx {-} d\bar x) b/(\Im \tau)$ obtained from (\ref{dhsxieta}) which matches 
$\exp \{   2\pi i \, b \, (\Im x)/(\Im \tau)   \}$ in (\ref{h1.11}).

\sm

The first part $\hat a = a{+}\pi b/(\Im \tau)$ of the automorphism at $h=1$ follows
from comparison of $a$ and $\hat a$ in (\ref{3.JKa}) and recalling the restriction 
$\xih = \pi b/(\Im \tau)$ at genus one. Note that the genus-one generators correspond
to the placement of uppercase and lowercase indices according to $a= a^1, \ b= b_1$ as well as 
$\xih = \xih^1, \ \etah = \etah_1$.
}

\subsection{Iterative construction relating  $\cK_\text{E}$ and $\cJ_\text{DHS}$}
\label{sec:3.6}

In this last subsection, we shall outline an iterative procedure for the explicit construction of the gauge transformation $\cU_{\rm DHS}(x,p)$ in (\ref{3.JKa}) and the automorphism  $a \cup b \rightarrow \ta \cup \tb$ to arbitrary  order as a Lie series in $b$. This procedure will express the Enriquez differentials $g^{I_1 \cdots I_r}{}_J(x,p)$ in (\ref{2.Kexp})  in terms of the DHS differentials $f^{I_1 \cdots I_r}{}_J(x,p)$ in (\ref{2.lem:1}) and their iterated integrals. Calculations to low orders are relegated to appendices~\ref{sec:A} and~\ref{sec:B}.

\subsubsection{Solving for $\hat \xi$ in terms of $\hat \eta$}

The starting point is the path-ordered exponential $\cU_\text{DHS}(x,p;\xih,\etah)$ given in (\ref{3.cUb}), where $\xih, \etah \in \mg_b^h$ satisfy the relation $[\etah_I, \xih^I]=0$. The monodromy conditions of (\ref{3.mon1}) may be solved in two step. First, the $\mA$-monodromy conditions are solved for $\xih$ in terms of $\etah$, as stated in the lemma below. Second, the $\mB$-monodromy conditions may subsequently be solved for both $\xih$ and $\etah$ as functions of $b$, which will be done in subsection \ref{3.sec:3.3} below.
{\lem
\label{3.lem:4}
The solution to the monodromy conditions $\cU_{\rm DHS} ( \mA^K \cdot  p  , p; \xih, \etah)=1$ of (\ref{3.mon1}) for $\xih$ as a function of $\etah$ is  unique  and  given by the following associative series 
\bea
\label{A.Lie}
\xih^I =  \sum_{ r=1 }^\infty \XX^{I J_1 \cdots J_r}  \etah _{J_1} \cdots \etah_{J_r}. 
\eea

\begin{enumerate}
\itemsep=0in
\item  The coefficients $ \XX $ depend on the moduli of $\Sigma$ but are independent of the point $p$.
\item They obey the following shuffle relations (see (\ref{appbb.02}), (\ref{appbb.in}) for the shuffle product)
\bea
\XX^{I (J_1 \cdots J_r \shuffle  K_1 \cdots K_s)} =0, \hskip 1in r,s \geq 1 .
\label{shprop}
\eea
\item They are invariant under cyclic permutations
of their indices, 
\bea
\XX^{I J_1 J_2 \cdots J_r} = \XX^{ J_1 J_2 \cdots J_r I } .
 \label{appbb.06}
\eea
\end{enumerate}
}
The proof of the lemma is presented in appendix \ref{sec:C}.   The leading order solution of (\ref{3.xieta}) gives $\XX^{IJ} = \pi Y^{IJ}$ while higher orders will be evaluated in appendices \ref{sec:A} and \ref{sec:B}.

\subsubsection{Applying the gauge transformation $\cU_\text{DHS}$ to $\cJ_\text{DHS}^{(1,0)}$}

The $\xih$ dependence of $\cU_{\rm DHS}(x,p;\xih, \etah)$ may be eliminated  using the expansion (\ref{A.Lie}) of Lemma \ref{3.lem:4}. The resulting  gauge transformation may be expanded in an associative power series in $\etah$ as follows,
\bea
 \label{reinstated} 
\cU_{\rm DHS} \big(x,p;\xih,\etah \big)^{-1} 
= 1 + \sum_{r=1}^{\infty}  \TT^{I_1 \cdots I_r}(x,p) \, \etah_{I_1} \cdots \etah_{I_r} .
\eea
The path-ordered exponential (\ref{3.cUb}) determines the coefficients $ \TT^{I_1 \cdots I_r}(x,p)$  in terms of DHS polylogarithms and the  coefficients $\XX^{I J_1 \cdots J_r}$ defined in  (\ref{A.Lie}), see (\ref{appbb.19}) for ranks $r=1,2$. The key relation (\ref{3.JKa}) is expressed instead in terms of the  following Lie series 
\bea
\cU_{\rm DHS} \big(x,p;\xih,\etah \big)^{-1} X \, \cU_{\rm DHS} \big(x,p;\xih,\etah \big) 
= X + \sum_{r=1}^{\infty}  \TT^{I_1 \cdots I_r}(x,p) \, \hat H_{I_1} \cdots \hat H_{I_r} X,
 \label{dhse.G} 
\eea
where again $\hat H_I X = [\etah_I, X]$ for all $X \in \mg$.

\sm

Applying the gauge transformation $\cU_{\rm DHS}(x,p;\xih, \etah)$ to $\cJ_\text{DHS}(x,p;X{+}\xih, \etah)$ produces a connection, denoted by $\cK(x,p;X, \etah)$, for an arbitrary $X \in \mg^h$,
\bea
\cK(x,p;X, \etah) 
= \cU_{\rm DHS}(x,p;\xih, \etah)^{-1} \cJ_\text{DHS}^{(1,0)}(x,p;X, \etah) \, \cU_{\rm DHS}(x,p;\xih, \etah)
\label{KnonE}
\eea
which is meromorphic in $x$ and  whose Lie series  in powers of $\hat H_I$ may be written as follows,
\bea
\label{3.K}
\cK(x,p;X, \etah) = \om_J(x) X^J + \sum _{r=1}^\infty h^{I_1 \cdots I_r}{}_J (x,p) \hat H_{I_1} \cdots \hat H_{I_r} X^J . 
\eea
The quantities $h^{I_1 \cdots I_r}{}_J(x,p)$ are $(1,0)$-forms in $x$ and scalars in $p$, which may be viewed as intermediate objects  between the  $f$-tensors and the $g$-kernels. They are meromorphic in $x$ but neither in $p$ nor in the moduli of $\Sigma$, and are simply related to the $f$-tensors, as may be seen by combining the expansions for $\cU_{\rm DHS}(x,p;\xih, \etah)^{-1}$ and $\cJ_\text{DHS}^{(1,0)}(x,p;X, \etah)$,
\begin{align}
 h^{I_1 \cdots I_r}{}_J(x,p) &=   f^{I_1  \cdots I_r }{}_J(x,p)  + \sum_{j=1}^{r} \TT^{I_1  \cdots I_j }(x,p)   f^{ I_{j+1}\cdots I_r}{}_J(x,p)  ,
 \label{dhse.u}
\end{align} 
where we set $f^\emptyset {}_J(x,p)=\om_J(x)$, as usual.

\subsubsection{Implementing the automorphism $a \cup b \to \hat a \cup \hat b$}
\label{3.sec:3.3}

To extract the connection $\cK_\text{E}$ from $\cK$ in (\ref{KnonE}), it remains to implement four operations
\begin{enumerate}
\itemsep 0in
\item solving the $\mB$-monodromy condition $\cU_\text{DHS} (\mB_K \cdot p, p; \xih, \etah)=e^{2 \pi i b_K}$ of (\ref{3.mon1}), after having eliminated $\xih$ using (\ref{A.Lie}), for $\etah$ in terms of Lie series in $b$ as follows
\bea
\label{exphatb}
\etah_I = b_I  + \sum_{r=1}^\infty \LL_I{}^{ J_1 \cdots J_r K} B_{J_1} \cdots B_{J_{r}} b_{K},
\eea
where $B_I X = [b_I, X]$ for all $X \in \mg$,  the coefficients $ \LL_I{}^{ J_1 \cdots J_r K}$ are non-holomorphic functions of the moduli of $\Sigma_p$, but this dependence is suppressed throughout;
\item setting $\etah = \hat b$ so that the last argument of  $\cJ_\text{DHS}^{(1,0)}$ equals $\hat b$;
\item setting $X^J = \hat a^J {-} \xih^J$ in (\ref{3.K}) and solving the residue matching equation $[b_I, a^I] = [ \hat b_I , \hat a^I {-} \xih^I]$ of (\ref{3.res-m}), with the help of the linearity of $\cK_\text{E}(x,p;a,b)$ in $a$, the linearity of $\cK(x,p;X,\etah)$ in $X$, and the fact that $\ta,\xih,\etah \in \mg^h$, to obtain a Lie series in $b$
\bea
\label{exphatb1}
\ta^I - \xih^I = a^I + \sum_{r=1}^\infty \MM^{I I_1 \cdots I_r} {}_J  B_{I_1} \cdots B_{I_r} a^J,
\eea 
which determines the coefficients $\MM^{I I_1 \cdots I_r} {}_J $
in terms of the $ \LL_I{}^{ J_1 \cdots J_r K}$ in (\ref{exphatb});
\item identifying
\bea
\label{3.KKE}
\cK_\text{E}(x,p;a,b) = \cK(x,p;\hat a-\xih, \hat b) .
\eea
\end{enumerate}
The residue matching condition provides a convenient relation between $\hat H$ and  $B$, 
\bea
\label{3.HB}
\hat H_J = B_J - \sum_{r=1}^\infty \cM^{K I_1 \cdots I_r}{}_J  \hat H_K B_{I_1} \cdots B_{I_r}, 
\eea
which may be solved iteratively for $\hat H $ as a series in $B$.

{\cor \label{cor:mshuf} The coefficients $\cL$ in the Lie series expansion (\ref{exphatb}) of $\etah_I$ may alternatively be expressed in terms of the following associative series expansion
\bea
\label{3.bsh}
\etah_J = b_J - \sum_{r=2}^\infty \cM_\shuffle^{I_1 \cdots I_r}{}_J  b_{I_1} \cdots b_{I_r}, 
\eea
where the coefficients $\cM_\shuffle^{I_1 \cdots I_r}{}_J $ are given
in terms of the coefficients $\cM^{I_1 \cdots I_r}{}_J$  in the expansion (\ref{exphatb1}) of  $\hat a^I- \xih^I$ by
\begin{align}
\cM_\shuffle^{I_1 \cdots I_r}{}_J  &= \cM^{I_1 \cdots I_r}{}_J  + \sum_{\ell=1}^{r-2} (-1)^\ell \sum_{2\leq j_1 < j_2 < \cdots < j_\ell}^{r-1}
\cM^{I_1 I_2 \cdots I_{j_1}}{}_{K_1}  \cM^{K_1 I_{j_1+1}  \cdots I_{j_2}  }{}_{K_2}  \times  \cdots \notag \\
&\quad\quad\quad\quad\quad\quad\quad
\times \cdots
\cM^{K_{\ell-1} I_{j_{\ell-1}+1}  \cdots I_{j_\ell}  }{}_{K_\ell} 
\cM^{K_\ell I_{j_\ell+1}  \cdots I_{r}  }{}_{J} 
\label{mshuf}
\end{align}
and obey shuffle relations similar to those in (\ref{shprop}),
\bea
\cM_\shuffle^{I_1 \cdots I_r \shuffle K_1 \cdots K_s}{}_J = 0, \ \ \ \ \ \ r,s\geq 1.
\label{mshufrel}
\eea}
The proof of the corollary is given  in appendix \ref{sec:cor}. We reiterate that, even though this is
not exposed in our notation, all of $\cL_J{}^{I_1\cdots I_r},\cM^{I_1\cdots I_r}{}_J,\cM_\shuffle^{I_1\cdots I_r}{}_J$ with $r\geq 2$ depend non-holomorphically on $p$ and the moduli of $\Sigma$.

\subsubsection{Expressing the $g$-differentials in terms of the $f$-differentials}

The explicit relations between the  $g$-differentials and the  $f$-differentials may be obtained by combining the relation (\ref{3.KKE})  with the expansion (\ref{3.K}),
\bea
\cK_\text{E}(x,p;a,b) = \om_J(x) (\hat a^J - \xih^J) + \sum_{r=1}^ \infty h^{I_1 \cdots I_r} {}_J(x,p) \hat H_{I_1} \cdots \hat H_{I_r} ( \hat a^J - \xi^J)
\eea
and then expressing $\hat a - \xih$ in terms of $a,b$ using (\ref{exphatb1}), eliminating $\hat H$ in favor of $B$ using (\ref{3.HB}), and expressing the functions $h$ in terms of the tensors $f$ using (\ref{dhse.u}). 
{\prop
\label{3.prop:5}
Up to rank four, the expressions for $g$ in terms of $f$ are determined by combining the relations between $g$ and $h$ given to rank four by 
 \bea
 \label{gtoh.main} 
g^{I_1}{}_J(x,p) &= & h^{I_1}{}_J(x,p) + \omega_K(x) \MM^{KI_1}{}_J(p),
\no  \\
g^{I_1 I_2}{}_J(x,p) &= & h^{I_1 I_2}{}_J(x,p) 
 \! + \! h^{I_1 }{}_K(x,p) \MM^{KI_2}{}_J(p)
 \no \\ &&
 -  h^{K }{}_J(x,p) \MM^{I_1 I_2}{}_K(p)
\! + \! \omega_K(x) \MM^{KI_1 I_2}{}_J(p),
\no \\
g^{I_1 I_2 I_3}{}_J(x,p) & = &
 h^{I_1 I_2 I_3}{}_J(x,p)  
 - \MM^{I_1 I_2}{}_K(p)  h^{K I_3}{}_J(x,p) 
  - \MM^{I_2 I_3}{}_K(p) h^{ I_1 K }{}_J(x,p) 
 \no \\ &&
+ h^{I_1 I_2}{}_K(x,p) \MM^{K I_3}{}_J(p)
- \MM^{I_1 I_2}{}_K(p) h^{ K }{}_L(x,p) \MM^{L I_3}{}_J(p)
\no \\ &&
+ \big \{  \MM^{I_1 I_2}{}_L(p)  \MM^{L I_3}{}_K(p) 
- \MM^{I_1 I_2 I_3}{}_K(p) \big \} h^{ K }{}_J(x,p) 
\no \\ &&
 + h^{ I_1 }{}_K(x,p) \MM^{K I_2 I_3}{}_J(p)
+ \omega_K(x) \MM^{K I_1 I_2 I_3}{}_J(p) ,
 \eea
 and the relations (\ref{dhse.u}) between $h$ and $f$ expanded  to rank four
 \bea
h^{I_1} {}_J (x,p) & = & f^{I_1}{}_J(x,p) + \cT^{I_1}(x,p) \om_J(x) ,
\no \\
h^{I_1 I_2} {}_J(x,p) & = & f^{I_1 I_2} {}_J(x,p) + \cT^{I_1} (x,p) f^{I_2}{}_J(x,p) + \cT^{I_1 I_2}(x,p) \om_J(x),
\no \\
h^{I_1 I_2 I_3} {}_J(x,p) & = & f^{I_1 I_2 I_3} {}_J(x,p) + \cT^{I_1} (x,p) f^{I_2I_3}{}_J(x,p) 
\no \\ &&
+ \cT^{I_1 I_2}(x,p) f^{I_3} {}_J(x,p) + \cT^{I_1 I_2 I_3}(x,p) \om_J(x),
\eea
resulting, for example, in the lowest two ranks in the formulas of  (\ref{intro.41}).}  

\sm

The detailed proof for rank two, including
the calculation of the coefficients $\XX$, $\cM$, $\cT$ and $\cL$ involved in the derivation, will be given in appendix \ref{sec:A}. The proof for rank three will be presented in appendix \ref{sec:B}, while  the derivation for rank four is left to the reader.

\sm

Further details on the implementation of the procedure outlined in this subsection may be found in appendix \ref{sec:B}, including 
\begin{itemize}
\itemsep=0in
\item a more extensive discussion of the symmetry properties of the coefficients
$\LL_I{}^{ J_1 \cdots J_r K}$ and $ \MM^{I I_1 \cdots I_r} {}_J$ in section \ref{sec:B.4}
(see (\ref{shprop}), (\ref{appbb.06}) and section \ref{sec:shprf} for those of $\XX^{I J_1 \cdots J_r}$); 
\item the explicit form of the gauge transformation (\ref{reinstated}) in sections \ref{sec:B.2} and \ref{sec:B.3};  and
\item the detailed construction of the automorphism in sections \ref{sec:B.4} and \ref{sec:B.5}. 
\end{itemize}

\newpage

\section{Gauge transforming $\cK_\text{E}$ to $\cJ_\text{DHS} $ }
\setcounter{equation}{0}
\label{sec:4}

In this section we give the second explicit construction of a gauge transformation and of an automorphism of the Lie algebra $\mg$ that relate the connection $d-\cK_\text{E}$ with the connection $d-\cJ_\text{DHS}$. Although the construction of this section will turn out to be inverse to that of the previous one by the results of section \ref{sec:new5}, its separate construction has advantages for both structural understanding of and practical calculations with the polylogarithm functions associated with $\cK_\text{E}$ and $\cJ_\text{DHS}$, see section \ref{sec:5}.

The construction of the gauge transformation in section \ref{sec:4.1} is divided into two steps. In the first step we exploit the $(0,1)$ component of $\cJ_{\rm DHS}$, which is purely anti-holomorphic, to reproduce the anti-holomorphic part of $d-\cJ_\text{DHS}$. 
In the second step, we then exploit the differential $\cK_\text{E}$ to complete the construction of a gauge transformation $\cU_{\rm E}(x,p)$ which reproduces the desired monodromies. Finally, in section \ref{4.ssecRelCon} we construct a Lie algebra automorphism in the form of an appropriate redefinition $a\cup b\to \ach\cup\bch$ to match the residue of the gauge transformation of $\cK_{\rm E}(x,p;\ach,\bch)$ with that of $\cJ_{\rm DHS}(x,p;a,b)$. Explicit formulas at low degree for the gauge transformation and for the automorphism are given in section \ref{sec4.5}.

\subsection{Construction of the gauge transformation $\cU_{\rm E}$}
\label{sec:4.1}

We now proceed with presenting the two steps of the construction of the gauge transformation $\cU_{\rm E}(x,p)$, which will be combined to define $\cU_{\rm E}(x,p)$ in subsection \ref{sec:5.1.3}.

\subsubsection{Step 1: holomorphicity}
\label{J01sec}

Recall from Theorem \ref{2.thm:2} that $\cJ_{\rm DHS}^{(0,1)}(x,p;a,b)=-\pi b_I\bar\omega^I(x)$. It is therefore independent of the point~$p$ and the generators~$a^J$ of $\mg$, which will be removed from the notation throughout this section. We define $\gminus(x,y;b)$ as the unique solution of
\begin{equation}\label{4.defU-}
d_x\gminus(x,y;b)=\cJ_{\rm DHS}^{(0,1)}(x;b)\,\gminus(x,y;b),
\end{equation}
such that $\gminus(y,y;b)=1$. It can be explicitly constructed as a path-ordered exponential,
\begin{equation}
\gminus(x,y;b)=\text{P} \exp \int _y^x\cJ_{\rm DHS}^{(0,1)}(t;b).
\end{equation}
The role of $\gminus(x,y;b)$ for the construction of the gauge transformation $\cU_{\rm E}(x,p)$ will be to recast the anti-holomorphic part of $\cJ_{\rm DHS}$ out of a differential which is purely holomorphic in $x \in \tilde\Sigma_p$. This can be intuitively understood as follows: consider the differential $\cJ(x,p;a,b)$, defined by\footnote{Note that $\gminus(x,y;b)$ is well-defined also when taking the base-point $y=p$ as will be done in Definition~\ref{nicedef} below.}
\begin{equation}\label{4:defJ}
\cJ(x,p;a,b)=\gminus(x,p;b)^{-1}\cJ^{(1,0)}_{\rm DHS}(x,p;a,b)\,\gminus(x,p;b).
\end{equation}
It follows from its definition that $\cJ$ is a $(1,0)$-form. Moreover, one has
\begin{align}
\gminus(x,p;b)^{-1} & (d_x-\cJ_{\rm DHS}(x,p;a,b))\,\gminus(x,p;b)
\no \\ & =
d_x-\cJ(x,p;a,b) +\gminus(x,p;b)^{-1}d_x\gminus(x,p;b)
\no \\ & \quad 
-\gminus(x,p;b)^{-1}\cJ^{(0,1)}_{\rm DHS}(x,p;a,b)\gminus(x,p;b),
\end{align}
and since by the definition \eqref{4.defU-} of $\gminus$ the last two terms cancel each other, we obtain
\begin{equation}\label{4.gauge1}
d_x-\cJ(x,p;a,b)= \gminus(x,p;b)^{-1} \,\big(d_x-\cJ_{\rm DHS}(x,p;a,b) \big)\,\gminus(x,p;b).
\end{equation} 
Since gauge transformations preserve flatness, it follows that $d_x-\cJ(x,p;a,b)$ is flat which, combined with the fact that $\cJ$ is a $(1,0)$-form, implies that $\cJ$ must be purely holomorphic in $x \in \tilde\Sigma_p$ (but not in $p$ and the moduli of the surface), namely $\gminus(x,p;b)$ can be used to gauge transform $\cJ_\text{DHS}$ to a holomorphic $(1,0)$-form in $x \in \tilde\Sigma_p$.

\sm

Let us also consider the monodromy representation (see section \ref{sec:1.1}) 
\bea\label{5:eqmumin}
\mu_{-} (\gamma,y;b) = \gminus(\gamma \cdot y,y;b),
\eea
so that one has
\begin{equation}\label{eqmongminus}
\gminus(\gamma  \cdot x,y;b)=\gminus(x,y;b)\,\mu_{-} (\gamma,y;b).
\end{equation}
It follows from its definition and from the single-valuedness of $\cJ_{\rm DHS}$ that $\cJ$ is multiple-valued in~$x$, with monodromies given by
\begin{align}
\label{4.monJ}
\cJ(\mA^K \cdot x ,p;a,b)&=\mu_{-} (\mA^K,p;b)^{-1}\cJ(x,p;a,b)\,\mu_{-} (\mA^K,p;b), 
\notag \\
\cJ(\mB_K \cdot x,p;a,b)&=\mu_{-} (\mB_K,p;b)^{-1}\cJ(x,p;a,b)\,\mu_{-} (\mB_K,p;b).
\end{align}
The idea of the second step below will then be to gauge transform $\cK_{\rm E}$ into a holomorphic $(1,0)$-form in $x \in \tilde\Sigma_p$ with the same monodromy properties \eqref{4.monJ} as $\cJ$, because then we know that by further applying the gauge transformation $\gminus(x,p;b)^{-1}$ we would obtain a single-valued smooth differential with the same $(0,1)$ component as $\cJ_{\rm DHS}$, thus matching two fundamental properties which uniquely characterize $\cJ_{\rm DHS}$ in Theorem \ref{2.thm:2}.

\subsubsection{Step 2: monodromies}

Let us now consider the solution $\gE(x,y,p;\xi,\eta)$  to the differential equation
\bea\label{4:defUE}
d_x \, \gE(x,y,p; \xi, \eta) & = & \cK_{\rm E}(x,p;\xi,\eta) \,\gE(x,y,p;\xi,\eta),
\eea
along with the initial condition $\gE (y,y,p; \xi, \eta)=1$. As in section \ref{sec:3}, $\xi$ and $\eta$ are arbitrary elements of $\mg^h$, so that $\cK_{\rm E}$ and $\cU_{\rm E}$ take values in $\mg$ and $\exp(\mg)$, respectively, and we will later make a specific choice by requiring $\gE(x,y,p; \xi, \eta)$ to satisfy suitable monodromy properties. The function $\gE$ can be explicitly constructed as the path-ordered integral
\bea\label{eqtoday}
\gE(x,y,p; \xi, \eta) = \text{P} \exp \int _y^x \cK_{\rm E}(\ti ,p;\xi, \eta),
\eea
the integral being taken as usual in the universal cover $\tilde\Sigma_p$, and can be seen as a multiple-valued function of $x,y,p\in\Sigma$, with logarithmic singularities at $x=p$ and $y=p$ and holomorphic elsewhere. 

\sm

Combining the path-concatenation formula \eqref{1.comp} with the monodromies \eqref{2.E1} of $\cK_{\rm E}$, one obtains that, for every $K=1,\cdots ,h$,
\begin{align}
\label{4.monUE}
\gE(\mA^K \cdot x ,y,p;\xi, \eta)&=\gE(x,y,p;\xi,\eta)\,\mu_{\rm E} (\mA^K,y,p;\xi, \eta), 
\notag\\
\gE(\mB_K \cdot x ,y,p;\xi, \eta)&=e^{-2\pi i\eta_K}\,\gE(x,y,p;\xi,\eta)\,e^{2\pi i\eta_K}\,\mu_{\rm E} (\mB_K,y,p;\xi,\eta),
\end{align}
where for $\gamma\in\pi_1(\Sigma_p, \qq)$ we define $\mu_{\rm E} (\gamma,y,p;\xi, \eta)$ to be the element\footnote{Notice that, since $\cK_{\rm E}$ is multiple-valued, the map $\mu_{\rm E}:\pi_1(\Sigma_p, \qq)\to\exp(\mg)$ given by setting $\mu_{\rm E}(\gamma)=\mu_{\rm E} (\gamma,y,p;\xi, \eta)$ does not preserves multiplication, and is therefore not a homomorphism.\label{4.footnote}} of $\exp(\mg)$ given by
\bea
\mu_{\rm E} (\gamma,y,p;\xi, \eta) = \gE (\gamma \cdot y,y,p;\xi, \eta).
\eea
As will become clearer in the sequel, one might be tempted  to impose the condition that the monodromies of the differential
\begin{equation}
\gE(x,y,p;\xi,\eta)^{-1}d_x\gE(x,y,p;\xi,\eta)=\gE(x,y,p;\xi,\eta)^{-1}\cK_{\rm E}(x,p;\xi,\eta)\,\gE(x,y,p;\xi,\eta)
\end{equation}
should match the monodromies \eqref{4.monJ} of the differential $\cJ$ defined in \eqref{4:defJ}.
If the path-ordered exponentials involved had the same base-point, one would readily obtain the desired condition by combining \eqref{2.E1} with \eqref{4.monUE}. However, in the absence of further constraints on $\xi$ and $\eta$, the path-ordered exponential in \eqref{eqtoday} diverges for $y=p$, thereby invalidating this approach. Instead, we shall seek to specialize $\xi$ and $\eta$ to ($p$ and $y \not= p$ dependent) elements $\xich(y,p),\etach(y,p)$ of $\mg^h$ which satisfy the monodromy conditions adapted to $\cJ$ at integration base-point $y$, namely that, for every $K=1,\cdots ,h$,
\begin{align}\label{4.monconds}
\mu_{\rm E} \big(\mA^K,y,p;\xich(y,p), \etach(y,p) \big)&=\mu_{-}(\mA^K,y;b),\notag \\
e^{2\pi i\etach_K}\mu_{\rm E} \big(\mB_K,y,p;\xich(y,p), \etach(y,p) \big)&=\mu_{-}(\mB_K,y;b),
\end{align}
with $\mu_-$ given by (\ref{5:eqmumin}).
Since the right sides of these equations are independent of~$a$, the elements $\xich(y,p)$ and $\etach(y,p)$ naturally belong to the subspace $\mg_b^h$.
\begin{lem}\label{4.lemxieta}  For any $y \neq p$ there exists a unique pair of elements $\xich=\xich(y,p),\etach=\etach(y,p)$ of $\mg_b^h$ which satisfy \eqref{4.monconds}. Their components satisfy the identity
\begin{equation}\label{4.eqvanbracket}
[\etach_I,\xich^I]=0.
\end{equation}
\end{lem}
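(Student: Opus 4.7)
The plan is to proceed in close analogy with Lemmas \ref{3.lem:1} and \ref{3.lem:2}, since the structure of the monodromy conditions in the two sections is parallel once one replaces the roles of the differentials.

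For the existence and uniqueness claim, I would argue by induction on the total degree in $b$. At leading order, $\cK_{\rm E}(x,p;\xi,\eta) = \omega_J(x)\xi^J + O(\xi\,\eta)$ from \eqref{2.Kexp}, so expanding the path-ordered exponential \eqref{eqtoday} and integrating along the generators of $\pi_1(\Sigma_p,\qq)$ gives
\[
\mu_{\rm E}(\mA^K,y,p;\xi,\eta) = 1 + \xi^K + O(\xi\eta,\xi^2), \qquad
\mu_{\rm E}(\mB_K,y,p;\xi,\eta) = 1 + \Omega_{KJ}\xi^J + O(\xi\eta,\xi^2),
\]
while $\cJ_{\rm DHS}^{(0,1)} = -\pi\bar\omega^I b_I$ together with $\oint_{\mA^K}\bar\omega^I = Y^{IK}$ and $\oint_{\mB_K}\bar\omega^I = Y^{IJ}\bar\Omega_{KJ}$ yield
\[
\mu_-(\mA^K,y;b) = 1 - \pi Y^{IK}b_I + O(b^2), \qquad
\mu_-(\mB_K,y;b) = 1 - \pi Y^{IJ}\bar\Omega_{KJ}b_I + O(b^2).
\]
Matching the degree-one parts of \eqref{4.monconds} (with the extra $1+2\pi i\etach_K$ from the exponential on the left side of the second equation) uniquely fixes $\xich^K = -\pi b^K + O(b^2)$ and, using $Y_{KJ} = (\Omega_{KJ}-\bar\Omega_{KJ})/(2i)$, yields $\etach_K = b_K + O(b^2)$. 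At each higher order in $b$, the equations \eqref{4.monconds} reduce to a linear system for the degree-$n$ parts of $(\xich,\etach)$; the matrix is upper-triangular with respect to the filtration by degree, with non-degenerate leading entries that coincide with those of the first-order system just solved, so the induction closes and the solutions are forced to lie in $\mg_b^h$ since the right-hand sides of \eqref{4.monconds} already belong to $\exp(\mg_b)$.

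For the vanishing identity \eqref{4.eqvanbracket}, I would mimic the topological argument of Lemma \ref{3.lem:2}. The residue condition \eqref{2.E2a} gives that $\cK_{\rm E}(x,p;\xich,\etach)$ has a simple pole at $x=p$ with residue $[\etach_I,\xich^I]$, so a direct local computation (as in the end of the proof of Lemma \ref{3.lem:2}) shows that the monodromy of $\gE(\cdot,y,p;\xich,\etach)$ around a small circle $\mathcal{C}_p$ around $p$ is $e^{2\pi i[\etach_I,\xich^I]}$. Since $\mathcal{C}_p$ is homotopic to $\partial D_p = \prod_{K=1}^{h}\mA^K\star\mB_K\star(\mA^K)^{-1}\star(\mB_K)^{-1}$ in $\tilde\Sigma_p$, this monodromy can alternatively be computed by composing the generator monodromies \eqref{4.monUE}. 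Imposing the constraints \eqref{4.monconds}, the $\mA^K$ factors contribute $\mu_-(\mA^K,y;b)$, while the $\mB_K$ factors, after absorbing the conjugation $e^{\pm2\pi i\etach_K}$ twists, reduce to $\mu_-(\mB_K,y;b)$ as well. Because $\cJ_{\rm DHS}^{(0,1)}$ is single-valued on all of $\Sigma$, the map $\mu_-$ is a genuine homomorphism $\pi_1(\Sigma_p,\qq)\to\exp(\mg_b)$, so the product of commutators $\mu_-(\mA^K)\mu_-(\mB_K)\mu_-(\mA^K)^{-1}\mu_-(\mB_K)^{-1}$ equals the $\mu_-$-image of the loop around $p$, which is trivial because $\cJ_{\rm DHS}^{(0,1)}$ extends smoothly to $p$. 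Hence $e^{2\pi i[\etach_I,\xich^I]} = 1$ and, $\mg_b$ being free, $[\etach_I,\xich^I]=0$.

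The main obstacle will be the careful bookkeeping in the second step: unlike the DHS case in Lemma \ref{3.lem:2}, the $\mB_K$-monodromy \eqref{4.monUE} of $\gE$ involves a conjugation by $e^{\mp 2\pi i\eta_K}$, so that $\mu_{\rm E}$ is not itself a homomorphism (see footnote \ref{4.footnote}). The cleanest way to handle this is to first observe, as the leading-order computation already suggests, that working with the combined object $U(x,y) = \gE(x,y,p;\xich,\etach)\,\gminus(x,y;b)^{-1}$ eliminates the $\mu_-$ factors while turning the conjugation twist into a plain left-multiplication by $e^{-2\pi i\etach_K}$. The monodromy of $U$ around $\partial D_p$ then collapses to a single exponential, which matched against $e^{2\pi i[\etach_I,\xich^I]}$ delivers the required identity; this is also the set-up that will make the transition to the definition of $\cU_{\rm E}(x,p)$ in \eqref{Uefactors} conceptually transparent.
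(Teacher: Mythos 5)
Your proposal is correct and follows essentially the same route as the paper: the same degree-by-degree recursion (with the same first-order data) for existence and uniqueness, and the same topological argument comparing the monodromy around $\mathcal C_p$ with the product of commutators over $\partial D_p$ for the vanishing of $[\etach_I,\xich^I]$. The only cosmetic difference is that you repair the failure of $\mu_{\rm E}$ to be a homomorphism by passing to $\gE\,\gminus^{-1}$ (whose $\mB_K$-monodromy is a plain left multiplication by $e^{-2\pi i\etach_K}$), whereas the paper twists the monodromy data directly into the homomorphism $\tilde\mu_{\rm E}(\mB_K)=e^{2\pi i\etach_K}\mu_{\rm E}(\mB_K)$ and matches it against $\mu_-$; the two devices encode the same information.
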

Similarly to the case of Lemma \ref{3.lem:1}, notice that, intuitively, the first statement of this lemma is justified by the fact that there are $2h$ equations for $2h$ unknowns $\xich,\etach$.
\begin{proof}
The proof of the existence and uniqueness of $\xich=\xich(y,p)$ and $\etach=\etach(y,p)$ is similar to the proof of Lemma~\ref{3.lem:1}, because both sides of the equations \eqref{4.monconds} belong to $\exp(\mg)$, and one can recursively (on the degree of the Lie monomials in the components of $b$) prove that their logarithms, which belong to $\mg$, have a unique solution $\xich,\etach\in\mg_b^h$. For the first step of the induction, notice that we have, at order one,
\begin{align}
\log\big(\mu_{\rm E} (\mA^I,y,p;\xich, \etach) \big)&=\xich^I+\mathcal O([\xich,\xich],[\etach,\etach],[\xich,\etach]),\notag\\
\log\big(e^{2\pi i \etach_I} \mu_{\rm E} (\mB_I,y,p;\xich, \etach) \big)&=\Omega_{IJ}\xich^J + 2\pi i \etach_I+\mathcal O([\xich,\xich],[\etach,\etach],[\xich,\etach]),
\label{noeta1}
\end{align}
as well as
\begin{align}
\log\big(\mu_{-} (\mA^I,y;b) \big)&=-\pi Y^{IJ}b_J+\mathcal O([b,b]),\notag\\
\log\big(\mu_{-} (\mB_I,y;b) \big)&=-\pi\bar\Omega_{IK}Y^{KJ}b_J+\mathcal O([b,b]),
\label{noeta2}
\end{align}
which we can equate to find that the components of $\xich,\etach$ that satisfy \eqref{4.monconds} are constrained at order one to be
\bea\label{4.firstordxieta}
\xich^I=-\pi Y^{IJ}b_J+\mathcal O([b,b]),\quad \quad \etach_I=b_I+\mathcal O([b,b]).
\eea

We are now left with proving that these elements $\xich,\etach\in\mg_b^h$ must necessarily satisfy the condition \eqref{4.eqvanbracket}. First of all, even though the map $\mu_{\rm E}:\pi_1(\Sigma_p,\qq)\to \exp(\mg)$ is not a homomorphism (see footnote \ref{4.footnote}), one can verify using the path-concatenation property \eqref{1.comp} of path-ordered exponentials that, for any  $y,p\in\Sigma$ and any $\xi,\eta\in\mg^h$, the map $\tilde\mu_{\rm E}$ defined on the generators of $\pi_1(\Sigma_p,\qq)$ by setting 
\bea
\tilde\mu_{\rm E}(\mA^I)=\mu_{\rm E} (\mA^I,y,p;\xi, \eta),\quad\quad \tilde\mu_{\rm E}(\mB_I)=e^{2\pi i\eta_I}\mu_{\rm E} (\mB_I,y,p;\xi, \eta)
\eea
preserves multiplication and is therefore a homomorphism from $\pi_1(\Sigma_p,\qq)$ to $\exp(\mg_b)$. It follows from the first part of the statement that $\xich$ and $\etach$ are such that
\begin{align}\label{4.eqproof1}
\tilde\mu_{\rm E} (\mA^I)=\mu_{-}(\mA^I),\quad\quad
\tilde\mu_{\rm E} (\mB_I)=\mu_{-}(\mB_I).
\end{align}

Because $\tilde \mu_{\rm E}$ is a homomorphism, one can use the same argument inspired by Cauchy's theorem exploited in the proof of Lemma \ref{3.lem:2} to deduce that
\bea\label{4.eqproof2}
\prod_{I=1}^h\tilde \mu_{\rm E}(\mA^I)\tilde \mu_{\rm E}(\mB_I)\tilde \mu_{\rm E}(\mA^I)^{-1}\tilde \mu_{\rm E}(\mB_I)^{-1}= e^{2\pi i [\etach_I,\xich^I]}
\eea
Similarly, the fact that $\mu_{-}:\pi_1(\Sigma,\qq)\to \exp(\mg_b)$ is a homomorphism implies that 
\bea\label{4.eqproof3}
\prod_{I=1}^h \mu_{-}(\mA^I) \mu_{-}(\mB_I) \mu_{-}(\mA^I)^{-1} \mu_{-}(\mB_I)^{-1}=1.
\eea
The identity \eqref{4.eqvanbracket} then follows from combining equations \eqref{4.eqproof1}, \eqref{4.eqproof2} and \eqref{4.eqproof3}.
\end{proof}

\begin{rmk}
Similar to the case of $\xih(y,p)$ and $\etah(y,p)$ from the first construction in section~\ref{sec:3}, the defining equations (\ref{4.monconds}) for the elements $\xich=\xich(y,p)$ and $\etach=\etach(y,p)$ may in principle lead to a separate dependence on the base-point~$y$ of the $\mA$- and $\mB$-cycles and on $p$ (as well as on the moduli of~$\Sigma$). However, the result (\ref{4.eqvanbracket}) of Lemma \ref{4.lemxieta} will lead to Corollary \ref{4.coroll2} below showing that $\xich$ and $\etach$ do not depend on their second argument $p$.
\end{rmk}

\subsubsection{Combining the two steps}
\label{sec:5.1.3}

The fact that, by Lemma \ref{4.lemxieta}, the elements $\xich(y,p),\etach(y,p)$ of $\mg_b^h$ which satisfy \eqref{4.monconds} are such that $[\etach_I(y,p), \xich^I(y,p)]=0$ implies that the residue of the pole of the differential $\cK_{\rm E}(x,p;\xich(y,p),\etach(y,p))$ at $x=p$ vanishes, and therefore that this differential is holomorphic
on the whole $\tilde\Sigma$. This implies the following analogue of Corollary \ref{3.cor:1}.
\begin{cor}\label{4.cor:1an}
For $\xich=\xich(y,p)$ and $\etach=\etach(y,p)$ obeying \eqref{4.monconds}, $\cK_{\rm E}(x,p;\xich,\etach)$ is independent of its second argument $p$  (and henceforth will be denoted $\cK_{\rm E}(x,\cdot\,;\xich,\etach)$), so that it may depend on $p$ only through the corresponding dependence of $\xich$ and $\etach$.
\end{cor}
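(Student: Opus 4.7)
The plan is to reduce the claim to the $p$-independence of the Enriquez connection in its original guise, which is recalled in the remark following Theorem \ref{2.thm:1}, by exploiting the vanishing bracket $[\etach_I, \xich^I] = 0$ established in Lemma \ref{4.lemxieta}.

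The first step is to apply the residue condition \eqref{2.E2a} of Theorem \ref{2.thm:1} with $(a, b) \to (\xich, \etach)$. Since by Lemma \ref{4.lemxieta} we have $[\etach_I, \xich^I] = 0$, the residue of $\cK_{\rm E}(x, p; \xich, \etach)$ at $x = p$ vanishes, so that the differential is in fact holomorphic in $x$ on the whole of $\tilde\Sigma$ rather than merely meromorphic with a simple pole at $p$. This removes the most visible source of $p$-dependence and is exactly what justifies the use of $p$ as base point in the path-ordered exponential \eqref{eqtoday}.

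The second step is to promote holomorphicity to genuine $p$-independence. For this, I would invoke the fact, recalled in the remark following Theorem \ref{2.thm:1}, that in \cite{Enriquez:2011} the Enriquez connection is naturally valued in (the completion of) the quotient Lie algebra $\bar{\mg} := \mathrm{Lie}(a, b)/\langle [a^I, b_I] \rangle$, and that in this quotient it is \emph{independent of $p$}. The $p$-dependence of the $\mg$-valued lift used throughout the present paper encodes precisely the obstruction to descending to $\bar{\mg}$. Now the substitution $(a, b) \mapsto (\xich, \etach)$ extends to a morphism of completed free Lie algebras $\mg \to \mg$ whose image lies in $\mg_b$; since this morphism sends the Lie element $[a^I, b_I]$ to $[\etach_I, \xich^I] = 0$, it annihilates the ideal defining $\bar{\mg}$ and therefore factors through a morphism $\bar{\mg} \to \mg_b$. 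Applying this morphism to the $p$-independent $\bar{\mg}$-valued Enriquez differential reproduces $\cK_{\rm E}(x, p; \xich, \etach)$, which must therefore inherit the property of being independent of $p$. The remaining $p$-dependence of the differential, inherited from the smooth functions $\xich = \xich(y, p)$ and $\etach = \etach(y, p)$ of Lemma \ref{4.lemxieta}, is exactly what is allowed in the statement.

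The main obstacle I foresee is giving a clean justification of the factorization through $\bar{\mg}$, properly respecting the degree-completed topology on the free Lie algebras and matching the abstract quotient construction of \cite{Enriquez:2011} with the analytic definition of $\cK_{\rm E}(x, p; a, b)$ through the expansion \eqref{2.Kexp} of the kernels $g^{I_1 \cdots I_r}{}_J(x, p)$. A more self-contained alternative would be to compare $\cK_{\rm E}(x, p; \xich, \etach)$ with $\cK_{\rm E}(x, p_0; \xich, \etach)$ for an arbitrary fixed $p_0$: both are $\mg$-valued holomorphic multiple-valued differentials in $x$ on $\tilde\Sigma$, share the same $x$-monodromies \eqref{2.E1} (which depend only on $\etach$), and have the same $\mA$-periods by \eqref{A-cycleintg}; an inductive uniqueness argument in powers of $\etach$, exploiting the triangular structure of the monodromies \eqref{2.mon}, should then force their equality. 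The subtle point in this alternative is that the individual kernels $g^{I_1\cdots I_r}{}_J(x,p)$ with $I_r = J$ separately develop $p$-dependent poles, and only the specific combination arising in $\cK_{\rm E}(x, p; \xich, \etach)$ is holomorphic, again thanks to the bracket relation $[\etach_I, \xich^I] = 0$.
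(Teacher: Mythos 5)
Your proposal is correct and follows essentially the same route as the paper: its constructive proof expands $\cK_{\rm E}(x,p;\xich,\etach)$ in the $p$-independent traceless kernels $\varpi^{I_1\cdots I_r}{}_J(x) = g^{I_1\cdots I_r}{}_J(x,p) - \tfrac{1}{h}\,\delta^{I_r}_J\, g^{I_1\cdots I_{r-1}K}{}_K(x,p)$ and notes that the leftover trace terms only ever produce the vanishing commutator $[\etach_I,\xich^I]=0$, which is precisely the concrete realization of your factorization through the quotient $\mathrm{Lie}(a,b)/[a^I,b_I]$. Your fallback uniqueness argument is exactly the alternative the paper mentions and explicitly declines to pursue.
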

\begin{proof} One could set up a proof based on uniqueness arguments of \cite{Enriquez:2011}, but
we shall instead give a constructive proof here: analogously to Corollary \ref{3.cor:1}, one can establish
the independence of $\cK_{\rm E}(x,p;\xich,\etach)$ of its second argument $p$ through the
expansion
\bea
\cK_{\rm E}(x,\cdot\,;\xich,\etach) = \omega_J(x) \xich^J + \sum_{r=1}^{\infty} \varpi^{I_1\cdots I_r}{}_J(x)
\check{H}_{I_1} \cdots \check{H}_{I_r}  \xich^J,
\label{varpieq.1}
\eea
where $\check{H}_I = [\etach_I,X]$ for all $X\in \mg$. Moreover, the differentials 
$\varpi^{I_1\cdots I_r}{}_J(x)$ are the traceless parts
of $g^{I_1\cdots I_r}{}_J(x,p)$ in the following sense
\bea
\varpi^{I_1\cdots I_r}{}_J(x) = g^{I_1\cdots I_r}{}_J(x,p) - \frac{1}{h} \delta^{I_r}_J g^{I_1\cdots I_{r-1}K}{}_K(x,p) 
\label{varpieq.2}
\eea
and independent on $p$ as the notation suggests (see \cite{Enriquez:2011} or section 9.1 of \cite{DHoker:2024ozn}). The manifestly $p$-independent expansion (\ref{varpieq.1}) of $\cK_{\rm E}$ follows from
that in (\ref{2.Kexp}) since the factors of $\delta^{I_r}_J$ produced by the difference $\varpi^{I_1\cdots I_r}{}_J(x) - g^{I_1\cdots I_r}{}_J(x,p)$ give rise to the vanishing commutator $\delta^{I_r}_J \check{H}_{I_r}  \xich^J = [\etach_I , \xich^I] =0$.
\end{proof}

\sm

Thanks to the result of Corollary \ref{4.cor:1an}, we can deduce also the following corollary, whose proof is analogous to that of Corollary \ref{3.coroll2:BIS}, and will therefore be omitted.
\begin{cor}\label{4.coroll2}
The elements $\xich(y,p), \ \etach(y,p)\in\mg_b^h$ determined by Lemma \ref{4.lemxieta} are independent of~$p$. They extend to smooth functions of $y\in\tilde\Sigma$, so that $\xich(y,\cdot), \etach(y,\cdot)$ are well-defined elements of $\mg_b^h$ also for $y=p$ which satisfy (and are uniquely determined by) the monodromy conditions
\bea
\label{4.monp}
\mu_{\rm E} \big(\mA^K,p, \cdot\,;\xich(p,\cdot\,), \etach(p,\cdot\,) \big) & = & \mu_{-}(\mA^K,p;b),
\no \\
e^{2\pi i\etach_K}\mu_{\rm E} \big(\mB_K,p, \cdot\,;\xich(p,\cdot\,), \etach(p,\cdot\,) \big) & = &  \mu_{-}(\mB_K,p;b).
\eea
Here again the dot in the argument of a function is meant to stress the independence of the function on that argument.
\end{cor}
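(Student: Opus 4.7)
The plan is to mirror the argument used for Corollary \ref{3.coroll2:BIS}, with Corollary \ref{4.cor:1an} playing the role played there by Lemma \ref{new:lem:benjamin}. The key input is the identity $[\etach_I,\xich^I]=0$ from Lemma \ref{4.lemxieta}: it triggers the $p$-independence of the connection $\cK_{\rm E}$, which in turn yields the $p$-independence of its holonomies.

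First, I would invoke Corollary \ref{4.cor:1an} to observe that $\cK_{\rm E}(x,p;\xich(y,p),\etach(y,p))$ is in fact a function only of $x$ and of the pair $(\xich(y,p),\etach(y,p))$, i.e.\ independent of its second argument $p$. Since $\mu_{\rm E}(\gamma,y,p;\xich,\etach)$ is the value at $\gamma\cdot y$ of the path-ordered exponential of this $p$-independent connection based at $y$, it too is independent of $p$. Consequently the defining monodromy system \eqref{4.monconds} can be rewritten with every explicit $p$-dependence in $\mu_{\rm E}$ replaced by a dot,
\begin{align*}
\mu_{\rm E}\big(\mA^K,y,\cdot\,;\xich(y,p),\etach(y,p)\big)&=\mu_-(\mA^K,y;b),\\
e^{2\pi i\etach_K(y,p)}\,\mu_{\rm E}\big(\mB_K,y,\cdot\,;\xich(y,p),\etach(y,p)\big)&=\mu_-(\mB_K,y;b).
\end{align*}

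Next, I would introduce for each $y\in\tilde\Sigma$ the self-map $F_y:\mg^{2h}_b\to\mg^{2h}_b$ sending $(\lambda,\mu)$ to the tuple of elements $\log\mu_{\rm E}(\mA^K,y,\cdot\,;\lambda,\mu)$ and $\log\!\big(e^{2\pi i\mu_K}\mu_{\rm E}(\mB_K,y,\cdot\,;\lambda,\mu)\big)$ for $K=1,\ldots,h$. The first-order expansion recorded in \eqref{noeta1} shows that $F_y$ is triangular with respect to the degree filtration of $\mg_b^{2h}$, with invertible linear part (the period-matrix block $\bigl(\begin{smallmatrix}\mathds{1}&0\\ \Omega&2\pi i\,\mathds{1}\end{smallmatrix}\bigr)$). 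Hence $F_y$ is a bijection whose inverse is computable order-by-order, and both $F_y$ and $F_y^{-1}$ depend smoothly on $y$ because each coefficient is built from iterated integrals of $\cK_{\rm E}$ along paths based at $y$, which are smooth functions of the base-point.

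Finally, the right-hand sides $\mu_-(\mA^K,y;b)$ and $\mu_-(\mB_K,y;b)$ of the rewritten system are manifestly independent of $p$ and smooth in $y\in\tilde\Sigma$, since $\cJ^{(0,1)}_{\rm DHS}$ is smooth and single-valued on all of $\Sigma$. Applying $F_y^{-1}$ to this right-hand side expresses $(\xich(y,p),\etach(y,p))$ as a quantity which is independent of $p$ and smooth in $y$, thereby extending it to a well-defined pair $(\xich(y,\cdot\,),\etach(y,\cdot\,))$ defined at every $y\in\tilde\Sigma$, including $y=p$. The monodromy conditions \eqref{4.monp} then follow by continuity as $y\to p$. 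The main subtlety I anticipate is the verification that $F_y^{-1}$ is truly smooth in $y$ at every order of the degree filtration; this amounts to checking that the recursive solution of the order-by-order system uses only smooth iterated integrals, which is already implicit in the proof of Lemma \ref{4.lemxieta} and causes no real obstruction.
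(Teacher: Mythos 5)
Your argument is correct and is essentially the proof the paper intends: the paper omits the proof of Corollary \ref{4.coroll2} precisely because it is ``analogous to that of Corollary \ref{3.coroll2:BIS}'', and you have reconstructed that analogue faithfully — using Corollary \ref{4.cor:1an} (in the role of Lemma \ref{new:lem:benjamin}) to make the holonomies $p$-independent, then inverting the degree-triangular self-map of $\mg_b^{2h}$ to get $p$-independence and smoothness in $y$ simultaneously. The only cosmetic difference is that your right-hand side $(\log\mu_-(\mA^K,y;b),\log\mu_-(\mB_K,y;b))$ is $y$-dependent rather than the constant $(0,2\pi i b)$ of the first construction, which you correctly note is harmless since it is smooth in $y$ and independent of $p$.
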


{\deff \label{4deffff} Henceforth, the notation $\xich,\etach$  will be used for the elements $\xich(p,\cdot\,), \etach(p,\cdot\,)$ of $\mg_b^h$  which solve the monodromy conditions
 \eqref{4.monp} from Corollary \ref{4.coroll2} (i.e.\ abandoning the notation of Lemma \ref{4.lemxieta} and Corollary \ref{4.cor:1an}).
}

\begin{deff}
\label{nicedef}
For $\xich$ and $\etach$ as in Definition \ref{4deffff}, we define the gauge transformation $\cU_{\rm E}(x,p) = \cU_{\rm E}(x,p;\xich,\etach)$ as the product of the specialization $\gE(x,p,\cdot\,;\xich,\etach)$ of $\gE(x,y,p;\xi,\eta)$ from \eqref{4:defUE} with the inverse of $\gminus(x,p;b)$, obtained by specializing \eqref{4.defU-} to $y=p$,
\bea\label{4:defgauge}
 \cU_{\rm E}(x,p;\xich,\etach)=\gE(x,p,\cdot\,;\xich,\etach)\,\gminus(x,p;b)^{-1}
\eea
with $\xich^I,\etach_I$ determined as a Lie series in $b_K$ by the following equivalent of (\ref{4.monp}), 
\bea
\label{mero.mon1}
\cU_{\rm E}(\mA^K \cdot p , p; \xich, \etach) & = & 1,
\no \\
 \cU_{\rm E}(\mB_K \cdot p , p; \xich, \etach) & = & 
 e^{-2\pi i \etach_K}.
\eea
\end{deff}
Note that (\ref{mero.mon1}) is the direct analogue of the monodromy conditions (\ref{3.mon1})
that determine the generators $\xih^I,\etah_I$ of $\cU_{\rm DHS}$ in terms of $b_K$.

\subsection{Relating the connections $d-\cK_{\rm E}$ and $d-\cJ_{\rm DHS}$}
\label{4.ssecRelCon}

We are left with constructing suitable elements $\ach, \bch\in\mg^h$ that, combined with the previously constructed gauge transformations, will enable to obtain $\cJ_\text{DHS}(x,p;a,b)$ out of $\cK_{\rm E}(x,p;\ach, \bch)$. As we will see below, a key ingredient is the following statement.

\begin{lem}\label{4.lemtildea}
There is a unique element~$\tilde a$ of $\mg^h$, linear in $a$, which satisfies
\begin{equation}\label{4.eqrestildea}
[b_J,a^J]=[\etach_J,\tilde a^J],
\end{equation}
with $\etach$ as in Lemma \ref{4.lemxieta}, and $a\cup b\to \tilde a\cup \etach$ is a Lie automorphism of~$\mg$.
\end{lem}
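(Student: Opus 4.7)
My plan is to reformulate (\ref{4.eqrestildea}) as a linear-algebraic problem on the subspace $V \subset \mg$ consisting of Lie series linear in $a$, and then to solve it by a triangular argument in the $b$-degree. I would begin by recalling the classical fact from the theory of free Lie algebras that $V$ is a free left $\mathrm{U}(\mathrm{Lie}(b))$-module with basis $\{a^I\}_{I=1}^{h}$, where $x \in \mathrm{Lie}(b)$ acts by $x \cdot v = [x,v]$. Under this identification any $\tilde a^J \in V$ is uniquely represented as $\tilde a^J = \sum_K u^J_K \cdot a^K$ with $u^J_K \in \mathrm{U}(\mathrm{Lie}(b))$, and matching the coefficient of $a^K$ in (\ref{4.eqrestildea}) reduces it to the system
\begin{equation*}
b_K \,=\, \sum_J \etach_J \, u^J_K \qquad (K=1,\ldots,h)
\end{equation*}
inside the free associative algebra $\mathrm{U}(\mathrm{Lie}(b))$.

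Next, I would combine the freeness of the generators $\{b_J\}$ of $\mathrm{U}(\mathrm{Lie}(b))$, which makes the map $(x^J)_J \mapsto \sum_J b_J \, x^J$ a vector-space isomorphism from $\mathrm{U}(\mathrm{Lie}(b))^h$ onto the augmentation ideal, with the leading-order expansion $\etach_J = b_J + \mathcal{O}([b,b])$ established in (\ref{4.firstordxieta}). A triangular argument in the $b$-degree then promotes this to the statement that $(x^J)_J \mapsto \sum_J \etach_J \, x^J$ is also a vector-space isomorphism onto the augmentation ideal. Since each $b_K$ lies there, the displayed system admits a unique solution $u^J_K$, giving simultaneously existence and uniqueness of $\tilde a$ together with an explicit Lie-series expansion $\tilde a^I = a^I + \mathcal{O}(b)$ recovered iteratively in the $b$-degree.

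For the automorphism statement, I would define $\phi \colon \mg \to \mg$ by extending $\phi(a^I) = \tilde a^I$ and $\phi(b_J) = \etach_J$ via the universal property of the free Lie algebra. Since both $\tilde a^I - a^I$ and $\etach_J - b_J$ lie in the strictly positive-degree part of $\mg$ with respect to the total-degree filtration (all their components have degree at least two), the endomorphism $\phi$ is filtered and induces the identity on the associated graded. A standard filtered-to-graded argument then upgrades this to the claim that $\phi$ is a continuous automorphism of the completed free Lie algebra $\mg$.

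The delicate point is the free $\mathrm{U}(\mathrm{Lie}(b))$-module structure of $V$: without it, (\ref{4.eqrestildea}) looks like a single scalar equation in $\mg$ constraining $h$ Lie-series unknowns, hence under-determined at first sight, whereas the module viewpoint explains why the system is exactly determined. One could sidestep citing this module theorem by a direct bigraded dimension count showing that the map $(R^J) \mapsto \sum_J [b_J, R^J]$ is a bijection on each bihomogeneous piece of $V$, with injectivity again reduced to the freeness of $\{b_J\}$ inside $\mathrm{U}(\mathrm{Lie}(b))$; either way, the essential leverage comes from the leading-order form $\etach_J = b_J + \mathcal{O}([b,b])$ combined with the freeness of $\mg_b$.
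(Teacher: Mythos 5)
Your argument is correct, but it reaches the conclusion by a genuinely different route from the paper's. The paper constructs the inverse automorphism explicitly: it expands $b_J$ as an associative series in $\etach$ as in \eqref{eq:bineta}, inverts the resulting $h\times h$ matrix over $\mathbb C\langle\!\langle b\rangle\!\rangle$ via \eqref{INVERT}, defines $\theta(a^I)=\mathrm{ad}(t^I{}_J)(a^J)$ and $\theta(\etach_I)=b_I$, verifies $\theta([b_J,a^J])=[b_J,a^J]$ through a chain of $\mathrm{ad}$-identities resting on Dynkin's lemma \eqref{eq:Dynkin}, and only then treats uniqueness by a separate degree remark. You instead invoke the free left $\mathrm{U}(\mathrm{Lie}(b))$-module structure on the $a$-linear part of $\mg$ (Lazard elimination), which converts \eqref{4.eqrestildea} into the determined linear system $b_K=\sum_J\etach_J\,u^J_K$ in $\mathbb C\langle\!\langle b\rangle\!\rangle$ and delivers existence and uniqueness in a single stroke from the triangularity of $\etach_J=b_J+\mathcal O(b^2)$ established in \eqref{4.firstordxieta}. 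Both proofs lean on the same leverage — freeness of $\mg_b$ plus the leading-order form of $\etach$ — and your matrix $u^J_K$ is essentially the paper's $t^I{}_J$ transported through the substitution $\etach\to b$; what your module viewpoint buys is a conceptual explanation of why the single equation \eqref{4.eqrestildea} is exactly determined rather than under-determined, and a transparent uniqueness statement (injectivity of $c\mapsto[\etach_J,c^J]$ on $a$-linear elements is immediate from the module isomorphism, whereas the paper's degree-counting remark is quite compressed). The one step you should not leave as a bare citation is the free-module claim itself, since it carries the whole weight of the reduction; your proposed fallback via a bigraded dimension count, or equivalently the identity \eqref{eq:Dynkin} that the paper records, closes that. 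Your filtered-to-graded argument for the automorphism statement is standard and parallels the paper's observation that $\theta$ preserves the $a$-degree and is invertible.
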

\begin{proof}
This result can be proven along similar lines to the proof of Theorem \ref{3.thm:1} (see the discussion on the algorithmic determination of~$\tilde a$ in section \ref{sec4.5}). However,  we will take a different route: to prove existence, we will construct an automorphism~$\theta$ of~$\mg$ and show that its inverse~$\psi$ satisfies $\psi(b_I)=\etach_I$ and is such that the element $\tilde a=\psi(a)$ satisfies the conditions of the statement. We will conclude the proof by showing the uniqueness of~$\tilde a$.

\sm

Recall from the proof of Lemma \ref{4.lemxieta}, equation \eqref{4.firstordxieta}, that $\etach_J=b_J+\mathcal O([b,b])$ for $J=1,\cdots ,h$, with $\etach\in\mg_b^h$. This implies that the map $b\to\etach$ can be inverted and the elements $\etach_J$ generate $\mg_b$, so it can be viewed as an automorphism of $\mg_b$. One can therefore view each $b_J$ as an element of the completed free Lie algebra $\mg_{\etach}\simeq \mg_b$ generated by $\etach$. Since elements of $\mg_{\etach}$ can also be viewed as elements of the (associative) algebra $\mathbb C\langle \! \langle \etach\rangle \! \rangle$ of power series in the (non-commutative) variables $\etach_1,\cdots ,\etach_h$, we can write 
\bea\label{eq:bineta}
b_J=\etach_J+\sum_{r=2}^{\infty}\sigma^{I_1\cdots I_r}{}_J\,\etach_{I_1}\cdots\etach_{I_r},
\eea
with $\sigma^{I_1\cdots I_r}{}_J\in\mathbb C$ depending on the moduli of $\Sigma_p$. If we set 
\bea
s^I{}_J=\sum_{r=2}^{\infty}\sigma^{II_2\cdots I_r}{}_J\,b_{I_2}\cdots b_{I_r}\in \mathbb C\langle \! \langle b\rangle \! \rangle, 
\eea
then since $s^I{}_J=\mathcal O(b)$ one can recursively (on the degree in $b$) construct elements $t^I{}_J\in\mathbb C\langle \! \langle b\rangle \! \rangle$ that satisfy the relation 
\begin{equation}
\label{INVERT}
t^I{}_J+s^I{}_K\, t^K{}_J=\delta^I_{J},
\end{equation}
so that the $h\times h$ matrix $t$ with entries $t^I{}_J$ is the inverse of the matrix $I + s$ with entries $\delta^I_{J}+s^I{}_J$. Up to and including the second order in $b_K$, we find
\bea
t^I{}_J = \delta^I_J - \sigma^{I K}{}_J b_K + (\sigma^{I K_1}{}_L  \sigma^{L K_2}{}_J - \sigma^{I K_1 K_2}{}_J ) b_{K_1} b_{K_2} + {\cal O}(b^3).
\eea
One can extend the usual adjoint action on $\mg$ given by $\mathrm{ad}(X)(Y)=[X,Y]$, with $X,Y\in\mg$, to elements $X=\sum_{r=0}^{\infty}X^{I_1\cdots I_r}b_{I_1}\cdots b_{I_r}$ of $\mathbb C\langle \!\langle b\rangle \!\rangle$ by setting 
\bea\label{4.eq1dec1}
\mathrm{ad}\bigg(\sum_{r=0}^{\infty}X^{I_1 I_2 \cdots I_r}b_{I_1} b_{I_2}\cdots b_{I_r}\bigg)(Y)=\sum_{r=0}^{\infty}X^{I_1 I_2 \cdots I_r}\,[b_{I_1},[b_{I_2},\cdots ,[b_{I_r},Y] \cdots ]].
\eea 
Notice that, if~$X$ is in the first place an element of $\mg$, then one has the non-trivial identity, also known as Dynkin's lemma,
\bea\label{eq:Dynkin}
\bigg[\sum_{r=0}^{\infty}X^{I_1\cdots I_r}b_{I_1}\cdots b_{I_r},\,Y\bigg]=\mathrm{ad}\bigg(\sum_{r=0}^{\infty}X^{I_1\cdots I_r}b_{I_1}\cdots b_{I_r}\bigg)(Y),
\eea
where the right side of the equation is defined by \eqref{4.eq1dec1}.

\sm

Let us now define a Lie algebra endomorphism $\theta$ of $\mg$ by setting $\theta(a^I)=  \mathrm{ad}(t^I{}_J)(a^J)$ and $\theta(\etach_I)=b_I$, with $\mathrm{ad}(t^I{}_J)(a^J)$ as in \eqref{4.eq1dec1}. As remarked above, the assignment $\etach\to b$ induces an automorphism of $\mg_b$, which combined with the invertibility of the matrix $(t^I{}_J)$ implies that $\theta$ is invertible, and is therefore an automorphism of $\mg$. Notice that $\theta$ preserves the degree in the variables $a^I$ of the generators ($\theta(a^I)$ has degree one, $\theta(b_I)$ has degree zero), hence it preserves the degree of every element. In the remainder we will show that, if we define $\psi$ to be the automorphism given by the inverse of $\theta$, then $\psi$ satisfies all the desired properties. 

\sm

By construction we immediately obtain $\psi(b_I)=\etach_I$. Moreover, notice that, since $\theta$ preserves the $a$-degree, then so does~$\psi$, and therefore $\tilde a=\psi(a)$ is linear in $a$, as requested. We are left to show that eq.\ \eqref{4.eqrestildea} holds for this $\tilde a$, namely that 
\bea
\psi([b_J,a^J])=[b_J,a^J].
\eea

As a first step, notice that we can combine the formula \eqref{eq:bineta} with Dynkin's lemma \eqref{eq:Dynkin} to obtain the identity
\begin{align}\label{(*):0805}
[b_J,a^J]=[\etach_J,a^J]+\sum_{r=2}^{\infty}\sigma^{I_1\cdots I_r}{}_J[\etach_{I_1}\cdots\etach_{I_r},a^J]=\mathrm{ad}\bigg(\etach_J+\sum_{r=2}^{\infty}\sigma^{I_1\cdots I_r}{}_J\,\etach_{I_1}\dots\etach_{I_r}\bigg)(a^J).
\end{align}
Applying $\theta(\etach_I)=b_I$ on these identities we find
\begin{align}
\theta([b_J,a^J])&
=\mathrm{ad}\bigg(b_J+\sum_{r=2}^{\infty}\sigma^{I_1\cdots I_r}{}_J\,b_{I_1}\cdots b_{I_r}\bigg)\big(\theta(a^J) \big)
\notag\\ & =\mathrm{ad}(b_J)\big(\theta(a^J) \big)
+\mathrm{ad}\bigg(\sum_{r=2}^{\infty}\sigma^{I I_2\cdots I_r}{}_J\,b_I\, b_{I_2}\cdots b_{I_r}\bigg)\big(\theta(a^J) \big)
\notag\\ & =\mathrm{ad}(b_I)\bigg( \theta(a^I)+\mathrm{ad}\Big(\sum_{r=2}^{\infty}\sigma^{I I_2\cdots I_r}{}_J\,b_{I_2}\cdots 
b_{I_r}\Big) \big(\theta(a^J) \big)\bigg)
\notag\\ & =\mathrm{ad}(b_I)\Big(\theta(a^I)+\mathrm{ad}(s^I{}_J) \big(\theta(a^J) \big)\Big)
=\mathrm{ad}(b_I)\Big(\mathrm{ad}(\delta^I_{J}+s^I{}_J) \big(\theta(a^J) \big) \Big)
\notag\\&
=\mathrm{ad}(b_I)\Big(\mathrm{ad}(\delta^I_{J}+s^I{}_{J}) \big(\mathrm{ad}(t^J{}_{K})(a^K) \big) \Big) 
=\mathrm{ad}(b_I)\Big(\mathrm{ad}\big((\delta^I_{J}+s^I{}_{J})t^J{}_{K}\big)(a^K)\Big)
\notag\\& =\mathrm{ad}(b_I) \big(\mathrm{ad}(\delta^{I}_{K})(a^K) \big)=[b_I,a^I].
\end{align}
Since $\psi\circ\theta=\mathrm{Id}$, applying $\psi$ to the first and the last term of this chain of identities we obtain $[b_J,a^J]=\psi([b_J,a^J])$.

\sm

Finally, let us prove uniqueness of~$\tilde a\in\mg^h$ as in the statement. Suppose that~$\tilde a'\in\mg^h$ satisfies the same properties, then we would get an element $c=\tilde a'-\tilde a\in\mg^h$ which is linear in $a$ and such that $[\etach_J,c^J]=0$, but this is impossible, because the left side would have degree~1 in~$a$, whereas the right side has degree~0.
\end{proof}

We shall now prove that $\cU_{\rm E}(x,p;\xich,\etach)$ indeed provides the gauge transformation required in the relation (\ref{1.JK}) between $\cK_\text{E}(x,p;\ach, \bch)$ and $\cJ_\text{DHS}(x,p;a,b)$, for suitable $\ach, \bch\in~\!\mg^h$ constructed out of the elements $\xich, \etach\in\mg_b^h$ from Definition \ref{4deffff} and of the element $\tilde a\in\mg^h$ from Lemma \ref{4.lemtildea}. The result is summarized in the following theorem.
\begin{thm}\label{4.thm:1}
The flat connections $d_x - \cJ_{\rm DHS} (x,p; a,b) $ and $d_x - \cK _{\rm E} (x,p;\ach,\bch)$ are related by the gauge transformation $\cU_{\rm E}(x,p) = \cU_{\rm E}(x,p;\xich,\etach)$ defined by \eqref{4:defgauge}, whose arguments $\xich, \etach \in \mg_b^h$ are the uniquely determined solutions of (\ref{mero.mon1}), namely we have 
\begin{align}\label{4.maineq}
d_x - \cJ _{\rm DHS} (x,p;a,b)&=\cU_{\rm E} (x,p;\xich,\etach)^{-1} \Big (d_x -\cK_{\rm E} (x,p; \ach,\bch) \Big ) \cU_{\rm E} (x,p;\xich,\etach).
\end{align}

The elements $\ach$ and $\bch$ are defined as $\ach=\tilde a+\xich$ and $\bch=\etach$, with $\tilde a\in\mg^h$ as in Lemma~\ref{4.lemtildea} and $\xich,\etach\in\mg_b^h$ as in Definition \ref{4deffff}.
\end{thm}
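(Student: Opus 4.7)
The strategy is to apply the uniqueness statement of Theorem \ref{2.thm:2}. Denote by $\cJ'_{\rm total}(x,p;a,b)$ the $\mg$-valued differential such that the right-hand side of \eqref{4.maineq} equals $d_x-\cJ'_{\rm total}(x,p;a,b)$; flatness of $d_x-\cJ'_{\rm total}$ is automatic since $d_x-\cK_{\rm E}(x,p;\ach,\bch)$ is flat and gauge transformations preserve flatness. It will then remain to verify the four characterizing properties of Theorem \ref{2.thm:2}, after which uniqueness yields $\cJ'_{\rm total}=\cJ_{\rm DHS}(x,p;a,b)$.

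I would begin by expanding $\cJ'_{\rm total}$ using $\cU_{\rm E}=\gE\gminus^{-1}$, the defining differential equations \eqref{4:defUE} and \eqref{4.defU-}, the linearity of $\cK_{\rm E}$ in its first argument, the splitting $\ach=\tilde a+\xich$, $\bch=\etach$, and Corollary \ref{4.cor:1an}, which ensures that $\cK_{\rm E}(x,p;\xich,\etach)$ is independent of~$p$ and regular at $x=p$. A direct calculation yields
\begin{equation}
\cJ'_{\rm total}(x,p;a,b) \,=\, \cJ'(x,p;a,b)+\cJ_{\rm DHS}^{(0,1)}(x;b),
\end{equation}
where $\cJ'(x,p;a,b):=\gminus(x,p;b)\,\gE(x,p,\cdot\,;\xich,\etach)^{-1}\,\cK_{\rm E}(x,p;\tilde a,\etach)\,\gE(x,p,\cdot\,;\xich,\etach)\,\gminus(x,p;b)^{-1}$ is manifestly a $(1,0)$-form in $x$. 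This immediately establishes item~2 of Theorem \ref{2.thm:2} for the $(0,1)$ component. Property~3 (linearity of the $(1,0)$ part in $a^J$) then follows from the linearity of $\cK_{\rm E}$ in its first argument, the linearity of~$\tilde a$ in~$a$ from Lemma \ref{4.lemtildea}, the $a$-independence of~$\etach$, and the fact that both $\gE(x,p,\cdot\,;\xich,\etach)$ and $\gminus(x,p;b)$ lie in $\exp(\mg_b)$, so that conjugation cannot introduce new $a$-dependence.

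The remaining verification of item~1 (smoothness and single-valuedness on $\Sigma_p$, with a simple pole at $p$) and of the residue condition in item~2 is the technical core of the proof. Smoothness and the value of the residue at $x=p$ are straightforward: $\gE(x,p,\cdot\,;\xich,\etach)$ and $\gminus(x,p;b)$ are both smooth functions of $x\in\tilde\Sigma$ which equal $1$ at $x=p$, so $\cJ'$ inherits from $\cK_{\rm E}(x,p;\tilde a,\etach)$ a simple pole at $p$ with residue $[\etach_J,\tilde a^J]=[b_J,a^J]$ by Lemma \ref{4.lemtildea}; combined with the already-established flatness, this delivers the $\delta$-function source of the Maurer--Cartan equation \eqref{2.MC1}. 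For single-valuedness I would track the transformation of each factor under a generator of $\pi_1(\Sigma_p,\qq)$. Under $\mA^K$, both $\gE$ and $\gminus$ are right-multiplied by $\mu_{\rm E}(\mA^K,p,\cdot\,;\xich,\etach)$ and $\mu_-(\mA^K,p;b)$ respectively, which are equal by the first condition in \eqref{4.monp} and thus cancel in the combination $\gE\gminus^{-1}$; since $\cK_{\rm E}$ is itself $\mA^K$-invariant, $\cJ'$ is invariant. The delicate case, and the main obstacle, is the $\mB_K$-monodromy: $\cK_{\rm E}$ acquires the conjugation by $e^{\pm 2\pi i\etach_K}$ from \eqref{2.E1}, $\gE$ transforms with a left factor $e^{-2\pi i\etach_K}$ and a right factor $\mu_{\rm E}(\mB_K,p,\cdot\,;\xich,\etach)$, and $\gminus$ picks up a right factor $\mu_-(\mB_K,p;b)$; the second condition in \eqref{4.monp}, $e^{2\pi i\etach_K}\mu_{\rm E}(\mB_K,p,\cdot\,;\xich,\etach)=\mu_-(\mB_K,p;b)$, is exactly what causes these three independent sources of monodromy to cancel in $\cJ'$, revealing that the construction of $\xich,\etach$ in Lemma \ref{4.lemxieta} was designed precisely for this cancellation.
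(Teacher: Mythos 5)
Your proposal is correct and follows essentially the same route as the paper's proof: both reduce \eqref{4.maineq} to the identity $\cJ_{\rm DHS}=\gminus\,\gE^{-1}\cK_{\rm E}(x,p;\tilde a,\etach)\,\gE\,\gminus^{-1}+\cJ^{(0,1)}_{\rm DHS}$ via the linearity of $\cK_{\rm E}$ in its first argument, and then verify the characterizing properties of Theorem \ref{2.thm:2} (your explicit tracking of the $\mA^K$- and $\mB_K$-monodromy cancellations simply spells out what the paper compresses into a citation of \eqref{2.E1}, \eqref{eqmongminus}, \eqref{4.monUE} and \eqref{4.monp}, and your residue computation at $x=p$ is an equivalent packaging of the paper's appeal to \eqref{4.eqrestildea} together with flatness).
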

\begin{proof}
We want to prove that the connection on the right side of \eqref{4.maineq} satisfies the properties of Theorem \ref{2.thm:2} which uniquely characterize the DHS connection. Recall from its defining equation \eqref{4:defgauge} that $ \cU_{\rm E}(x,p;\xich,\etach)=\gE(x,p,\cdot\,;\xich,\etach)\,\gminus(x,p;b)^{-1}$. 

\sm

First of all, by definition of $\gE$ and by the linearity of $\cK_{\rm E}(x,p;\xi,\eta)$ in its argument $\xi$, one has
\begin{align}
\gE(x,p,\cdot\,;&\, \xich,\etach)^{-1}\cK_{\rm E} (x,p; \tilde a+\xich,\etach) \,\gE(x,p,\cdot\,;\xich,\etach)-\gE(x,p,\cdot\,;\xich,\etach)^{-1}\,d_x\,\gE(x,p,\cdot\,;\xich,\etach) \notag \\
&=\gE(x,p,\cdot\,;\xich,\etach)^{-1}\Big(\cK_{\rm E} (x,p; \tilde a+\xich,\etach)-\cK_{\rm E} (x,p;\xich,\etach)\Big) \,\gE(x,p,\cdot\,;\xich,\etach) \notag \\
&=\gE(x,p,\cdot\,;\xich,\etach)^{-1}\cK_{\rm E} (x,p; \tilde a,\etach) \,\gE(x,p,\cdot\,;\xich,\etach).
\end{align}
Equation \eqref{4.maineq} is therefore equivalent to proving that
\begin{align}\label{4.eqn1}
\cJ _{\rm DHS} (x,p;a,b) &= \gminus(x,p;b)\,\gE (x,p,\cdot\,;\xich,\etach)^{-1}\cK_{\rm E} (x,p; \tilde a,\etach)\, \gE (x,p,\cdot\,;\xich,\etach)\,\gminus(x,p;b)^{-1}\notag\\
&\quad - \gminus(x,p;b)\,d_x\,\gminus(x,p;b)^{-1}.
\end{align}

Notice that the first term on the right side of \eqref{4.eqn1} is a $(1,0)$-form, whereas the second term can be rewritten as $(d_x\,\gminus(x,p;b))\, \gminus(x,p;b)^{-1}$, which by definition of $\gminus$ is equal to $\cJ _{\rm DHS}^{(0,1)} (x,p;a,b)$. This implies that item 2 of Theorem \ref{2.thm:2} is satisfied. Moreover, putting together eqs.\ \eqref{2.E1}, \eqref{eqmongminus}, \eqref{4.monUE} and \eqref{4.monp}, it follows that the first term on the right side of \eqref{4.eqn1} is single-valued as a differential in~$x$, and by construction of $\tilde a$, together with the fact that $\xich,\etach\in\mg^h_b$, it is linear in $a$, thus verifying item 3. Finally, item 1 follows by combining \eqref{4.eqrestildea} with the flatness of the right side of \eqref{4.maineq}, which in turn follows from the flatness of $\cK_{\rm E}$ and the fact that gauge transformations preserve flatness.
\end{proof}

As an intermediate step of the proof, we have verified the validity of \eqref{4.eqn1}, which can be combined with the identity $(d_x\,\gminus(x,p;b))\gminus(x,p;b)^{-1}=\cJ _{\rm DHS}^{(0,1)} (x,p;a,b)$ to obtain the following immediate consequence.
\begin{cor}\label{4.cormainthm}
For $\xich,\etach\in\mg_b^h$ as in Definition \ref{4deffff} and $\tilde a\in\mg^h$ as in Lemma \ref{4.lemtildea}, one has
\begin{equation}
\cJ _{\rm DHS}^{(1,0)} (x,p;a,b)=\cU_{\rm E} (x,p;\xich,\etach)^{-1}\cK_{\rm E} (x,p; \tilde a,\etach)\,\cU_{\rm E} (x,p;\xich,\etach).
\label{DHSfromE}
\end{equation}
\end{cor}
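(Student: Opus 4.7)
The plan is to extract the identity directly from equation \eqref{4.eqn1}, which was already established during the proof of Theorem~\ref{4.thm:1}, so no new analytic input is required. First I would recall from Definition~\ref{nicedef} that
\[
\cU_{\rm E}(x,p;\xich,\etach)=\gE(x,p,\cdot\,;\xich,\etach)\,\gminus(x,p;b)^{-1},
\]
and hence $\cU_{\rm E}(x,p;\xich,\etach)^{-1}=\gminus(x,p;b)\,\gE(x,p,\cdot\,;\xich,\etach)^{-1}$. Substituting this into the first term on the right-hand side of \eqref{4.eqn1} identifies the conjugation $\gminus\gE^{-1}(\cdots)\gE\gminus^{-1}$ as precisely $\cU_{\rm E}^{-1}\cK_{\rm E}(x,p;\tilde a,\etach)\,\cU_{\rm E}$.

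Next, I would rewrite the remaining term $-\gminus(x,p;b)\,d_x\,\gminus(x,p;b)^{-1}$ of \eqref{4.eqn1} using the standard identity obtained by differentiating $\gminus\cdot\gminus^{-1}=1$, namely $-\gminus\, d_x\gminus^{-1}=(d_x\gminus)\gminus^{-1}$. Combined with the defining equation \eqref{4.defU-} for $\gminus$, this gives $-\gminus\, d_x\gminus^{-1}=\cJ_{\rm DHS}^{(0,1)}(x;b)=\cJ_{\rm DHS}^{(0,1)}(x,p;a,b)$, the last identification coming from item~2 of Theorem~\ref{2.thm:2} which states that the $(0,1)$-part is independent of $p$ and of the generators $a$.

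Putting these two pieces together recasts \eqref{4.eqn1} in the form
\[
\cJ_{\rm DHS}(x,p;a,b)=\cU_{\rm E}(x,p;\xich,\etach)^{-1}\cK_{\rm E}(x,p;\tilde a,\etach)\,\cU_{\rm E}(x,p;\xich,\etach)+\cJ_{\rm DHS}^{(0,1)}(x,p;a,b).
\]
Finally, observing that the conjugate of $\cK_{\rm E}(x,p;\tilde a,\etach)$ is a $(1,0)$-form in $x$ (because $\cK_{\rm E}$ is a holomorphic differential in $x$ and the gauge transformation $\cU_{\rm E}$, being a scalar, does not affect form type), and using the Hodge-type decomposition $\cJ_{\rm DHS}=\cJ_{\rm DHS}^{(1,0)}+\cJ_{\rm DHS}^{(0,1)}$, I can subtract $\cJ_{\rm DHS}^{(0,1)}$ from both sides and match the $(1,0)$-components to obtain \eqref{DHSfromE}. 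There is no genuine obstacle here: the derivation is a purely algebraic rearrangement of facts already in hand, and the only point deserving of care is the sign bookkeeping when passing from $\gminus\, d_x\gminus^{-1}$ to $(d_x\gminus)\gminus^{-1}=\cJ_{\rm DHS}^{(0,1)}$.
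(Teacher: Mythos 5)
Your proposal is correct and follows essentially the same route as the paper: Corollary \ref{4.cormainthm} is obtained there precisely by combining \eqref{4.eqn1} with the identity $(d_x\,\gminus(x,p;b))\,\gminus(x,p;b)^{-1}=\cJ_{\rm DHS}^{(0,1)}(x,p;a,b)$ and isolating the $(1,0)$ component. Your additional remarks (the factorization $\cU_{\rm E}^{-1}=\gminus\,\gE^{-1}$ and the observation that conjugation by a scalar preserves form type) are accurate and match what the paper already records in the proof of Theorem \ref{4.thm:1}.
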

This identity can be exploited to compare the expansion coefficients $f^{I_1  \cdots I_r }{}_J(x,p)$ of $\cJ _{\rm DHS}^{(1,0)}$ with the expansion coefficients $g^{I_1  \cdots I_r }{}_J(x,p)$ of $\cK_{\rm E}$ (see section~\ref{sec4.5}).

\begin{rmk}
\label{blagain}
Similar to Remark \ref{BLrem}, restricting the conjugation (\ref{DHSfromE}) to
genus ${h=1}$ reproduces the gauge transformation (\ref{h1.11}) relating the Brown--Levin and Calaque--Enriquez--Etingof connections. More specifically, it is convenient to rearrange (\ref{h1.12}) into
\bea
 \cJ^{(1,0)}_{\rm DHS}(x,p; a,b) \Big|_{h=1}
=  \cU_{\rm BL} (x{-}p) \,\cK_{\rm E} (x,p;a,b) \Big|_{h=1}  
\cU_{\rm BL} (x{-}p)^{-1} ,
\label{h1gauge}
\eea
where $ \cU_{\rm BL} (x{-}p)=\exp(\frac{2\pi i  b}{\Im \tau} \Im\!(x{-}p))$, and the $(1,0)$-form part of $\cU_{\rm BL} (x{-}p)^{-1} d_x \, \cU_{\rm BL} (x{-}p)$ cancels the admixture of~$b$ in the genus-one generator $\hat a = a + \pi b/(\Im \tau)$. In order to recover (\ref{h1gauge}) from the genus-one instance of
(\ref{DHSfromE}), we specialize the factorized form $\cU_{\rm E} (x,p;\xich,\etach)^{-1} = \gminus(x,p;b)\,\gE(x,p,\cdot\,;\xich,\etach)^{-1} $ of (\ref{4:defgauge}) to $h=1$. Since $\xich = - \pi b/(\Im \tau)$ at genus one, 
the two factors reduce to $\gE(x,p,\cdot\,;\xich,\etach)^{-1}  |_{h=1} = \exp( \frac{\pi b}{\Im \tau}(x{-}p))$ as well as $\gminus(x,p;b)  |_{h=1} =  \exp( {-}\frac{\pi b}{\Im \tau}(\bar x{-} \bar p))$, respectively. As a result, we have $\cU_{\rm E} (x,p;\xich,\etach)^{-1}  |_{h=1}  =  \cU_{\rm BL} (x{-}p)$. Since the element $\tilde a = \psi(a)$ of Lemma \ref{4.lemtildea} reduces to $a$ at genus one\footnote{This follows from the fact
that the expansion coefficients ${\cal R}^{I I_1 \cdots I_r}{}_J$ of $\tilde a^I$ in (\ref{4.expatilde}) below
with $r\geq 1$ are polynomials in coefficients ${\cal Q}_I{}^{J_1 \cdots J_r}$ of a Lie 
series (\ref{expcheck}) which obey
shuffle relations in their upper indices $J_1,\cdots,J_r$ and thereby vanish at genus one.}
this completes our derivation of (\ref{h1gauge}) from the specialization of (\ref{DHSfromE}) to $h=1$.

\sm

Note that the genus-one generators correspond to the placement of uppercase and lowercase indices according to $a= a^1, \ b= b_1$ and $\xich = \xich^1, \ \etah = \etach_1$.
\end{rmk}

\subsection{Iterative construction relating $\cJ_\text{DHS}$ and $\cK_\text{E}$}
\label{sec4.5}

As in the case of the first construction of a gauge transformation, we conclude this section by outlining an algorithmic procedure to iteratively construct the gauge transformation~$\cU_{\rm E}$ and the Lie algebra automorphism $a\, \cup \, b\to\ach \, \cup \, \bch$. The steps of the procedure are similar to those spelled out in section \ref{sec:3.6} (and associated appendices), therefore we will give fewer details and present just a summary, together with the resulting low-degree formulas. 

\sm

The main challenge is to obtain expressions for the first orders of the series expansion in the set of generators $b$ of the elements $\xich,\etach\in\mg_b^h$ from Lemma \ref{4.lemxieta}, as well as the series expansion of $\tilde a\in\mg^h$ from Lemma \ref{4.lemtildea} (we recall that $\ach=\tilde a-\xich$, and $\bch=\etach$). Notice that, if we simply want to relate the differentials $f^{I_1 \cdots I_r} {}_J (x,p)$ and $g^{I_1 \cdots I_r} {}_J (x,p)$, we can bypass Theorem \ref{4.thm:1} and use instead Corollary \ref{4.cormainthm}.

\subsubsection{Second order of $\xich$ and $\etach$ in $b$}

We have already spelled out in the proof of Lemma \ref{4.lemxieta} how to obtain the first order of the solutions $\xich,\etach$ to the system of equations \eqref{4.monp}, which we repeat here for convenience,
\bea\label{4.firstordxieta2}
\xich^I=-\pi Y^{IJ}b_J+\mathcal O(b^2),\quad \quad \etach_I=b_I+\mathcal O(b^2)
\eea
and which takes the same form with the choice of basepoint $y=p$ fixed in Definition \ref{4deffff}.
More generally, we need to determine two families of coefficients $\cP^{I J_1 \cdots J_r}$ and $\cQ_I{}^{J_1\cdots J_r}$, dependent on the moduli of~$\Sigma_p$, such that
\begin{align}
\xich^I&=\sum_{r=1}^{\infty} \cP^{I J_1 \cdots J_r}b_{J_1}\cdots b_{J_r},\notag\\
\etach_I&=\sum_{r=1}^{\infty} \cQ_I{}^{J_1\cdots J_r}b_{J_1}\cdots b_{J_r} 
\label{expcheck}
\end{align}
and such that the monodromy conditions \eqref{4.monp} hold, and we already know by \eqref{4.firstordxieta2} that $\cP^{IJ}=-\pi Y^{IJ}$ and $\cQ_I{}^{J}=\delta_I^J$. Combining the degree-one terms with the first equation in \eqref{4.monp}, one obtains
\bea
\label{eq:43P}
\cP^{I J_1J_2}&= &-2\pi^2
i \, \mathrm{Im} \bigg ( \int_{p}^{\mA^I \cdot \, p } \! \! \! \omega^{J_1}(t_1) \int_p^{t_1}\omega^{J_2}(t_2) \bigg)
\no \\ &&
 +\pi\int_{p}^{\mA^I \cdot \, p }\ \! \! \! \big(  \varpi^{J_1}{}_{K}(t)  Y^{KJ_2}-    \varpi^{J_2}{}_{K}(t)  Y^{KJ_1}\big)
 \notag\\
&= & i\pi^2\bigg \{ \delta^{IJ_1}_{K}Y^{KJ_2}-\delta^{IJ_2}_{K}Y^{KJ_1}
-2 \, \mathrm{Im} \bigg(\int_{p}^{\mA^I \cdot \, p }\omega^{J_1}(t_1)\int_p^{t_1}\omega^{J_2}(t_2)\bigg) \bigg \} ,
\eea
where the traceless part $\varpi^{J_1}{}_{K}(t)$ of the differential $g^{J_1}{}_{K}(t,p)$ is defined in (\ref{varpieq.2}). The second equality in (\ref{eq:43P}) follows from the formula \eqref{A-cycleintg} for the $\mA$-cycle integrals of the $g$-differentials.

\sm

Combining the degree-one terms with the second equation in \eqref{4.monp} one obtains
\begin{align}\label{eq:43Q}
\cQ_I{}^{J_1J_2}&=-\pi \, 
\mathrm{Im}\bigg(\int_{p}^{\mB_I \cdot \, p}\omega^{J_1}(t_1)\int_p^{t_1}\omega^{J_2}(t_2)\bigg)
-\pi i\delta^{J_1}_I\delta^{J_2}_I -\frac{1}{2\pi i}\Omega_{IK}\cP^{KJ_1J_2} 
\notag\\
&\quad +\pi\delta^{J_1}_I\Omega_{IK}Y^{KJ_2}-\frac{i}{2}\int_{p}^{\mB_I \cdot \, p }\big(
  \varpi^{J_1}{}_{K}(t)  Y^{KJ_2} -   \varpi^{J_2}{}_{K}(t)   Y^{KJ_1}\big),
\end{align}
which in turn determines $\cQ_I{}^{J_1J_2}$ upon substituting \eqref{eq:43P} into \eqref{eq:43Q}. Similarly, one may recursively obtain the higher-order terms.

\subsubsection{Implementing the automorphism $a \cup b \to \check a \cup \check b$}

As for the computation of the element $\tilde a\in\mg^h$ from Lemma \ref{4.lemtildea}, one can either follow the steps of the proof of Lemma \ref{4.lemtildea}, or directly use the condition
\bea\label{repeateq}
[b_I,a^I]=[\etach_I,\tilde a^I],
\eea 
together with the Ansatz\footnote{Notice that this is the analogue in this setting of (\ref{exphatb1}).} (implied by imposing linearity of~$\tilde a$ in~$a$) 
\bea
\label{4.expatilde}
\tilde a^I=\sum_{r=0}^{\infty}\cR^{I I_1\cdots I_r}{}_JB_{I_1}\cdots B_{I_r}\,a^J,
\eea
where as usual $B_I X = [b_I, X]$ for all $X \in \mg$, and $\cR^{I I_1\cdots I_r}{}_J$ depend on the moduli of~$\Sigma_p$. We will take here the second route. It follows immediately from \eqref{4.firstordxieta2} and \eqref{4.expatilde} that
\bea
[\etach_I,\tilde a^I]=[b_I,\cR^I{}_J\,a^J]+\mathcal O(b^2),
\eea
from which we deduce that \eqref{repeateq} requires $\cR^I{}_J=\delta^I_J$ and therefore, at first order,
\bea
\tilde a^I=a^I+\mathcal O(b).
\eea
This can in turn be used to compute the second order of the right side of \eqref{repeateq},
\begin{align}
[\etach_I,\tilde a^I]&=[b_I,a^I]+[\cQ_I{}^{J_1J_2}b_{J_1}b_{J_2},a^I]+[b_I,\cR^{I I_1}{}_J[b_{I_1},a^{J}]]+\mathcal O(b^3)\notag\\
&=[b_I,a^I]+\cQ_I{}^{J_1J_2}[b_{J_1},[b_{J_2},a^I]]+\cR^{I I_1}{}_J[b_I,[b_{I_1},a^{J}]]+\mathcal O(b^3),
\end{align}
where in the second equality we made use of Dynkin's lemma \eqref{eq:Dynkin}. Comparing this with the left side of \eqref{repeateq} immediately yields
\bea
\label{newrefer}
\cR^{I I_1}{}_J=-\cQ_J{}^{I I_1},
\eea
with $\cQ_J{}^{I I_1}$ as in \eqref{eq:43Q}, and one can recursively iterate this procedure to obtain the coefficient $\cR^{I I_1\cdots I_r}{}_J$ in terms of $\cQ_J{}^{I_1\cdots I_p}$ and of $\cR^{I I_1\cdots I_q}{}_J$ with $p\leq r$ and $q<r$. In fact, the computations of section \ref{3.sec:3.3} can be
straightforwardly adapted from the case of $[\hat \eta_I , \hat a^I - \xih^I] = [b_I , a^I]$ to the
present case of $[\check \eta_I , \tilde a^I ] = [b_I , a^I]$. More specifically, the results of Corollary
\ref{cor:mshuf} translate into the following all-order relation between 
the expansion coefficients $\cQ_J{}^{I_1\cdots I_r}$ of $\etach_I$ 
and $\cR^{I I_1\cdots I_q}{}_J$ of $\tilde a^I$:
\begin{align}
\cQ_J{}^{I_1 \cdots I_r}  &= - \cR^{I_1 \cdots I_r}{}_J  - \sum_{\ell=1}^{r-2} (-1)^\ell \sum_{2\leq j_1 < j_2 < \cdots < j_\ell}^{r-1}
\cR^{I_1 I_2 \cdots I_{j_1}}{}_{K_1}  \cR^{K_1 I_{j_1+1}  \cdots I_{j_2}  }{}_{K_2} \times   \cdots \notag \\
&\quad\quad\quad\quad\quad\quad\quad
\times \cdots
\cR^{K_{\ell-1} I_{j_{\ell-1}+1}  \cdots I_{j_\ell}  }{}_{K_\ell} 
\cR^{K_\ell I_{j_\ell+1}  \cdots I_{r}  }{}_{J}.
\label{qshuf}
\end{align}
Note that, as a consequence of the expansion (\ref{expcheck}) and $\etah_I \in \mg_b$, the
coefficients $\cQ_J{}^{I_1 \cdots I_r} $ obey shuffle relations (see (\ref{appbb.02}), 
(\ref{appbb.in}) for the shuffle product)
\bea
\cQ_J{}^{I_1 \cdots I_r \shuffle K_1 \cdots K_s} = 0, \ \ \ \ \ \ r,s\geq 1,
\label{qshufrel}
\eea
which via (\ref{qshuf}) imply similar relations with admixtures of lower-rank terms for the
coefficients $\cR^{I I_1\cdots I_q}{}_J$ in (\ref{4.expatilde}).

\subsubsection{Expressing the $f$-differentials in terms of the $g$-differentials}

We are now in the position to apply Corollary \ref{4.cormainthm} to relate the $f$-differentials and the $g$-differentials at degree two\footnote{At degree one, it can easily be checked that Corollary \ref{4.cormainthm} yields the identity $f_J(x,p)=g_J(x,p)$, which is correct because we have seen that both sides are equal to the normalized holomorphic Abelian differential $\omega_J(x)$.}. To do so, first notice that the degree-one computation of $\xich,\etach$ is sufficient to obtain the expression
\bea
\cU_{\rm E}(x,p)=1-2\pi i\,\mathrm{Im}\bigg(\int_p^x\omega^I(t)\bigg) b_I +\mathcal O(b^2).
\eea
Recall that, if $X,Y\in\mg$, then $\exp(X)Y\exp(X)^{-1}=\exp(\mathrm{ad}(X))(Y)$. The expression $\cU_{\rm E} (x,p;\xich,\etach)^{-1}\cK_{\rm E} (x,p; \tilde a,\etach)\,\cU_{\rm E} (x,p;\xich,\etach)$ on the right side of \eqref{DHSfromE} is therefore equal to 
\begin{align}
&g_J(x,p)\tilde a^J+\bigg(g^I{}_J(x,p)+2\pi i\,\mathrm{Im}\bigg(\int_p^x\omega^I(t)\bigg)g_J(x,p) \bigg)[\etach_I,\tilde a^J]+\mathcal O(b^2)\notag\\
&= \omega_J(x)a^J+\bigg(g^I{}_J(x,p)+2\pi i\,\mathrm{Im}\bigg(\int_p^x\omega^I(t)\bigg)\omega_J(x)-\cQ_J{}^{KI}\omega_K(x) \bigg)[b_I,a^J]+\mathcal O(b^2),
\end{align}
where to pass from the first to the second line we used the known low-degree expressions for~$\etach_I$ and~$\tilde a^J$, as well as the fact that $g_J(x,p)=\omega_J(x)$. Corollary \ref{4.cormainthm} then implies that
\begin{align}
f^{I}{}_J(x,p)=g^I{}_J(x,p)+2\pi i\,\mathrm{Im}\bigg(\int_p^x\omega^I(t)\bigg)\omega_J(x)+
\omega_K(x)\cR^{KI}{}_J(p),
\label{newfgrel}
\end{align}
where we have exposed the $p$-dependence of the coefficients $\cR^{KI}{}_J(p)$
to highlight the analogy with the relation between $f^{I}{}_J(x,p)$ and $g^I{}_J(x,p)$ in the
first line of (\ref{intro.41}):
using the expression $\TT^I(x,p)= -2\pi i \mathrm{Im} \int_p^x\omega^I$ by (\ref{appbb.19}),
we find the relation
\bea
\cR^{KI}{}_J(p) = - \cM^{KI}{}_J(p)
\label{cRiscM}
\eea
by matching (\ref{newfgrel}) with the first line of (\ref{intro.41}), where
the explicit form of the coefficient $ \cM^{KI}{}_J(p)$ in the expansion (\ref{exphatb1}) 
can be found in (\ref{appbb.37}) and (\ref{dhse.21}). Note that both sides of
(\ref{cRiscM}) additionally depend on the moduli of $\Sigma$.

\newpage

\section{Uniqueness of gauge relations between $\mathcal K_{\mathrm E}$ and 
$\mathcal J_{\mathrm{DHS}}$}
\setcounter{equation}{0}
\label{sec:new5}

In this section, we shall demonstrate that the pairs of gauge transformations and
Lie algebra automorphisms constructed in the previous sections \ref{sec:3} and \ref{sec:4}
are inverses of one another. 
For this, we develop a formalism that shows that both the flat connections on $\tilde \Sigma_p$ and the elements of the gauge group which act on them can be consistently constrained to enjoy covariance properties under the action of $\boldsymbol{\pi}={\rm Aut}(\tilde \Sigma_p/\Sigma_p)$; this will set the stage for the proof of the uniqueness of such pairs of a gauge element and an automorphism (see Theorem \ref{sec04} and also Remark \ref{whrprf}).

\sm

Recall from section \ref{sec:2} the universal cover map $\tilde\Sigma_p\to\Sigma_p$. 
We denote by $E^k(\tilde\Sigma_p)$ the space of smooth differential $k$-forms on 
$\tilde\Sigma_p$. Recall that $\mathfrak g$ is the topologically free Lie algebra over $2h$ 
generators $(a^1,\ldots,b_h)$, see section \ref{sec:1.2}. It is a complete graded Lie 
algebra, i.e.\ $\mathfrak g=\hat\oplus_{d\geq1}\mathfrak g_d$, where $\mathfrak g_d$ is the degree $d$ part of
$\mathfrak g$ (the generators being of degree 1).

\subsection{Action of the gauge group on flat connections}
 
\begin{deff}
    (a) For $k\geq0$, set $E^k(\tilde\Sigma_p,\mathfrak g):=\hat\oplus_{d\geq1}E^k(\tilde\Sigma_p)\otimes\mathfrak g_d$.

    (b) Denote by $E^0(\tilde\Sigma_p,\mathrm{exp}(\mathfrak g))$ the set of maps 
$\tilde\Sigma_p\to\mathrm{exp}(\mathfrak g)$ of the form $\tilde\Sigma_p \ni x\mapsto e^{v(x)}\in\mathrm{exp}(\mathfrak g)$, 
where $v\in E^0(\tilde\Sigma_p,\mathfrak g)$.
\end{deff} 
 The map 
$E^0(\tilde\Sigma_p,\mathfrak g)\to E^0(\tilde\Sigma_p,\mathrm{exp}(\mathfrak g))$, 
$v\mapsto e^v$ is a bijection, and $E^0(\tilde\Sigma_p,\mathrm{exp}(\mathfrak g))$
is a subgroup of the group of all maps $\tilde\Sigma_p\to\mathrm{exp}(\mathfrak g)$. 

\begin{deff}
Denote by $\mathrm{Flat}(\tilde\Sigma_p)$ the set of flat connections on $\tilde\Sigma_p$, i.e.\ the set of 
$\mathcal J\in E^1(\tilde\Sigma_p,\mathfrak g)$ such that $d\mathcal J=\mathcal J\wedge \mathcal J$ (equality in $E^2(\tilde\Sigma_p,\mathfrak g)$).    
\end{deff}

There is an action of $E^0(\tilde\Sigma_p,\mathrm{exp}(\mathfrak g))$ on $\mathrm{Flat}(\tilde\Sigma_p)$, 
given by $(u,\mathcal J)\mapsto u\bullet \mathcal J:=-ud(u^{-1})+u\mathcal Ju^{-1}$. Then $d-u\bullet \mathcal J=u\circ (d-\mathcal J)\circ u^{-1}$
(equality of maps $E^0(\tilde\Sigma_p,U\mathfrak g)\to E^1(\tilde\Sigma_p,U\mathfrak g)$, where 
$u,u^{-1},\mathcal J,u\bullet \mathcal J$ stand for the left multiplication operators by the corresponding elements). 

\begin{deff}
    (a) For $\theta\in \mathrm{Aut}(\mathfrak g)$ and 
    $u\in E^0(\tilde\Sigma_p,\mathrm{exp}(\mathfrak g))$, define $\theta(u)$ 
    to be the map $\tilde\Sigma_p\ni x\mapsto \theta(u(x))\in \mathrm{exp}(\mathfrak g)$.  The 
    assignment $(\theta,u)\mapsto \theta(u)$ defines an action of $\mathrm{Aut}(\mathfrak g)$ on 
    $E^0(\tilde\Sigma_p,\mathrm{exp}(\mathfrak g))$. 

    (b) The corresponding semidirect group  $E^0(\tilde\Sigma_p,\mathrm{exp}(\mathfrak g))\rtimes
    \mathrm{Aut}(\mathfrak g)$ is the set of pairs $(u,\theta)\in E^0(\tilde\Sigma_p,\mathrm{exp}
    (\mathfrak g))\times \mathrm{Aut}(\mathfrak g)$, equipped with the product $(u,\theta)\cdot(u',\theta')=(u\cdot \theta(u'),\theta\theta')$. 
\end{deff}

\begin{lem}\label{lem14}
    (a) An action of $E^0(\tilde\Sigma_p,\mathrm{exp}(\mathfrak g))\rtimes
    \mathrm{Aut}(\mathfrak g)$ on the set $\mathrm{Flat}(\tilde\Sigma_p)$ is defined by
    $(u,\theta)\bullet\mathcal J:=u\bullet\theta(\mathcal J)$. 

    (b) A group morphism $\mathrm{exp}(\mathfrak g)\to E^0(\tilde\Sigma_p,
    \mathrm{exp}(\mathfrak g))\rtimes    \mathrm{Aut}(\mathfrak g)$ is defined by 
        $c\mapsto (c^{-1},x\mapsto c x c^{-1})$; it is injective and its image is a normal subgroup 
    of its target. We denote by $(E^0(\tilde\Sigma_p,
    \mathrm{exp}(\mathfrak g))\rtimes    \mathrm{Aut}(\mathfrak g))/\mathrm{exp}(\mathfrak g)$
    the corresponding quotient group.  

    (c) The image of the morphism from (b) acts trivially on $\mathrm{Flat}(\tilde\Sigma_p)$
    by the action from (a), so that this action induces an action of 
    $(E^0(\tilde\Sigma_p,
    \mathrm{exp}(\mathfrak g))\rtimes    \mathrm{Aut}(\mathfrak g))/\mathrm{exp}(\mathfrak g)$
    on $\mathrm{Flat}(\tilde\Sigma_p)$. 
\end{lem}

\begin{proof}
    (a) One computes $(u,\theta)\bullet((u',\theta')\bullet\mathcal J):=
    (u,\theta)\bullet u'\bullet\theta'(\mathcal J)=u\bullet\theta(u'\bullet\theta'(\mathcal J))
    =u\bullet(\theta(u')\bullet\theta\theta'(\mathcal J))=(u\cdot \theta(u'))\bullet\theta\theta'(\mathcal J)
    =(u\cdot \theta(u'),\theta\theta')\bullet\mathcal J$. 

    (b) For $c\in \mathrm{exp}(\mathfrak g)$, set $\mathrm{Ad}_c:=(x\mapsto c x c^{-1})$. 
    Let $\Psi(c):=(c^{-1},\mathrm{Ad}_c)$. For $c,d\in \mathrm{exp}(\mathfrak g)$, one has 
    $\Psi(c)\Psi(d)=(c^{-1},\mathrm{Ad}_c)(d^{-1},\mathrm{Ad}_d)
    =(c^{-1}\mathrm{Ad}_c (d^{-1}),\mathrm{Ad}_c\mathrm{Ad}_d)
    =(d^{-1}c^{-1},\mathrm{Ad}_{cd})=\Psi(cd)$. Therefore $\Psi$ is a group
    morphism. Its injectivity is obvious. For $(u,\theta)\in E^0(\tilde\Sigma_p,
    \mathrm{exp}(\mathfrak g))\rtimes    \mathrm{Aut}(\mathfrak g)$ and 
    $c\in \mathrm{exp}(\mathfrak g)$, one computes $(u,\theta)\cdot \Psi(c)=
    (u,\theta)\cdot (c^{-1},\mathrm{Ad}_c)=(u\theta(c)^{-1},\theta\circ \mathrm{Ad}_c)
    =(\theta(c)^{-1}\mathrm{Ad}_{\theta(c)}(u),\mathrm{Ad}_{\theta(c)}\circ\theta)
    =(\theta(c)^{-1},\mathrm{Ad}_{\theta(c)})\cdot (u,\theta)=\Psi(\theta(c))\cdot (u,\theta)$, which proves the normality of the image of 
    $\Psi$ in its target. 

    (c) The statement follows from the equality $\Psi(c)\bullet \mathcal J=(c^{-1},\mathrm{Ad}_c)\bullet\mathcal J
    :=c^{-1}\bullet\mathrm{Ad}_c(\mathcal J)=\mathcal J$ for any $c \in\mathrm{exp}(\mathfrak g)$
    and $\mathcal J\in\mathrm{Flat}(\tilde\Sigma_p)$. 
\end{proof}

\begin{rmk}
    For $q$ a point in $\tilde\Sigma_p$, let $E^0(\tilde\Sigma_p,\mathrm{exp}(\mathfrak g))_q$
    be the subset of $E^0(\tilde\Sigma_p,\mathrm{exp}(\mathfrak g))$ of all maps $u$ such that 
    $u(q)=1$. Then $E^0(\tilde\Sigma_p,\mathrm{exp}(\mathfrak g))_q$ is a subgroup, stable under the 
    action of $\mathrm{Aut}(\mathfrak g)$, and the semidirect product 
    $E^0(\tilde\Sigma_p,\mathrm{exp}(\mathfrak g))_q\rtimes\mathrm{Aut}(\mathfrak g)$
    is isomorphic to the quotient of Lemma \ref{lem14}(c). 
\end{rmk}

\subsection{Covariant version of the gauge group action}

Let ${\boldsymbol \pi}:=\pi_1(\Sigma_p,y)=\mathrm{Aut}(\tilde\Sigma_p/\Sigma_p)$. There are isomorphisms 
${\boldsymbol \pi}^{ab}=H_1(\Sigma_p,\mathbb Z)\simeq
\mathbb Z^{2h}$, where ${\boldsymbol \pi}^{ab}$ is the abelianization of ${\boldsymbol \pi}$. The abelianization 
$(\mathrm{exp}(\mathfrak g))^{ab}$ of $\mathrm{exp}(\mathfrak g)$ is 
$\mathfrak g_1=(\oplus_{I=1}^h\mathbb Ca^I)\oplus
(\oplus_{I=1}^h\mathbb Cb_I)$. The abelianization of a 
group morphism $\alpha : {\boldsymbol \pi}\to \mathrm{exp}(\mathfrak g)$
is a morphism of abelian groups $\alpha^{ab} : {\boldsymbol \pi}^{ab}\to (\mathrm{exp}(\mathfrak g))^{ab}=\mathfrak g_1$, which 
gives rise to a linear map $\alpha^{ab}\otimes\mathbb C : {\boldsymbol \pi}^{ab}\otimes\mathbb C
\to (\mathrm{exp}(\mathfrak g))^{ab}=\mathfrak g_1$. 

\begin{deff}
    (a) $\mathrm{Hom}({\boldsymbol \pi},\mathrm{exp}(\mathfrak g))$ is 
the set of group morphisms ${\boldsymbol \pi}\to\mathrm{exp}(\mathfrak g)$, i.e.\ of maps $\alpha$ such that 
$\alpha(\gamma\delta)=\alpha(\gamma)\alpha(\delta)$.

(b) The group $\mathrm{exp}(\mathfrak g)$ acts by conjugation on the set
$\mathrm{Hom}({\boldsymbol \pi},\mathrm{exp}(\mathfrak g))$, namely 
$c \bullet\rho:=(\gamma\mapsto c \rho(\gamma)c^{-1})$; we denote by 
$\mathrm{Hom}({\boldsymbol \pi},\mathrm{exp}(\mathfrak g))/\mathrm{exp}(\mathfrak g)$
the corresponding quotient set. 

(c) $\alpha\in \mathrm{Hom}({\boldsymbol \pi},\mathrm{exp}(\mathfrak g))$ is called nondegenerate if
its abelianization $\alpha^{ab}\otimes\mathbb C : {\boldsymbol \pi}^{ab}\otimes\mathbb C
\to\mathfrak g_1$ is a
vector space isomorphism.  

(d) A class in $\mathrm{Hom}({\boldsymbol \pi},\mathrm{exp}(\mathfrak g))/\mathrm{exp}(\mathfrak g)$ 
is called nondegenerate if one of its 
representative (equivalently, all of them) is nondegenerate.  
\end{deff}

Then ${\boldsymbol \pi}$ acts on $E^\bullet(\tilde\Sigma_p)$; the action $(\gamma,f)\mapsto \gamma^*f$ is such that 
$(\gamma\delta)^*f=\delta^*\gamma^*f$. For example, the action on functions is given by 
$(\gamma^*f)(x):=f(\gamma x)$. 

For $\alpha,\beta\in\mathrm{Hom}({\boldsymbol \pi},\mathrm{exp}(\mathfrak g))$, define 
$E^0(\tilde\Sigma_p,\mathrm{exp}(\mathfrak g))_{\alpha,\beta}$ as the subset of 
$E^0(\tilde\Sigma_p,\mathrm{exp}(\mathfrak g))$ of elements $u$ such that 
$\gamma^*u=\alpha(\gamma)\cdot u\cdot \beta(\gamma)^{-1}$. Then the product in 
$E^0(\tilde\Sigma_p,\mathrm{exp}(\mathfrak g))$ is such that 
\begin{equation}\label{pdt:auto}
E^0(\tilde\Sigma_p,\mathrm{exp}(\mathfrak g))_{\alpha,\beta}
\cdot E^0(\tilde\Sigma_p,\mathrm{exp}(\mathfrak g))_{\beta,\lambda}
\subset
E^0(\tilde\Sigma_p,\mathrm{exp}(\mathfrak g))_{\alpha,\lambda}
\end{equation}
for any $\alpha,\beta,\lambda$. 

For $\alpha\in\mathrm{Hom}({\boldsymbol \pi},\mathrm{exp}(\mathfrak g))$, let 
$\mathrm{Flat}_\alpha(\Sigma_p)$ be the set of $\mathcal J\in E^1(\tilde\Sigma_p,\mathfrak g)$ such that 
$d\mathcal J=\mathcal J\wedge \mathcal J$ and 
$\gamma^*\mathcal J=\alpha(\gamma)\mathcal J\alpha(\gamma)^{-1}$ for any $\gamma\in {\boldsymbol \pi}$. The action of 
$E^0(\tilde\Sigma_p,\mathrm{exp}(\mathfrak g))$ on $\mathrm{Flat}(\tilde\Sigma_p)$ is then such that 
\begin{equation}\label{act:gpoid}
E^0(\tilde\Sigma_p,\mathrm{exp}(\mathfrak g))_{\alpha,\beta}
\bullet \mathrm{Flat}(\tilde\Sigma_p)_\beta\subset \mathrm{Flat}(\tilde\Sigma_p)_\alpha.  
\end{equation}
For any $\alpha\in\mathrm{Hom}({\boldsymbol \pi},\mathrm{exp}(\mathfrak g))$, a monodromy map 
\bea
\mu_\alpha : \mathrm{Flat}(\tilde\Sigma_p)_\alpha\to 
\mathrm{Hom}({\boldsymbol \pi},\mathrm{exp}(\mathfrak g))/\mathrm{exp}(\mathfrak g)
\eea
is defined by the assignment $\mathcal J\mapsto [\gamma\mapsto(\gamma^*{\boldsymbol \Gamma}_{\mathcal J})^{-1}\alpha(\gamma){\boldsymbol \Gamma}_{\mathcal J}]$, where 
${\boldsymbol \Gamma}_{\mathcal J}$ is a solution of $d{\boldsymbol \Gamma}_{\mathcal J}={\mathcal J}{\boldsymbol \Gamma}_{\mathcal J}$, where $[-]$ denotes the class modulo the action of 
$\mathrm{exp}(\mathfrak g)$. It is such that for any $\alpha,\beta\in
\mathrm{Hom}({\boldsymbol \pi},\mathrm{exp}(\mathfrak g))$, any ${\mathcal J}\in \mathrm{Flat}(\tilde\Sigma_p)_\beta$ and
any $u\in
E^0(\tilde\Sigma_p,\mathrm{exp}(\mathfrak g))_{\alpha,\beta}$, one has
\bea
\mu_\alpha(u\bullet {\mathcal J})=\mu_\beta({\mathcal J}). 
\eea

\begin{rmk}
\label{nwrmk}
For any $\alpha$ in ${\rm Hom}(\pi_1,\exp(\mathfrak{g}))$, an element of ${\rm Flat}(\tilde \Sigma_p)_\alpha$ is the same as a flat section of the principal $\exp(\mathfrak{g})$-bundle over $\Sigma_p$ defined by $\alpha$.
\end{rmk}

\begin{rmk}
Equation \eqref{pdt:auto} says that 
$\sqcup_{\alpha,\beta}E^0(\tilde\Sigma_p,\mathrm{exp}(\mathfrak g))_{\alpha,\beta}$
forms a groupoid, and \eqref{act:gpoid} defines its groupoid action on 
$\sqcup_{\beta}\mathrm{Flat}(\tilde\Sigma_p)_\beta$. 
\end{rmk}

\subsection{Stabilizers of nondegenerate connections}

Denote by $\mathbf 1 : {\boldsymbol \pi}\to \mathrm{exp}(\mathfrak g)$ the trivial representation. 
One has: 

\begin{thm}\label{sec04}
(a) If  ${\mathcal J}\in \mathrm{Flat}(\tilde\Sigma_p)_{\mathbf 1}$ is such that 
$\mu_{\mathbf 1}({\mathcal J})$ is nondegenerate and
$u\in E^0(\tilde\Sigma_p,\mathrm{exp}(\mathfrak g))_{\mathbf 1,\mathbf 1}$ is such that ${\mathcal J}=u\bullet {\mathcal J}$, then $u=1$.

(b) if ${\mathcal J}\in \mathrm{Flat}(\tilde\Sigma_p)_{\mathbf 1}$ is such that 
$\mu_{\mathbf 1}({\mathcal J})$ is nondegenerate and if 
$(\theta,u)\in\mathrm{Aut}(\mathfrak g)\times E^0(\tilde\Sigma_p,\mathrm{exp}(\mathfrak g))_{\mathbf 1,\mathbf 1}$ 
is such that 
${\mathcal J}=u\bullet \theta({\mathcal J})$, then $u$ is constant, i.e.\ $u\in\mathrm{exp}(\mathfrak g)$, 
and $\theta$ is the inner automorphism $x\mapsto u^{-1}xu$.

(c) if ${\mathcal J},{\mathcal J}'\in \mathrm{Flat}(\tilde\Sigma_p)_{\mathbf 1}$ are such that 
$\mu_{\mathbf 1}({\mathcal J})$ and $\mu_{\mathbf 1}({\mathcal J}')$ are equal and nondegenerate, 
then there exists a unique $u\in E^0(\tilde\Sigma_p,\mathrm{exp}(\mathfrak g))_{\mathbf 1,\mathbf 1}$ such that ${\mathcal J}'=u\bullet {\mathcal J}$.

(d) if ${\mathcal J}\in \mathrm{Flat}(\tilde\Sigma_p)_{\mathbf 1}$, 
$\alpha\in\mathrm{Hom}({\boldsymbol{\pi}},\mathrm{exp}(\mathfrak g))$ and 
${\mathcal K}\in \mathrm{Flat}(\tilde\Sigma_p)_{\alpha}$
are such that 
$\mu_{\mathbf 1}({\mathcal J})$ and $\mu_{\mathbf \alpha}({\mathcal K})$ are nondegenerate, 
then there exists $\theta\in\mathrm{Aut}(\mathfrak g)$ and 
$u\in E^0(\tilde\Sigma_p,\mathrm{exp}(\mathfrak g))_{\alpha,\mathbf 1}$ such that 
${\mathcal K}=u\bullet \theta({\mathcal J})$.
\end{thm}

\begin{proof}
(a) Set $v:=\mathrm{log}u$ and let $v=v_1+v_2+\cdots$ be the expansion of $v$. Let us show by induction on $n\geq1$ the statement $S(n)$: ``$v_i=0$ for $i<n$ and $v_n$ is constant''. 

   The degree $1$ part of the equality  ${\mathcal J}=u\bullet {\mathcal J}$ implies $dv_1=0$, therefore $v_1$ is constant; this
   implies $S(1)$. 

   Assume $S(n)$, so $v=v_n+\cdots$. The degree $n+1$ part of ${\mathcal J}=u\bullet {\mathcal J}$ implies $dv_{n+1}+[v_n,{\mathcal J}_1]=0$. 
   It follows that for any loop $\gamma$ in $\Sigma_p$, one has $\int_\gamma[v_n,{\mathcal J}_1]=0$. Since $v_n$ is constant, 
   this implies $[v_n,\int_\gamma {\mathcal J}_1]=0$. Since the linear span of the values $\int_\gamma {\mathcal J}_1$ where $\gamma$
   runs over all loops generates $\mathfrak g$ and since $\mathfrak g$ has trivial center, this implies $v_n=0$. 
   The equality $dv_{n+1}+[v_n,{\mathcal J}_1]=0$ then implies that $v_{n+1}$ is constant. This proves $S(n+1)$. 

   The conjunction of all the statements $S(n)$ for $n\geq1$ implies $u=1$.  This proves (a). 

(b) Applying $\mu_{\mathbf 1}$ to 
${\mathcal J}=u\bullet \theta({\mathcal J})$ and using the invariance of
$\mu_{\mathbf 1}$ under the action of 
$E^0(\tilde\Sigma_p,\mathrm{exp}(\mathfrak g))_{\mathbf 1,\mathbf 1}$, one obtains 
$\mu_{\mathbf 1}({\mathcal J})=\mu_{\mathbf 1}(\theta({\mathcal J}))$, which implies that for any representative 
$\rho\in \mathrm{Hom}({\boldsymbol \pi},\mathrm{exp}(\mathfrak g))$ of $\mu_{\mathbf 1}({\mathcal J})$ there exists $a\in \mathrm{exp}(\mathfrak g)$
such that $\rho=\mathrm{Ad}_a\circ\theta\circ\rho$. Since~$\rho$ is nondegenerate, this implies 
$\mathrm{Ad}_a\circ\theta=id$, therefore $\theta$ is inner. One then uses~(a). 
   
(c) By assumption, there exist solutions ${\boldsymbol \Gamma}_{\mathcal J}$ and ${\boldsymbol \Gamma}_{{\mathcal J}'}$ of $d{\boldsymbol \Gamma}_{\mathcal J}={\mathcal J}{\boldsymbol \Gamma}_{\mathcal J}$ and 
$d{\boldsymbol \Gamma}_{{\mathcal J}'}={\mathcal J}'{\boldsymbol \Gamma}_{{\mathcal J}'}$, such that for any $\gamma\in{\boldsymbol \pi}$, one has 
$(\gamma^*{\boldsymbol \Gamma}_{\mathcal J})^{-1}{\boldsymbol \Gamma}_{\mathcal J}=(\gamma^*{\boldsymbol \Gamma}_{{\mathcal J}'})^{-1}{\boldsymbol \Gamma}_{{\mathcal J}'}$. It follows that $u:={\boldsymbol \Gamma}_{{\mathcal J}'}{\boldsymbol \Gamma}_{\mathcal J}^{-1}$ is 
${\boldsymbol \pi}$-invariant, therefore $u\in E^0(\tilde\Sigma_p,\mathrm{exp}(\mathfrak g))_{\mathbf 1,\mathbf 1}$. 
Then $d-{\mathcal J}={\boldsymbol \Gamma}_{\mathcal J}\circ d\circ {\boldsymbol \Gamma}_{\mathcal J}^{-1}$ and $d-{\mathcal J}'={\boldsymbol \Gamma}_{{\mathcal J}'}\circ d\circ {\boldsymbol \Gamma}_{{\mathcal J}'}^{-1}$ imply
$d-{\mathcal J}'={\boldsymbol \Gamma}_{{\mathcal J}'}\circ d\circ {\boldsymbol \Gamma}_{{\mathcal J}'}^{-1}=u{\boldsymbol \Gamma}_{{\mathcal J}}\circ d\circ (u{\boldsymbol \Gamma}_{{\mathcal J}})^{-1}=u\circ (d-{\mathcal J})\circ u^{-1}$, therefore 
${\mathcal J}'=u\bullet {\mathcal J}$. This proves the existence of $u$. Its uniqueness follows from (a). 


Let us prove (d). Let ${\boldsymbol \Gamma}_{\mathcal J},{\boldsymbol \Gamma}_{\mathcal K}$ be solutions of $d{\boldsymbol \Gamma}_{\mathcal J}={\mathcal J}{\boldsymbol \Gamma}_{\mathcal J}$ and 
$d{\boldsymbol \Gamma}_{\mathcal K}={\mathcal K}{\boldsymbol \Gamma}_{\mathcal K}$. The corresponding representations ${\boldsymbol \pi}\to\mathrm{exp}(\mathfrak g)$
are respectively $\gamma\mapsto (\gamma^*{\boldsymbol \Gamma}_{\mathcal J})^{-1}{\boldsymbol \Gamma}_{\mathcal J}$ (denoted~$\mu_{\mathcal J}$) and 
$\gamma\mapsto  (\gamma^*{\boldsymbol \Gamma}_{\mathcal K})^{-1}\alpha(\gamma){\boldsymbol \Gamma}_{\mathcal K}$.
Since both representations are nondegenerate, there exists $\theta\in\mathrm{Aut}(\mathfrak g)$
such that 
\begin{equation}\label{eq1}
    (\gamma^*{\boldsymbol \Gamma}_{\mathcal K})^{-1}\alpha(\gamma){\boldsymbol \Gamma}_{\mathcal K}=\theta((\gamma^*{\boldsymbol \Gamma}_{\mathcal J})^{-1}{\boldsymbol \Gamma}_{\mathcal J})
    \text{ for any }\gamma\in\boldsymbol{\pi}. 
\end{equation}
On the other hand, since $\mu_{\mathcal J}$ is nondegenerate, there exists
a Lie algebra endomorphism~$\varphi$ of $\mathfrak g$ such that $\alpha=\varphi\circ \mu_{\mathcal J}$
(equality of group morphisms $\boldsymbol{\pi}\to\mathrm{exp}(\mathfrak g)$). 
One has ${\boldsymbol \Gamma}_{\mathcal J}\in E^0(\tilde\Sigma_p,\mathrm{exp}(\mathfrak g))_{\mathbf 1,\mu_{\mathcal J}}$, hence 
$v:=\varphi({\boldsymbol \Gamma}_{\mathcal J})={\boldsymbol \Gamma}_{\varphi({\mathcal J})}\in E^0(\tilde\Sigma_p,
\mathrm{exp}(\mathfrak g))_{\mathbf 1,\alpha}$. 
Since ${\mathcal K}\in\mathrm{Flat}(\tilde\Sigma_p)_\alpha$, it follows that 
$v\bullet {\mathcal K}\in \mathrm{Flat}(\tilde\Sigma_p)_{\mathbf 1}$. 
The representations $\boldsymbol{\pi}\to\mathrm{exp}(\mathfrak g)$
attached to $v\bullet {\mathcal K}$ and~${\mathcal K}$ are the same, therefore 
\begin{equation}\label{eq2}
(\gamma^*{\boldsymbol \Gamma}_{v\bullet {\mathcal K}})^{-1}{\boldsymbol \Gamma}_{v\bullet {\mathcal K}}=(\gamma^*{\boldsymbol \Gamma}_{\mathcal K})^{-1}\alpha(\gamma){\boldsymbol \Gamma}_{\mathcal K}
    \text{ for any }\gamma\in\boldsymbol{\pi}. 
\end{equation}
The conjunction of \eqref{eq1} and \eqref{eq2} implies 
$(\gamma^*{\boldsymbol \Gamma}_{v\bullet {\mathcal K}})^{-1}{\boldsymbol \Gamma}_{v\bullet {\mathcal K}}=
(\gamma^*{\boldsymbol \Gamma}_{\theta({\mathcal J})})^{-1}{\boldsymbol \Gamma}_{\theta({\mathcal J})})$, therefore 
$\mu_{\mathbf 1}(v\bullet {\mathcal K})=\mu_{\mathbf 1}(\theta({\mathcal J}))$. 
Then (c) implies the existence of $w\in 
E^0(\tilde\Sigma_p,\mathrm{exp}(\mathfrak g))_{\mathbf 1,\mathbf 1}$
such that $\theta({\mathcal J})=w\bullet v\bullet {\mathcal K}$. This implies the claim, 
with $u:=(w\cdot v)^{-1}$. 
\end{proof}

\begin{rmk}\label{rem:pitfalls}
The assumption of single-valuedness of $u$ is essential in Theorem \ref{sec04}(a). Indeed, if 
${\mathcal J}\in \mathrm{Flat}(\tilde\Sigma_p)_{\mathbf 1}$ then nontrivial elements 
$u\in E^0(\tilde\Sigma_p,\mathrm{exp}(\mathfrak g))$ with ${\mathcal J}=u\bullet {\mathcal J}$ 
can be constructed as 
$u_c:={\boldsymbol \Gamma}_{\mathcal J}c{\boldsymbol \Gamma}_{\mathcal J}^{-1}$, where ${\boldsymbol \Gamma}_{\mathcal J}\in E^0(\tilde\Sigma_p,\mathrm{exp}(\mathfrak g))$ is a 
solution of $d{\boldsymbol \Gamma}_{\mathcal J}={\mathcal J}{\boldsymbol \Gamma}_{\mathcal J}$ and $c\in \mathrm{exp}(\mathfrak g)$. 
\end{rmk}

\begin{rmk}
\label{whrprf}
When $\mathcal J=\mathcal J_{\mathrm{DHS}}$, a direct proof of Theorem \ref{sec04}(b), not appealing
to (a), is as follows. This proof relies on the injectivity of the map 
\begin{equation}\label{crucial:map}
\mathbb C^g\times\mathbb C^g\times E^0(\tilde\Sigma_p)_q\to E^1(\tilde\Sigma_p), \quad 
((u_I)_I,(v^I)_I,w)\mapsto u_I \bar \omega^I+v^I\omega_I+dw, 
\end{equation}
where $q$ is any point of $\tilde\Sigma_p$ and where 
$E^0(\tilde\Sigma_p)_q:=\{w\in E^0(\tilde\Sigma_p)|w(q)=0\}$ (which is established by 
pairing with $H_1(\tilde\Sigma_p,\mathbb Z)$). 

Assume $u\bullet \theta(\mathcal J_{\mathrm{DHS}})
=\mathcal J_{\mathrm{DHS}}$. Let $q\in\tilde\Sigma_p$. Let $\tilde u:=u\cdot u(q)^{-1}$ and 
$\tilde\theta:=\mathrm{Ad}_{\tilde u(q)}\circ \theta$, then one has  $\tilde u(q)=1$ together with 
$\tilde u\bullet \tilde\theta(\mathcal J_{\mathrm{DHS}})=\mathcal J_{\mathrm{DHS}}$, i.e. 
\begin{equation}\label{id:dhs}
\tilde u\cdot \tilde\theta(\mathcal J_{\mathrm{DHS}})\cdot \tilde u^{-1}-\tilde ud(\tilde u^{-1})=\mathcal J_{\mathrm{DHS}} 
\end{equation}
(equality in $E^1(\tilde\Sigma_p,\mathfrak g)$). Projecting \eqref{id:dhs} in  
$E^1(\tilde\Sigma_p,\mathfrak g_1)$, one obtains
$\tilde\theta^{ab}(\mathcal J_{1,\mathrm{DHS}})+d(v_1)=0$, where $v=v_1+v_2+\cdots$
is the expansion of $v:=\mathrm{log}\tilde u$, where $\mathcal J_{1,\mathrm{DHS}}+\cdots$
is the expansion of $\mathcal J_{\mathrm{DHS}}$, and $\tilde\theta^{ab}$ is the abelianization of
$\tilde\theta$, which is a linear endomorphism of $\mathfrak g_1$. Given the 
expansion $\mathcal J_{1,\mathrm{DHS}}=-\pi b_I\overline \omega^I+a^I\omega_I$, 
the latter equation gives 
\bea
\tilde\theta^{ab}(-\pi b_I)\overline \omega^I+\tilde\theta^{ab}(a^I)\omega_I+d(v_1)=0, 
\eea
which in view of the injectivity of \eqref{crucial:map} and $v_1(q)=0$ 
implies $v_1=0$ and $\tilde\theta^{ab}(-\pi b_I)=\tilde\theta^{ab}(a^I)=0$, so that $\tilde\theta^{ab}=0$. 
This equality implies $\tilde\theta=\mathrm{exp}(D)$, where $D$ is a derivation of $\mathfrak g$
of the form $D=D_1+D_2+\cdots$ and $D_k$ is a derivation of degree $k$. 

Let us now show by induction on $k\geq1$ the statement $S(k)$: 
$D_i=0$ for $i<k$ and $v_i=0$ for $i\leq k$. The statement $S(1)$
has already been established. Assume that $S(k)$ holds, then \eqref{id:dhs}
yields
$e^{v_{k+1}+\cdots}\cdot e^{D_k+\cdots}(\mathcal J_{\mathrm{DHS}})\cdot e^{-v_{k+1}-\cdots}
-e^{v_{k+1}+\cdots}d(e^{-v_{k+1}-\cdots})=\mathcal J_{\mathrm{DHS}}$, 
whose degree $k+1$ component is 
\bea
D_k(\mathcal J^1_{\mathrm{DHS}})+d(v_{k+1})=0
\eea
therefore
\bea
D_k(-\pi b_I)\overline \omega^I+D_k(a^I)\omega_I+d(v_{k+1})=0. 
\eea
As before, this together with the injectivity of \eqref{crucial:map} and $v_{k+1}(q)=0$ 
implies $v_{k+1}=0$ and $D_k(-\pi b_I)=D_k(a^I)=0$, so that $D_k=0$; which implies $S(k+1)$. 

It follows that $\tilde u=1$ and $\tilde\theta=id$, therefore that $u=u(q)$ and 
$\theta=\mathrm{Ad}^{-1}_{u(q)}$; this proves Theorem \ref{sec04}(b). 
\end{rmk}

\subsection{Relation of the gauge transformations of sections \ref{sec:3} and \ref{sec:4}} 

Recall that ${\boldsymbol \pi}$ is freely generated by the elements 
$\mathfrak A^K,\mathfrak B_K$, where $K=1,\ldots,h$. 

\begin{deff}
    $\rho\in\mathrm{Hom}({\boldsymbol \pi},\mathrm{exp}(\mathfrak g))$ is the group morphism defined by 
$\mathfrak A^K\mapsto 1$, $\mathfrak B_K\mapsto e^{-2\pi i b_K}$.
\end{deff}

\begin{lem}
    (a) The element $\mathcal K_{\mathrm E}$ defined in section \ref{sec:2.1} is such that 
    $\mathcal K_{\mathrm E}
    \in \mathrm{Flat}(\tilde\Sigma_p)_{\rho}$.

    (b) The element $\mathcal J_{\mathrm{DHS}}$ defined in section \ref{sec:2.2} is such that 
    $\mathcal J_{\mathrm{DHS}}\in \mathrm{Flat}(\tilde\Sigma_p)_{\mathbf 1}$. 
\end{lem}

\begin{proof}
    (a) follows from (\ref{2.E1}) and (b) follows from Theorem \ref{2.thm:2}.
\end{proof}

\begin{deff}
    (a) $\phi\in\mathrm{Aut}(\mathfrak g)$ is the automorphism induced by $a,b\mapsto \hat a,\hat b$ (see (\ref{3.KJ})). 
    
    (b) $u\in E^0(\tilde\Sigma_p,\mathrm{exp}(\mathfrak g))$ is the map 
    $x\mapsto \mathcal U_{\mathrm{DHS}}(x,p;\hat\xi,\hat\eta)^{-1}$ (see (\ref{3.cUb}), (\ref{3.mon1})). 

    (c) $\psi\in\mathrm{Aut}(\mathfrak g)$ is the automorphism induced by $a,b\mapsto \check a,\check b$ 
    (see Theorem \ref{4.thm:1}). 
    
    (d) $v\in E^0(\tilde\Sigma_p,\mathrm{exp}(\mathfrak g))$ is the map 
    $x\mapsto \mathcal U_{\mathrm E}(x,p;\check\xi,\check\eta)^{-1}$ (see (\ref{4:defgauge})). 
\end{deff}

The equalities
\begin{equation}\label{gauge:rels}
u\bullet \phi(\mathcal J_{\mathrm{DHS}})=\mathcal K_{\mathrm E}\text{ and }v\bullet 
\psi(\mathcal K_{\mathrm E})=\mathcal J_{\mathrm{DHS}}
\end{equation}
follow from (\ref{3.eqmon}) and (\ref{4.maineq}), respectively.
 
\begin{lem}\label{lem5dec}
One has 
\begin{equation}\label{toprove}
\text{$u\in E^0(\tilde\Sigma_p,\mathrm{exp}(\mathfrak g))_{\rho,\mathbf 1}$ and 
$v\in E^0(\tilde\Sigma_p,\mathrm{exp}(\mathfrak g))_{\mathbf 1,\psi\rho}$.}     
\end{equation}
\end{lem}

\begin{proof}
The first statement follows from Corollary \ref{dhsrmk}. 
Let us show the second statement. One has 
$v=wz$, where $w:=\boldsymbol{\Gamma}_-(x,p;b)$ and 
$z:=\boldsymbol{\Gamma}_{\mathrm E}(x,p,\cdot ; \check\xi,\check\eta)^{-1}$. Then 
$w\in E^0(\tilde\Sigma_p,\mathrm{exp}(\mathfrak g))_{\mathbf 1,\mu_-(-,p;b)}$ by (\ref{eqmongminus}), where
$\mu_-(-,p;b)\in\mathrm{Hom}({\boldsymbol \pi},\mathrm{exp}(\mathfrak g))$ is the group 
morphism defined by $\gamma\mapsto \mu_-(\gamma,p;b)$ (see (\ref{5:eqmumin})), and 
$z\in E^0(\tilde\Sigma_p,\mathrm{exp}(\mathfrak g))_{\mu_-(-,p;b),\psi\rho}$ by (\ref{4.monUE}) and 
(\ref{4.monp}), which together with \eqref{pdt:auto} implies the statement. 
\end{proof}

\begin{thm}\label{thm:inverses}
    There exists $c\in \mathrm{exp}(\mathfrak g)$ such that 
$\psi\phi=(x\mapsto c x c ^{-1})$ and $v\cdot \psi(u)=c^{-1}$. 
\end{thm}

\begin{proof}
  Lemma \ref{lem5dec} implies 
$v\cdot \psi(u)\in E^0(\tilde\Sigma_p,\mathrm{exp}(\mathfrak g))_{\mathbf 1,\mathbf 1}$. 
It follows from \eqref{gauge:rels} that 
$v\bullet (\psi(u)\bullet \psi\phi(\mathcal J_{\mathrm{DHS}}))=\mathcal J_{\mathrm{DHS}}$ i.e.
\bea
(v\cdot \psi(u))\bullet \psi\phi(\mathcal J_{\mathrm{DHS}})=\mathcal J_{\mathrm{DHS}}.
\eea
The result then follows from Theorem \ref{sec04}(b).    
\end{proof}

Theorem \ref{thm:inverses} shows that the pairs $(u,\phi)$ and $(v,\psi)$, which are the 
elements of $E^0(\tilde\Sigma_p,\mathrm{exp}(\mathfrak g))\rtimes   
\mathrm{Aut}(\mathfrak g)$ relating $\mathcal J_{\mathrm{DHS}}$ and $\mathcal K_{\mathrm{E}}$, 
are such that their classes in the quotient group $(E^0(\tilde\Sigma_p,
    \mathrm{exp}(\mathfrak g))\rtimes    \mathrm{Aut}(\mathfrak g))/\mathrm{exp}(\mathfrak g)$
(see Lemma \ref{lem14}) are inverse to one another. 

\begin{rmk}
\label{finalrem}
  In view of Remark \ref{rem:pitfalls}, Lemma \ref{lem5dec} appears to be a necessary step 
  in the proof of Theorem \ref{thm:inverses}.  
\end{rmk}

\newpage

\section{Comparing spaces of higher-genus polylogarithms}
\label{sec:5}

To compare the higher-genus polylogarithms that arise from the connections $\cJ_{\rm DHS}$ and~$\cK_{\rm E}$, we begin by comparing their generating series $\bGam_\text{E}$ and $\bGam _\text{DHS}$ whose expansions are displayed in (\ref{epoly}) and (\ref{dhspoly}) and given in terms of the path-ordered exponentials of $\cJ_{\rm DHS}$ and $\cK_{\rm E}$, respectively. 
\begin{lem} The generating series of higher-genus polylogarithms
$\bGam_{\rm E}$ and $\bGam _{\rm DHS}$ are related by
\bea
\bGam_{\rm E} (x,y,p;a,b) = \cU_{\rm DHS}(x,p)^{-1} \, \bGam _{\rm DHS}(x,y,p;\ta, \tb) \, \cU_{\rm DHS}(y,p)
\label{5.eq1}
\eea
as well as by
\bea
\bGam_{\rm DHS} (x,y,p;a,b) = \cU_{\rm E}(x,p)^{-1} \, \bGam _{\rm E}(x,y,p;\check{a}, \check{b}) \,  \, \cU_{\rm E}(y,p),
\label{5.eq2}
\eea
with $\cU_{\rm DHS}(x,p), \ta, \tb$ as in section \ref{sec:3} and $\cU_{\rm E}(x,p), \check{a}, \check{b}$ as in section \ref{sec:4}.
\end{lem}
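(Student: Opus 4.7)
The plan is to verify each of the two identities by the standard uniqueness argument for solutions of linear ODEs: both sides are functions of $x$ (with $y,p$ fixed) taking values in $\exp(\mg)$, and it suffices to check that they satisfy the same first-order linear differential equation with the same value at $x=y$.

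For the first identity \eqref{5.eq1}, define $F(x) = \cU_{\rm DHS}(x,p)^{-1}\,\bGam_{\rm DHS}(x,y,p;\ta,\tb)\,\cU_{\rm DHS}(y,p)$. At $x=y$ we have $\bGam_{\rm DHS}(y,y,p;\ta,\tb)=1$, so $F(y)=1$, matching the initial condition $\bGam_{\rm E}(y,y,p;a,b)=1$. Applying $d_x$ and using $d_x[\cU_{\rm DHS}^{-1}] = -\cU_{\rm DHS}^{-1}(d_x\cU_{\rm DHS})\cU_{\rm DHS}^{-1}$ together with the defining equation $d_x\bGam_{\rm DHS}(x,y,p;\ta,\tb)=\cJ_{\rm DHS}(x,p;\ta,\tb)\bGam_{\rm DHS}(x,y,p;\ta,\tb)$, the two terms combine into
\[
d_x F(x) \;=\; \bigl(-\cU_{\rm DHS}(x,p)^{-1}\,d_x\cU_{\rm DHS}(x,p) \,+\, \cU_{\rm DHS}(x,p)^{-1}\,\cJ_{\rm DHS}(x,p;\ta,\tb)\,\cU_{\rm DHS}(x,p)\bigr)\,\cU_{\rm DHS}(x,p)^{-1}\,\bGam_{\rm DHS}(x,y,p;\ta,\tb)\,\cU_{\rm DHS}(y,p).
\]
By Theorem \ref{3.thm:1}, the bracketed expression equals $\cK_{\rm E}(x,p;a,b)$, so $d_x F(x)=\cK_{\rm E}(x,p;a,b)\,F(x)$. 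Uniqueness of solutions to this linear ODE with the given initial condition then forces $F(x)=\bGam_{\rm E}(x,y,p;a,b)$.

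The second identity \eqref{5.eq2} is proved along the same lines: set $G(x) = \cU_{\rm E}(x,p)^{-1}\,\bGam_{\rm E}(x,y,p;\check a,\check b)\,\cU_{\rm E}(y,p)$, verify $G(y)=1$, differentiate using $d_x\bGam_{\rm E}(x,y,p;\check a,\check b)=\cK_{\rm E}(x,p;\check a,\check b)\,\bGam_{\rm E}(x,y,p;\check a,\check b)$, and invoke Theorem \ref{4.thm:1} (which gives the analog of \eqref{3.KJa} with $\cJ_{\rm DHS}$ and $\cK_{\rm E}$ interchanged and the corresponding hatted generators replaced by checked ones) to identify the resulting coefficient of $G(x)$ with $\cJ_{\rm DHS}(x,p;a,b)$. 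Uniqueness then gives $G(x)=\bGam_{\rm DHS}(x,y,p;a,b)$.

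No serious obstacle is anticipated, since both statements are essentially a rewriting of Theorems \ref{3.thm:1} and \ref{4.thm:1} at the level of parallel-transport operators. The only small subtlety is making sure that the initial-point factors $\cU_{\rm DHS}(y,p)$ and $\cU_{\rm E}(y,p)$, which are $x$-independent and therefore pass through $d_x$ untouched, are correctly placed on the right so as to absorb the normalization at $x=y$; once that is observed the calculation is purely mechanical.
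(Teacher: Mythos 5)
Your proposal is correct and is exactly the argument the paper uses: the paper's proof states that both sides of each identity satisfy the same differential equation with the same initial condition, by Theorems \ref{3.thm:1} and \ref{4.thm:1} respectively, and your computation simply spells out the details of that uniqueness argument. The differentiation of $F(x)$ and the identification of the bracketed expression with $\cK_{\rm E}(x,p;a,b)$ via \eqref{3.KJ} are carried out correctly, as is the placement of the $x$-independent factors $\cU_{\rm DHS}(y,p)$ and $\cU_{\rm E}(y,p)$.
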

\begin{proof}
It follows from Theorem \ref{3.thm:1} (resp.\ Theorem \ref{4.thm:1}) that both sides of \eqref{5.eq1} (resp.\ \eqref{5.eq2}) satisfy the same differential equation with the same initial conditions.
\end{proof}

Recall from section \ref{sec:1.2} that, given a flat connection $d-\cJ$, one can define a space $\cH(\cJ)$ of polylogarithms associated with $\cJ$, which is the algebra\footnote{The algebra structure is the usual algebra structure of a space of functions.} of smooth multiple-valued functions generated over $\mathbb C$ by all the polylogarithms $\Gamma(\mw;x,y)$, namely the coefficients of the path-ordered exponential $\bGam(x,y;c)=\text{P} \exp \int _y ^x \cJ(\ti;c)$, as specified by (\ref{1.words}). The polylogarithms $\Gamma(\mw;x,y)$ are considered here only as functions of~$x$, namely we fix the Riemann surface and we fix a choice of integration base-point~$y\in\tilde\Sigma_p$, hence the (complex) coefficients of the algebra $\cH(\cJ)$ can be functions of these parameters. As a consequence of the two constructions presented in sections~\ref{sec:3} and~\ref{sec:4}, respectively, we are able to deduce the following precise relation between the algebras of polylogarithms associated with $\cJ_{\rm DHS}$ and~$\cK_{\rm E}$.

\begin{thm}\label{5.thmpolspaces}
One has 
\bea
\label{5.eq3}
\cH(\cJ_{\rm DHS})=\cH(\cK_{\rm E})\cdot\cH(\cJ^{(0,1)}_{\rm DHS}),
\eea
where $\cH(\cJ^{(0,1)}_{\rm DHS})$ denotes the algebra of polynomials in the anti-holomorphic iterated integrals
$\int^x_p \bar \omega_{I_1}(t_1) \int^{t_1}_p\bar \omega_{I_2}(t_2) \cdots \int^{t_{r-1}}_p \bar \omega_{I_r}(t_r)$, $r\geq 1$, with complex coefficients which may depend on the other fixed parameters, such as~$y$ or the moduli of~$\Sigma_p$.
\end{thm}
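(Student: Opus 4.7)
My plan is to prove the two inclusions $\cH(\cJ_{\rm DHS})\subseteq\cH(\cK_{\rm E})\cdot\cH(\cJ^{(0,1)}_{\rm DHS})$ and $\cH(\cK_{\rm E})\cdot\cH(\cJ^{(0,1)}_{\rm DHS})\subseteq\cH(\cJ_{\rm DHS})$ separately by reading the gauge-equivalences (\ref{5.eq1}) and (\ref{5.eq2}) coefficient-by-coefficient, after expansion in words in $a\cup b$. The key observation driving the forward inclusion is the factorisation
\begin{equation*}
\cU_{\rm E}(x,p)^{-1}=\gminus(x,p;b)\,\gE(x,p,\cdot\,;\xich,\etach)^{-1},
\end{equation*}
in which the two factors contribute to disjoint generating sets of $\cH(\cJ^{(0,1)}_{\rm DHS})$ and $\cH(\cK_{\rm E})$ respectively; this will take care of both the $x$- and $\bar x$-dependence in a clean way.

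For $\cH(\cJ_{\rm DHS})\subseteq\cH(\cK_{\rm E})\cdot\cH(\cJ^{(0,1)}_{\rm DHS})$ I would substitute this factorisation into (\ref{5.eq2}) and expand each factor in words in $a\cup b$. By construction, $\gminus(x,p;b)$ is the path-ordered exponential of $\cJ^{(0,1)}_{\rm DHS}(x;b)=-\pi\bom^I(x)b_I$, so its word-coefficients (in the $b$-alphabet alone) are, up to powers of $-\pi$, exactly the generators of $\cH(\cJ^{(0,1)}_{\rm DHS})$. The factor $\gE(x,p,\cdot\,;\xich,\etach)^{-1}$, after expansion of the Lie series $\xich,\etach\in\mg_b^h$ into words in $b$, produces polynomials in the Enriquez polylogarithms $\Gamma_{\rm E}(\mw;x,p,p)$ with modular-dependent coefficients; applying the path-concatenation formula (\ref{1.comp}) to $\bGam_{\rm E}(x,p,p;\cdot)=\bGam_{\rm E}(x,y,p;\cdot)\bGam_{\rm E}(y,p,p;\cdot)$ rewrites each $\Gamma_{\rm E}(\mw;x,p,p)$ as a polynomial in the $\Gamma_{\rm E}(\mw';x,y,p)\in\cH(\cK_{\rm E})$ with $x$-independent coefficients $\Gamma_{\rm E}(\mw'';y,p,p)$. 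Expanding $\ach,\bch$ in $a\cup b$ likewise places the coefficients of $\bGam_{\rm E}(x,y,p;\ach,\bch)$ into $\cH(\cK_{\rm E})$, and $\cU_{\rm E}(y,p)$ is $x$-independent. Collecting the word-coefficient of $\mw$ on the right side of (\ref{5.eq2}) realises every $\Gamma_{\rm DHS}(\mw;x,y,p)$ as an element of $\cH(\cK_{\rm E})\cdot\cH(\cJ^{(0,1)}_{\rm DHS})$.

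For the reverse inclusion I would prove the two statements $\cH(\cJ^{(0,1)}_{\rm DHS})\subseteq\cH(\cJ_{\rm DHS})$ and $\cH(\cK_{\rm E})\subseteq\cH(\cJ_{\rm DHS})$ separately, after which the conclusion follows because $\cH(\cJ_{\rm DHS})$ is a ring. The first holds because, by item~3 of Theorem~\ref{2.thm:2}, $\cJ_{\rm DHS}^{(1,0)}$ is linear in the $a^J$, so only $\cJ_{\rm DHS}^{(0,1)}=-\pi\bom^Ib_I$ can contribute to the coefficient of any word $b_{I_1}\cdots b_{I_r}$ in $\bGam_{\rm DHS}(x,y,p;a,b)$; this coefficient is therefore $(-\pi)^r$ times an iterated integral of the $\bom^{I_k}$ from $y$ to $x$, and path-concatenation applied to the base-point $p$ expresses such an integral as a polynomial in the generators of $\cH(\cJ^{(0,1)}_{\rm DHS})$ with $x$-independent coefficients depending on $y,p$ and the moduli. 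The second follows from the same bookkeeping applied to (\ref{5.eq1}): the word-coefficients of $\cU_{\rm DHS}(x,p)=\gDHS(x,p,p;\xih,\etah)$ and of $\bGam_{\rm DHS}(x,y,p;\ta,\tb)$ are, after expansion of the relevant Lie series in $a\cup b$, polynomials in $\Gamma_{\rm DHS}(\mw;x,y,p)\in\cH(\cJ_{\rm DHS})$ with constant coefficients, while $\cU_{\rm DHS}(y,p)$ is $x$-independent.

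The main obstacle will be the careful bookkeeping around base-point changes and potential logarithmic singularities at $p$: the auxiliary polylogarithms $\Gamma_{\rm E}(\mw;x,p,p)$ and $\Gamma_{\rm DHS}(\mw;x,p,p)$ appearing through $\gE(x,p,\cdot\,;\xich,\etach)$ and $\gDHS(x,p,p;\xih,\etah)$ may individually require tangential-base-point regularisation, whereas the specific Lie-series combinations dictated by $\xich,\etach$ (respectively $\xih,\etah$) are regular thanks to the vanishing commutators established in Lemmas~\ref{3.lem:2} and~\ref{4.lemxieta}. I would therefore first expand each gauge transformation to a fixed word-length in $a\cup b$ and only afterwards apply path-concatenation, so that the regular $x$-dependent pieces belonging to $\cH(\cK_{\rm E})$ or $\cH(\cJ_{\rm DHS})$ are cleanly separated from the $x$-independent pieces absorbed into the allowed constant coefficients of the two rings.
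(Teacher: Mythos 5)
Your proposal is correct and follows essentially the same route as the paper: both inclusions are read off from the gauge identities (\ref{5.eq1}) and (\ref{5.eq2}), using the factorization $\cU_{\rm E}(x,p)^{-1}=\gminus(x,p;b)\,\gE(x,p,\cdot\,;\xich,\etach)^{-1}$ for the forward direction and the fact that $\cU_{\rm DHS}(x,p)=\gDHS(x,p,p;\xih,\etah)$ for the reverse, with base-point changes absorbed via path concatenation. You merely spell out in more detail the steps the paper leaves implicit (in particular the ``obvious'' inclusion $\cH(\cJ^{(0,1)}_{\rm DHS})\subset\cH(\cJ_{\rm DHS})$ via linearity of $\cJ^{(1,0)}_{\rm DHS}$ in $a$, and the regularity at the base point $p$ guaranteed by the vanishing commutators).
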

\begin{proof} The generators of $\cH(\cJ^{(0,1)}_{\rm DHS})$ are immediately deduced from the expansion of the path-ordered 
exponential of $\cJ^{(0,1)}_{\rm DHS}(x;b) = - \pi b_I \bar \omega^I(x)$, see section \ref{J01sec}.

\sm

The inclusion $\cH(\cJ_{\rm DHS})\subset\cH(\cK_{\rm E})\cdot\cH(\cJ^{(0,1)}_{\rm DHS})$ follows by combining the identity \eqref{5.eq2} with the expansions displayed in (\ref{epoly}) and (\ref{dhspoly}), because we recall that $\cU_{\rm E}(x,p)=\gE(x,p,p;\check{\xi},\check{\eta})\,\gminus(x,p;b)^{-1}$, with $\gminus(x,p;b)=\text{P} \exp \int _p ^x \cJ^{(0,1)}_{\rm DHS}(t;b)$ and $\check{\xi},\check{\eta}$ as in 
Definition \ref{4deffff}.

\sm

To conclude the proof we need to show the opposite inclusion $\cH(\cK_{\rm E})\cdot\cH(\cJ^{(0,1)}_{\rm DHS})\subset \cH(\cJ_{\rm DHS})$. This follows by combining the inclusion $\cH(\cK_{\rm E})\subset \cH(\cJ_{\rm DHS})$, which is a consequence of \eqref{5.eq1} and of the fact that $\cU_{\rm DHS}(x,p)=\gDHS(x,p,p;\xih,\etah)$  with $\hat{\xi},\hat{\eta}$ as in 
Definition \ref{3deffff}, with the obvious inclusion $\cH(\cJ^{(0,1)}_{\rm DHS})\subset \cH(\cJ_{\rm DHS})$.
\end{proof}

If we denote by $\mathrm{Hol}(\tilde\Sigma_p)$ the space of all holomorphic multiple-valued functions on~$\Sigma_p$, then we have the following consequence of Theorem \ref{5.thmpolspaces}.
\begin{cor} \label{5.corol}
One has
\bea 
\cH(\cK_{\rm E})=\cH(\cJ_{\rm DHS})\cap \mathrm{Hol}(\tilde\Sigma_p).
\eea
\end{cor}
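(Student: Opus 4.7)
The plan is to prove the two inclusions separately, with the forward inclusion $\cH(\cK_{\rm E})\subseteq\cH(\cJ_{\rm DHS})\cap\mathrm{Hol}(\tilde\Sigma_p)$ being immediate: the empty word gives $1\in\cH(\cJ^{(0,1)}_{\rm DHS})$, so Theorem~\ref{5.thmpolspaces} yields $\cH(\cK_{\rm E})\subseteq\cH(\cJ_{\rm DHS})$; and since $\cK_{\rm E}(\,\cdot\,,p;a,b)$ is a holomorphic $\mg$-valued differential on the simply connected universal cover $\tilde\Sigma_p$, every coefficient of its path-ordered exponential is holomorphic (and possibly multiple-valued on $\Sigma_p$), hence lies in $\mathrm{Hol}(\tilde\Sigma_p)$.

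For the reverse inclusion, we would fix $F\in\cH(\cJ_{\rm DHS})\cap\mathrm{Hol}(\tilde\Sigma_p)$ and choose a $\mathbb C$-vector space basis $\{H_\alpha\}_{\alpha\in\cA}$ of $\cH(\cJ^{(0,1)}_{\rm DHS})$ that contains the constant $H_0:=1$. By Theorem~\ref{5.thmpolspaces}, $F$ admits a finite decomposition $F=\sum_\alpha G_\alpha H_\alpha$ with $G_\alpha\in\cH(\cK_{\rm E})$, and the goal reduces to showing that $G_\alpha=0$ for every $\alpha\neq 0$, which then forces $F=G_0\in\cH(\cK_{\rm E})$.

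To this end the key step is to establish that the multiplication map
\[
\mathrm{Hol}(\tilde\Sigma_p) \otimes_{\mathbb C} \cH(\cJ^{(0,1)}_{\rm DHS}) \;\longrightarrow\; C^\infty(\tilde\Sigma_p),
\qquad \varphi\otimes\psi\longmapsto\varphi\cdot\psi,
\]
is injective -- equivalently, that any $\mathbb C$-basis of $\cH(\cJ^{(0,1)}_{\rm DHS})$ is $\mathrm{Hol}(\tilde\Sigma_p)$-linearly independent. Granted this, the two presentations $F=F\cdot H_0$ and $F=\sum_\alpha G_\alpha H_\alpha$ of the same smooth function correspond to the same element of the tensor product, and matching coefficients in the basis $\{H_\alpha\}$ yields $G_0=F$ and $G_\alpha=0$ for all other $\alpha$.

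The main obstacle is thus the injectivity claim. We would verify it locally: in any coordinate patch on $\tilde\Sigma_p$, which being simply connected makes the relevant multi-valued holomorphic and antiholomorphic functions single-valued, we expand both tensor factors as Taylor series in $x-x_0$ and $\bar x-\bar x_0$ respectively, so that the $\mathbb C$-linear independence of the antiholomorphic basis elements forces every Taylor coefficient of the holomorphic factors to vanish. Alternatively, one could proceed by induction on word length, applying $\bar\partial_x$ to a putative relation $\sum_\alpha G_\alpha H_\alpha=0$ and peeling off one antiholomorphic factor at a time via the pointwise $\mathbb C$-linear independence of $\bar\omega^1,\ldots,\bar\omega^h$; but the Taylor-expansion route is more direct.
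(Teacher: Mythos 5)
Your proof is correct and follows essentially the same route as the paper, whose own proof of this corollary is a single sentence (combine Theorem~\ref{5.thmpolspaces} with the fact that $\cJ^{(0,1)}_{\rm DHS}(x;b)$ is purely anti-holomorphic in $x$). The injectivity of the multiplication map $\mathrm{Hol}(\tilde\Sigma_p)\otimes_{\mathbb C}\cH(\cJ^{(0,1)}_{\rm DHS})\to C^\infty(\tilde\Sigma_p)$, which you verify by local Taylor expansion in $x-x_0$ and $\bar x-\bar x_0$, is exactly the step the paper leaves implicit, and your argument for it is sound.
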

\begin{proof}
The statement follows by combining \eqref{5.eq3} with the fact that $\cJ^{(0,1)}_{\rm DHS}(x;b)$ is purely anti-holomorphic in $x$.
\end{proof}

\newpage

\appendix

\section{Proof of Lemma \ref{3.lem:4}}
\setcounter{equation}{0}
\label{sec:C}

This appendix is dedicated to establishing properties of the coefficients $ \XX^{I J_1 \cdots J_r} $ in the expansion (\ref{A.Lie}) of the solution $\xih$ to the monodromy conditions $\cU_{\rm DHS} (\mA^K\cdot p, p; \xih, \etah)=1$ 
of (\ref{3.mon1}). Their independence on $p$ and symmetry properties (\ref{shprop}), (\ref{appbb.06}) under permutations of the indices $I,J_1,\cdots, J_r$ will be proven in sections \ref{sec:nopprf} and \ref{sec:shprf}, respectively.

\subsection{Proving independence of $\XX$ on $p$}
\label{sec:nopprf}

\begin{proof} The proof of item 1.\  of Lemma \ref{3.lem:4}, namely the  independence of the coefficients $\XX$ on the point $p$, proceeds via a recursive formula for the derivatives  $\p_p \xih^J$ and $\bar \p _p \xih^J$. These derivatives are obtained by differentiating the $\mA$-monodromy conditions with respect to~$p$, while keeping $\etah$ fixed (the derivative with respect to $\bar p$ is obtained  analogously), 
\bea
\label{A.Jder}
\p_p \, \cU_\text{DHS} (\mA^K \cdot p , p; \xih, \etah) & = &
\int _p ^{\mA^K \cdot \, p }  \cU_\text{DHS} (\mA^K \cdot p, t; \xih, \etah) \cJ_\text{DHS}^{(1,0)}  (t, \cdot \, ; \p_p \xih, \etah)\,
\cU_\text{DHS} (t, p; \xih, \etah) 
\no \\ &&
+ \Big [ \cJ_\text{DHS}^{(1,0)}  (p, \cdot \, ; \xih, \etah) , \cU_\text{DHS} ( \mA^K \cdot p , p; \xih, \etah) \Big ].
 \eea
 Here we have used the linearity of $\cJ_\text{DHS}^{(1,0)}  (t, \cdot \, ;  \xih, \etah) $  in $\xih$  to carry out the derivative of $\cJ_\text{DHS}  (t, \cdot \, ;  \xih, \etah) $ and to regroup the result in the form $\cJ_\text{DHS}^{(1,0)}  (t, \cdot \, ; \p_p \xih, \etah) $.  In view of the monodromy conditions $\cU_\text{DHS} (\mA^K \cdot p, p; \xih, \etah)=1$, the left side of (\ref{A.Jder}) vanishes;  the commutator on the second line of the right side vanishes for the same reason, and the remaining equation may be simplified as follows,
\bea
\label{A.intA}
\int _p ^{ \mA^K \cdot \, p} \cU_\text{DHS}  (t,p; \xih, \etah)^{-1}  \cJ_\text{DHS}^{(1,0)} (t, \cdot \, ; \p_p \xih, \etah) \, 
\cU_\text{DHS} (t, p; \xih, \etah) 
=0.
 \eea
 Both $\cU_\text{DHS} (t,p;\xih,\etah)$ and $\cJ_\text{DHS}^{(1,0)}(t, \cdot \, ; \p_p \xih, \etah)$ admit expansions in powers of $\etah$,
\bea
\label{A.expand}
\cU_\text{DHS} (t,p; \xih, \etah)^{-1} X \cU_\text{DHS} (t, p; \xih, \etah) 
& = & X + \sum_{r=1}^\infty \cT^{I_1 \cdots I_r} (t,p) \hat H_{I_1} \cdots \hat H_{I_r} X,
 \\
\cJ_\text{DHS}^{(1,0)}(t, \cdot \, ; \p_p \xih, \etah) & = & \om_J(t) \p_p \xih ^J 
+ \sum_{r=1}^\infty \p_t \Phi ^{I_1 \cdots I_r}{}_J (t) \hat H_{I_1 } \cdots \hat H_{I_r} \p_p \xih ^J.
\no
\eea
Combining the two expansions, we may rearrange  the result in terms of a single expansion,
\begin{align}
\label{A.UJ}
\cU_\text{DHS} & (t,p; \xih, \etah)^{-1}  \cJ_\text{DHS}^{(1,0)} (t, \cdot \, ; \p_p \xih, \etah) \,  
\cU_\text{DHS} (t, p; \xih, \etah) 
\no \\ & =
\om_J(t) \p_p \xih ^J + \sum_{r=1}^\infty \cS^{I_1 \cdots I_r} {}_J (t,p) \hat H_{I_1} \cdots \hat H_{I_r} \p_p \xih ^J,
\end{align}
where the coefficients $\cS$ are functions of $\cT$, $\p_t \Phi$ and $\om$. Substituting the expansion (\ref{A.UJ})  into (\ref{A.intA}) and carrying out the $\mA^K$ integral of the first term in the expansion  gives
\bea
\label{A.rec}
\p_p \xih^K + \sum_{r=1}^\infty \PPer^{K I_1 \cdots I_r}{}_J(p)  \hat H_{I_1} \cdots \hat H_{I_r} \p_p \xih ^J =0,
\eea
where $\PPer$ has been defined by,
\bea
\PPer^{K I_1 \cdots I_r}{}_J(p)= \int _p ^{\mA^K \cdot \, p} \cS^{I_1 \cdots I_r} {}_J (t,p).
\eea
Finally, we substitute the derivative $\p_p \xih^J$ obtained from differentiating the Lie series (\ref{A.Lie}) into (\ref{A.rec}), and identify terms of degree $r\geq 1$ in $\etah$,
\bea
 \p_p \XX^{K I_1 \cdots I_r} (p) \etah _{I_1} \cdots \etah_{I_r} 
=-  \sum_{{ s,t \geq 1 \atop s+t=r}}^\infty \PPer^{K I_1 \cdots I_s} {}_M (p) \p_p \XX^{M J_1 \cdots J_t} (p) \hat H_{I_1} \cdots \hat H_{I_s}   \etah _{J_1} \cdots \etah_{J_t}.
\qquad
 \eea
The key observation is that the rank of the tensor $\p_p \XX^{M J_1 \cdots J_t} (p)$ under the sum on the right side is strictly smaller than the rank of the tensor $\p_p \XX^{K I_1 \cdots I_r} (p)$ on the left side since $t<r$.  Since we have already shown that $\XX^{MJ}=\pi Y^{MJ}$ is independent of $p$, it follows that $\p_p \XX^{K I_1 \cdots I_r} (p)=0$ for all $r \geq 1$ by induction on $r$. From implementing the corresponding argument on the derivative $\p _{\bar p} \xih$ we conclude analogously that $\p_{\bar p} \XX^{K I_1 \cdots I_r} (p)=0$ for all $r \geq 1$, so that $\XX^{K I_1 \cdots I_r} (p)$ is independent of $p$ for all $r \geq 1$.  This result can be readily confirmed at rank three by differentiating 
the expressions (\ref{A.AB}) or (\ref{appbb.27}) for $\XX^{LIJ}$ below
with respect to $p$ and $\bar p$.
\end{proof}

\subsection{Proving the shuffle property and cyclic invariance}
\label{sec:shprf}

The shuffle product is an associative and  commutative binary operation on words for which  the empty set $\emptyset$ is the neutral element (see footnote \ref{ft:shuf}).  For multi-index words $J_1 \cdots J_r$ and $ K_1 \cdots K_s$ of length $r,s\geq 1$,  the shuffle product is defined recursively by
 \bea
J_1 \cdots J_r \shuffle K_1 \cdots K_s =  J_1 (J_2 \cdots J_r \shuffle K_1 \cdots K_s  ) 
+ K_1 (J_1 \cdots J_r \shuffle K_2 \cdots K_s ) ,
 \label{appbb.02}
 \eea
along with the neutrality of the empty set
 \bea
J_1 \cdots J_r  \shuffle \emptyset = J_1 \cdots J_r   , \ \ \ \ \ \ r \geq 0 .
 \label{appbb.in}
 \eea
On functions of multi-index words, such as $\XX^{J_1 \cdots J_r}$, the shuffle product acts linearly
\bea
\XX^{J_1 \cdots J_r \, \shuffle \, K_1 \cdots K_s} & = & 
\XX^{J_1 (J_2 \cdots J_r \, \shuffle \, K_1 \cdots K_s  ) } + \XX^{K_1 (J_1 \cdots J_r \, \shuffle \, K_2 \cdots K_s )} ,
\no \\
\XX^{J_1 \cdots J_r \, \shuffle \, \emptyset} & = & \XX^{J_1 \cdots J_r} .
\eea
\begin{proof}
The proof of item 2.\ of Lemma \ref{3.lem:4}, namely the relation (\ref{shprop}) claiming the vanishing
of all shuffles $\XX^{I (J_1 \cdots J_r \shuffle  K_1 \cdots K_s)} $ with $ r,s \geq 1$, proceeds as follows. 
The monodromy condition $\cU_{\rm DHS} (\mA^K \cdot p, p; \xih, \etah){=1}$ is a group-like element for $\xih,\etah \in \mg^h_b$ so that each order $\XX^{I J_1 \cdots J_r}\etah_{J_1}\cdots \etah_{J_r} $ in the expansion of $\xih$ in (\ref{A.Lie})  must be a Lie polynomial in $\etah_J$. By Ree's theorem \cite{Ree:shuffle}, this implies the shuffle properties in~(\ref{shprop}).

\sm

The proof of item 3.\ of Lemma \ref{3.lem:4}, namely cyclic permutation relation of (\ref{appbb.06}), proceeds by exploiting  the vanishing of the commutator $[ \etah_I ,\xih^I]$ established in Lemma \ref{3.lem:2}.
Given that all $r^{\rm th}$-order contributions to $\xih^I$ in $\etah_J$ are bound to separately result
in vanishing commutators, we have
\bea
0 = 
\XX^{I J_1 J_2 \cdots J_r} [ \etah_I , \etah_{J_1} \etah_{J_2}\cdots \etah_{J_r} ]
= 
 \etah_I  \etah_{J_1} \etah_{J_2}\cdots \etah_{J_r}
 ( \XX^{I J_1 J_2 \cdots J_r} - \XX^{ J_1 J_2 \cdots J_r I }).
 \label{appbb.05}
\eea
Linear independence of different words in $ \etah_{K}$ then completes the proof.
\end{proof}

\subsection{Several remarks}

We conclude this appendix with several remarks upon Lemma \ref{3.lem:4}.
\begin{itemize}
\item By combining the symmetry properties (\ref{shprop}) and (\ref{appbb.06}),
the number of independent permutations of $\XX^{I_1 I_2 \cdots I_s}$ in the $s$
indices is $(s-2)!$: the cyclic symmetry (\ref{appbb.06}) can be used to move
any given index $I_k$ (with fixed $k=1,2,\cdots ,s$) to the first entry, and the vanishing shuffles
(\ref{shprop}) imply that only $(r-1)!$ out of the $r!$ permutations
of $\XX^{I J_1 J_2 \cdots J_r} $ in $J_1,\cdots ,J_r$ are independent.
\item As  a result of item 2.\ of Lemma \ref{3.lem:4}, we obtain an explicit formula for the Lie series associated with the coefficients $\XX$. Upon inserting  the $r^{\rm th}$-order contribution to $\xih^I$ in $\etah_J$ into (\ref{A.Lie}), the series may be recast in a Lie series form
\bea
\XX^{I J_1 J_2 \cdots J_r}\etah_{J_1} \etah_{J_2}\cdots \etah_{J_r} 
= \frac{1}{r} \XX^{I J_1 J_2 \cdots J_r} [ \etah_{J_1}, [ \etah_{J_2}, \cdots [ \etah_{J_{r-1}}, \etah_{J_r} ] \cdots ]].
 \label{appbb.04}
\eea
\item Finally, at low ranks, the shuffle and cyclicity properties, established in Lemma \ref{3.lem:4}, imply the following relations. For rank three, the shuffle relation $\XX^{I (J\shuffle K)}=0$ implies anti-symmetry in the last two indices $\XX^{I JK } = - \XX^{I KJ}$ which, combined with the cyclic property, makes $\XX^{IJK}$ totally anti-symmetric,
\bea
\XX^{IJK} = \XX^{ [ IJK ] } .
 \label{appbb.07}
\eea
For rank four,  $\XX^{I_i I_j I_k I_\ell}$ with any permutation $i,j,k,l$ of $1,2,3,4$ can be expressed in a basis generated by the functions $\XX^{I_1 I_2 I_3 I_4}$ and $\XX^{I_1 I_2 I_4 I_3}$ with coefficients $0$ or $\pm1 $. For example, the shuffle properties imply 
\bea
\XX^{I J_1 J_2 J_3} -   \XX^{I J_3 J_2 J_1} & = & 0,
\no \\
\XX^{I J_1 J_2 J_3} + {\rm cycl}(J_1,J_2,J_3) & = & 0 .
\eea
\end{itemize}

\newpage

\section{Proof of Corollary \ref{cor:mshuf}}
\label{sec:cor}

In this appendix, we shall prove that the coefficients $\cM_\shuffle^{I_1 \cdots I_r}{}_J$ in the expansion (\ref{3.bsh})
of $\etah_J$ in $b_{I}$ are explicitly given in terms of the coefficients $\cM^{I_1 \cdots I_r}{}_J$ of $\hat a^I-\xih^I$ by (\ref{mshuf}) and obey shuffle relations (\ref{mshufrel}).

\begin{proof} 
Starting with an Ansatz 
\bea
\label{3.other}
\hat H_J = B_J - \sum_{r=2}^\infty \cM_\shuffle^{I_1 \cdots I_r}{}_J(p) B_{I_1} \cdots B_{I_r}
\eea
for the derivation that implements the adjoint action of $\etah_J$,  the $r^{\rm th}$ order in the expansion of (\ref{3.HB}) in $B_K$ implies the recursion relation
\bea
\cM_\shuffle^{I_1 \cdots I_r}{}_J = \cM^{I_1 \cdots I_r}{}_J
- \sum_{\ell=2}^{r-1}
\cM_\shuffle^{I_1 \cdots I_\ell}{}_K
\cM^{ K I_{\ell+1} \cdots I_r}{}_J
\label{mind.1}
\eea
among the coefficients. We shall now prove by induction that this recursion is solved by (\ref{mshuf}). For $r=2$, one readily shows that $\cM_\shuffle^{I_1 I_2}{}_J= \cM^{I_1 I_2}{}_J$ obeys both (\ref{mind.1}) and (\ref{mshuf}).
Assuming that the claim (\ref{mshuf}) holds for $r=2,3,\cdots,n$, and substituting these relations  into the case of (\ref{mind.1}) at $r=n{+}1$ yields
\begin{align}
\cM_\shuffle^{I_1 \cdots I_{n+1}}{}_J &= \cM^{I_1 \cdots I_{n+1}}{}_J
- \sum_{\ell=2}^{n}
\sum_{s=0}^{\ell-2} (-1)^s \sum_{2\leq j_1< \cdots < j_s}^{\ell-1}
\cM^{I_1 I_2 \cdots I_{j_1}}{}_{K_1}  \cM^{K_1 I_{j_1+1}  \cdots I_{j_2}  }{}_{K_2}   \times  \cdots \notag \\
&\quad\quad\quad\quad\quad\quad\quad
\times \cdots
\cM^{K_{s-1} I_{j_{s-1}+1}  \cdots I_{j_s}  }{}_{K_s} 
\cM^{K_s I_{j_s+1}  \cdots I_{\ell}  }{}_{K} 
\cM^{ K I_{\ell+1} \cdots I_{n+1}}{}_J.
\label{mind.2}
\end{align}
This needs to be lined up with the $r=n{+}1$ instance of the claim (\ref{mshuf}),
\begin{align}
\cM_\shuffle^{I_1 \cdots I_{n+1}}{}_J  &= \cM^{I_1 \cdots I_{n+1}}{}_J + \sum_{u=1}^{n-1} (-1)^u \sum_{2\leq j_1 < j_2 < \cdots < j_u}^{n}
\cM^{I_1 I_2 \cdots I_{j_1}}{}_{K_1}  \cM^{K_1 I_{j_1+1}  \cdots I_{j_2}  }{}_{K_2}   \times  \cdots \notag \\
&\quad\quad\quad\quad\quad\quad\quad
\times \cdots
\cM^{K_{u-1} I_{j_{u-1}+1}  \cdots I_{j_u}  }{}_{K_u} 
\cM^{K_u I_{j_u+1}  \cdots I_{n+1}  }{}_{J} .
\label{mind.3}
\end{align}
We have renamed the summation variable of (\ref{mshuf}) to $\ell \rightarrow u$ and
note that the first term $ \cM^{I_1 \cdots I_{n+1}}{}_J $ on the right side readily matches that
of (\ref{mind.2}).
The remaining contributions in the nested sums of (\ref{mind.2}) can be shown to
match those of (\ref{mind.3}) by interchanging the sums over $\ell$ and $s$ and  then renaming $s=u{-}1$,
\begin{align}
\cM_\shuffle^{I_1 \cdots I_{n+1}}{}_J &= \cM^{I_1 \cdots I_{n+1}}{}_J
+ \sum_{u=1}^{n-1} (-1)^u
\sum_{\ell=u+1}^{n}  \sum_{2\leq j_1< \cdots < j_{u-1}}^{\ell-1}
\cM^{I_1 I_2 \cdots I_{j_1}}{}_{K_1}  \cM^{K_1 I_{j_1+1}  \cdots I_{j_2}  }{}_{K_2} \times  \cdots 
\notag \\
&\quad\quad \times \cdots
\cM^{K_{u-2} I_{j_{u-2}+1}  \cdots I_{j_{u-1}}  }{}_{K_{u-1}} 
\cM^{K_{u-1} I_{j_{u-1}+1}  \cdots I_{\ell}  }{}_{K_u} 
\cM^{ K_u I_{\ell+1} \cdots I_{n+1}}{}_J , 
\label{mind.4}
\end{align}
where the last summation index has been renamed from $K$ to $K_u$.
As a last step, we rename $\ell$ in (\ref{mind.4}) to $j_u$ and rearrange the nested
sum $\sum_{\ell=u+1}^n \sum_{2\leq j_1< \cdots < j_{u-1}}^{\ell-1}$ into 
$ \sum_{2\leq j_1< \cdots < j_{u}}^n$ (the lower bound $\ell\geq u{+}1$ is
consistent with $j_i \geq i{+}1)$. This recovers the expression (\ref{mind.3}) for the 
claim (\ref{mshuf}) at $r=n{+}1$ and completes both the inductive step and the inductive proof
of (\ref{mshuf}).

\sm

Finally, the shuffle relations (\ref{mshufrel}) are necessary by
Ree's theorem to ensure that the expression  (\ref{3.bsh}) for $\etah_J$ is in $\mg$ as
exposed by (\ref{exphatb}). \end{proof}

\newpage

\section{Explicit evaluation to low orders}
\setcounter{equation}{0}
\label{sec:A}

The purpose of this appendix is to illustrate the relation between the connections $d - \cK_\text{E}$ and $d - \cJ_\text{DHS}$, stated in (\ref{gaugetransf1}) of the introduction and proven in Theorem \ref{3.thm:1} of section~\ref{sec:3}, by providing a detailed derivation of explicit formulas for the gauge transformation $\cU_\text{DHS}$ and the automorphism $a \, \cup \, b \to \hat a \, \cup \, \hat b$ to lowest and next-to-lowest orders in the generators~$a$ and~$b$.  Along the way, we shall produce a number of useful properties of the connections and their interrelation.  We begin by summarizing the approach suitably readied for practical calculations.  
 
\sm

Throughout this appendix, we shall denote the solutions $\hat \xi$ and $\hat \eta$ of the monodromy equations (\ref{3.mon1}) simply by $\xi$ and $\eta$ in order to avoid cluttering.

\subsection{Practical summary of the approach}

A convenient starting point for the construction of the gauge transformation is obtained by combining the results of Lemmas \ref{3.lem:1}, \ref{3.lem:2} and Corollaries \ref{3.cor:1}, \ref{3.coroll2:BIS} which guarantee that  $\cU_\text{DHS}(x,p)$ is given by the path-ordered exponential, 
\bea
\label{A.U}
\cU_\text{DHS} (x,p;\xi, \eta) =  \text{P} \exp \int ^x _p  \cJ_\text{DHS} (\ti, \cdot\, ; \xi, \eta).
\eea
Here $\xi$ and $\eta$ obey $[\eta_I, \xi^I]=0$ so that the connection $\cJ_\text{DHS} (t, p; \xi, \eta) $ is independent of $p$, smooth in $x$,  and given by the simplified formula, 
\bea
\label{A.conn}
\cJ_\text{DHS} (t, \cdot \, ; \xi, \eta) = \om_J(t) \xi^J - \pi \bar \om ^I(t)\eta_I + \sum _{r=1}^\infty
 \p_t \Phi ^{I_1 \cdots I_r}{}_J(t) H_{I_1} \cdots H_{I_r} \xi^J,
\eea
where  $H_I X = [\eta_I, X]$ for any $X \in \mg$, the smooth functions $\Phi^{I_1 \cdots I_r}{}_J(t)$ 
are defined in (\ref{eq:251114n1}), (\ref{eq:251114n2}), and $\xi$ and $\eta$ are the unique solutions to  the system of monodromy conditions,
\bea
\label{A.mon}
\cU_\text{DHS} (\mA^K \cdot p , p; \xi, \eta) & = & 1,
\no \\
\cU_\text{DHS} (\mB_K \cdot p , p; \xi, \eta) & = & e^{2 \pi i b_K}.
\eea
Solving the above system of equations gives $\xi,  \eta$ and $\cU_\text{DHS}$  in terms of $b$. 

\subsection{Expansions in words of the alphabet $\xi \cup \eta$}

To solve the monodromy relations (\ref{A.mon}) for $\xi$ and $\eta$ as a function of $b$ (and the moduli of the surface $\Sigma _p$), it will be convenient to obtain first the expansion of $\cU_\text{DHS}(x,p;\xi,\eta)$ in a power series in $\xi$ and $\eta$, subject only to the constraint $[\eta_I, \xi^I]=0$. 
This expansion may be organized in terms of the integer $r$ which is the sum of the number of letters $\xi$ and the number of letters $\eta$ (or equivalently the total word length),   
\bea
\label{A.exp1}
\cU_\text{DHS}(x,p;\xi, \eta) = 1 + \sum_{r=1}^\infty \cU_r(x,p;\xi,\eta).
\eea
We refrain from imposing any relation between $\xi$ and $\eta$ due to the 
monodromy relations at this stage which in general do not preserve the total word length.
While $\cU_\text{DHS}(x,p;\xi, \eta)$ takes values in $\exp(\mg_b)$, each term $\cU_r(x,p;\xi,\eta)$ takes values in $\CC\< \! \< b \> \! \>$.  The expansion of (\ref{A.exp1}) has the advantage of leading to simple shuffle relations obeyed by the  polylogarithms in their coefficients. It will be convenient to also consider the complementary expansion of the Lie series $\muu_\text{DHS}(x,p;\xi, \eta) \in \mg_b$ defined by
\bea
\label{A.exp}
\cU_\text{DHS} (x,p;\xi, \eta) = \exp \big \{ \muu_\text{DHS}(x,p;\xi, \eta) \big \}
\eea 
 in a power series in $\xi$ and $\eta$, whose terms $\muu_r(x,p;\xi, \eta)$ have word length $r$,  
 \bea
\muu_\text{DHS}(x,p;\xi, \eta) =  \sum_{r=1}^\infty \muu_r(x,p;\xi, \eta). 
\eea
Being directly in terms of Lie algebra elements, this expansion\footnote{The expansion is often referred to as the \textit{Magnus expansion} of the path-ordered exponential, see for example  \cite{Blanes:2008xlr} and references therein.} provides a convenient way to relate the connections of (\ref{3.KJ}) which are valued in $\mg$, but it has the disadvantage of obscuring the role of the shuffle relations and lacking a canonical presentation due to the Jacobi identity. The terms in the two expansions may, of course, be 
simply related to one another using (\ref{A.exp}), and the lowest three orders are given by 
\bea
\label{A.logs}
\muu_1 (x,p;\xi, \eta) & = & \cU_1(x,p;\xi, \eta),
\no \\
\muu_2 (x,p;\xi, \eta) & = & \cU_2(x,p;\xi, \eta)  -\thalf \cU_1(x,p;\xi, \eta) ^2,
\no \\
\muu_3 (x,p;\xi, \eta) & = & \cU_3(x,p;\xi, \eta) - \thalf \cU_1(x,p;\xi, \eta) \cU_2(x,p;\xi, \eta)
\no \\ &&
-\thalf \cU_2(x,p;\xi, \eta) \cU_1(x,p;\xi, \eta) +\tfrac{1}{3} \cU_1(x,p;\xi, \eta)^3.
\eea
These relations allow us to pursue the two expansions in parallel.  Since $\cJ_\text{DHS}(x, \cdot\, ;\xi,\eta)$ is a flat connection, the gauge transformation $\cU_\text{DHS}(x,p;\xi,\eta)$, and its expansion components $\cU_r(x,p;\xi, \eta)$ and $\muu_r(x,p;\xi, \eta)$ are all  homotopy invariant. 

\sm

In the remainder of this appendix, we shall evaluate $\cU_1$ and $\cU_2$, or equivalently $\muu_1$ and~$\muu_2$, enforce the monodromy relations (\ref{A.mon}) to second order in $b$, and then use the result to obtain $\xi, \eta$ and $\cU_\text{DHS}$ to second order in $b$.

\subsection{Calculating $\cU_1, \cU_2$ and $\muu_1, \muu_2$}

The contributions $\cU_1, \cU_2$ are obtained by substituting the connection (\ref{A.conn}) into the path-ordered exponential of (\ref{A.U}) and using the expansion (\ref{A.exp1}), 
\bea
\cU_1(x,p;\xi, \eta) & = &  \int_p^x  \big ( \om_J \xi^J - \pi \bar \om ^I\eta_I \big ),
\\
\cU_2(x,p;\xi,\eta) & = & 
\int _p^x \! \big ( \om_I(\ti ) \xi^I - \pi \bar \om _I(\ti )\eta ^I \big ) 
\int _p^{\ti} \! \big ( \om_J \xi^J - \pi \bar \om _J \eta ^J \big )
+ [\eta_I, \xi^J] \int_p^x  \! \p_\ti  \Phi^I{}_J (\ti).
\no
\eea
Using the first two lines of (\ref{A.logs}), we obtain the second order contributions to $\muu_1, \muu_2$, 
\bea
\label{A.muu}
\muu_1(x,p;\xi, \eta) & = &  \int_p^x  \big ( \om_I \xi^I - \pi \bar \om ^J\eta_J \big ),
\no \\
\muu_2(x,p;\xi,\eta) & = & 
 [\eta_I, \xi^J] \int _p ^x \left (  \p_\ti \Phi ^I{}_J (\ti) - {\pi \over 2}    \bar \om^I (t) \int^t _p  \om_J 
+ {\pi \over 2}   \om_J (t) \int^t _p  \, \bar \om^I \right )
\no \\ &&
+\half [\xi^I, \xi^J] \int ^x _p  \om_I (t) \int^t _p  \om_J
+ {\pi^2 \over 2}  [\eta_I, \eta_J] \int ^x _p  \bar \om^I (t) \int^t _p  \bar \om^J,
\eea
where we have used the familiar rearrangement formula underlying shuffle relations
\bea
\int ^x _p d\ti_1 \int ^x_p d\ti_2 = \int _p ^x d\ti_1 \int ^{\ti_1} _p d\ti_2 + \int _p ^x d\ti_2 \int ^{\ti_2}  _p d\ti_1
\eea
to present $\muu_2$ in a form that manifestly belongs to $\mg$. The contributions $\muu_1$ and the second line of $\muu_2$ are manifestly homotopy invariant for arbitrary $\xi, \eta \in \mg$. Using the relation,
\bea
\label{A.delphi}
\bar \p_t \p_t \Phi ^I{}_J (t) = \pi \kappa (t) \, \delta ^I_J - \pi \bar \om^I (t) \wedge \om_J(t),
\eea
where $\kappa(x)$ was defined in (\ref{2.kappa}), we see that the integrand of the first line of $\muu_2$ is a closed 1-form in view of the fact that the term in $\kappa$ cancels thanks to the relation $[\eta_I, \xi^I]=0$. Thus,  $\muu_2$ is  homotopy invariant to this order in $\xi, \eta$, as expected on general grounds.

\subsection{Solving the monodromy relations to order $b^2$}

The monodromies of $\muu_1$ are $p$-independent,  and we recover the result of (\ref{3.xietaA}), 
\bea
\label{A.muu1}
\muu_1(\mA^K \cdot p ,p;\xi,\eta) & = & \xi^K - \pi \eta^K,
\no \\
\muu_1(\mB_K \cdot p ,p;\xi,\eta) & = & \Omega_{KI} \xi^I - \pi \bar \Omega _{KI} \eta ^I.
\eea
Setting these monodromies equal to 0 and $2\pi i b_K$, respectively, we recover the result of (\ref{3.xieta}) to first order in $b$, namely $  \eta _I =   b_I + \cO(b^2)$ and $  \xi_I = \pi b_I + \cO(b^2)$. 

\sm

The monodromies of $\muu_2$ may be simplified by using the fact that we need their evaluation only  to second order in $\xi$ and $\eta$ so that we are allowed to substitute the first order relation $\xi_I = \pi \eta_I$ into $\muu_2$  in (\ref{A.muu}).  The result is as follows, 
\bea
\label{A.u2}
\muu_2(x,p;\pi \eta,\eta) & = &   \pi \, \eta_I \eta_J \left (   \int _p ^x  \, 2\, \p_\ti \Phi ^{[IJ]} (\ti ) 
- 4 \pi \, \Im \int^x _p  \om^{ [I}(\ti ) \, \Im \int^\ti _p  \om^{J  ]} \right ),
\eea
where the square brackets stand for anti-symmetrization of the indices $I,J$, such as for example $\eta_{[I} \eta _{J]} = \half ( \eta _I \eta _J - \eta _J \eta _I)$. We have used this property  to recast the commutators in (\ref{A.muu}) in terms of a product $\eta_I \eta_J$ in (\ref{A.u2}).   
The integral of $\p_\ti \Phi ^{[IJ]} $ may be simplified by using the following rearrangement, 
\bea
2 \, \p_\ti \Phi ^{[IJ]} & = & d_\ti \Phi ^{[IJ]} + \p_t \Phi^{[IJ]} + \overline{\p_t \Phi ^{[IJ]}},
\eea
which is obtained with the help of the  complex-conjugation property $\overline{\Phi^{[IJ]}(t)} = - \Phi^{[IJ]}(t)$. Integrating the total differential $d_\ti \Phi^{[IJ]} $ and recasting the remainder as a sum of an integral and its complex conjugate, we obtain the following  formula for $\muu_2$ to this order,
\bea
\muu_2(x,p;\pi \eta ,\eta) & = &  
\pi  \, \eta_I \eta_J \left (  \Phi ^{[IJ]} (x) -  \Phi ^{[IJ]} (p) 
+   \int _p ^x  \lambda ^{IJ} (\ti,p)  +  \int _p ^x  \overline{\lambda ^{IJ} (\ti,p) } \right ).
\qquad
\eea
The  $(1,0)$-form $\lambda^{IJ} (t,p)$ is given by,
\bea
\label{A.lambda}
\lambda ^{IJ} (t,p) & = & \p_t \Phi ^{[IJ]} (t) + 2 \pi i \,  \om^{[I}(t) \, \Im \int^t _p  \om^{J]}.
\eea
One verifies that $\lambda^{IJ} (t,p)$ is anti-symmetric in its indices $I,J$ as well as holomorphic in $t$ in view of the relation (\ref{A.delphi}). As a result, the function $\muu_2(x,p;\pi \eta,\eta) $ is  homotopy invariant, as expected. Its monodromies are given by,
\bea
\label{A.muu2}
\muu_2(\mA^K \cdot p , p; \pi \eta , \eta) & = &  - \XX^{KIJ} (p) \, \eta_I \eta_J,
\no \\
\muu_2(\mB_K \cdot p , p; \pi \eta , \eta) & = &  - \YY_{K}{}^{IJ} (p) \, \eta _I \eta _J,
\eea
where $\XX$ and $\YY$ are given by
\bea
\label{A.AB}
\XX^{KIJ} (p) & = &  - \pi  \int _p ^{\mA^K \cdot \, p } \lambda ^{IJ} (\ti ,p) + \hbox{ c.c.}
\no \\
\YY_{K}{}^{IJ} (p) & = &  - \pi  \int _p ^{\mB_K \cdot \, p}  \lambda ^{IJ} (\ti ,p) + \hbox{ c.c.}
\eea
By construction, $\XX^K{}_{IJ}$ and $\YY_{KIJ}$ are anti-symmetric in $I,J$  and real multiple-valued harmonic functions of $p$. Using the properties of $\lambda ^{IJ}$ one readily shows that
\bea
\p _p \, \XX^{KIJ} (p)= \p_{\bar p} \, \XX^{KIJ} (p) =0,
\eea
i.e.\ $\XX^{KIJ}$ is independent of $p$ as expected on general grounds
by the proof in appendix~\ref{sec:nopprf}. 

\sm

Combining the conditions of (\ref{A.mon}) with (\ref{A.exp}) and the monodromies of $\muu_1$ in (\ref{A.muu1}) and $\muu_2$ in (\ref{A.muu2}),  we obtain the full monodromy relations to second order,
\bea
\label{A.xieta}
\xi^K - \pi \eta ^K -  \XX^{KIJ} \, \eta _I \eta _J + \cO(\eta^3)  & = & 0, 
\no \\
\Omega_{KJ} \xi^J - \pi \bar \Omega _{KJ}\eta^J -  \YY_{K}{}^{IJ} (p) \, \eta_I \eta _J + \cO(\eta^3)  & = &
2\pi i b_K.
\eea
Solving the first equation of (\ref{A.xieta}) for $\xi$ in terms of $\eta$, substituting the result into the second equation of (\ref{A.xieta}), and then solving for $\eta$ and $\xi$ up to second order in $b$ gives the final result which may be summarized in terms of the following lemma.

{\lem
\label{A.lem:1}
The solution to the monodromy equations  (\ref{A.mon}) is given as follows,
\bea
\label{7.eta-xi}
 \xi ^K & = & \pi b^K  +  \Big ( \XX^{K IJ} - \pi \cM^{IJK} (p) \Big ) \, b_I b_J + \cO(b^3),
\no \\
\eta _K & = &  b_K -  \cM^{IJ}{}_K (p) \, b_I b_J + \cO(b^3),
\eea
where $\XX^{KIJ} $ is independent of $p$ and $\cM^{IJ}{}_K (p)$ is given by 
\bea
\label{A.M}
\cM^{IJ}{}_K (p)=  { i \over 2 \pi } \Big (  \YY_{K}{}^{IJ}(p) - \Omega _{KL} \XX^{LIJ}  \Big ) 
\eea
in terms of $\XX$ and $\YY$ given in (\ref{A.AB}). }

\begin{rmk}
An alternative way to express the first line in (\ref{7.eta-xi}) is as follows,
\bea
\xi_K = \pi b_K - { i \over 2} \Big ( \YY_K{}^{IJ} (p) - \bar \Omega _{KL} \XX^{LIJ} \Big ) b _I b _J + \cO(b^3).
\eea
\end{rmk}

\subsection{Matching residues and relating the connections}

We are now ready to state and prove the main proposition of this appendix. 

{\prop 
\label{A.prop:1}
The relation between the connections $d - \cK_\text{E}$ and $d - \cJ_\text{DHS}$ which, to the lowest and next-to-lowest orders in $b$ are given by
\bea
\label{A.KJ}
\cK_\text{E}( x,p;a,b) & = & \om_J(x) a^J +  g^I{}_J(x,p) [b_I , a^J] + \cO(a b^2),
\no \\
\cJ_\text{DHS}^{(1,0)}  (x,p; \hat a, \hat b) & = & 
\om_J(x) \hat a^J  + f^I{}_J(x,p) [\hat b_I, \hat a^J] + \cO(\hat a \hat b^2),
\eea
is determined by the automorphism $a \cup b \to \hat a \cup \hat b$ of $\mg$,
\bea
\label{A.hats}
\hat a ^K & = & a^K + \xi^K  + \cM^{KI}{}_J (p) [b_I, a^J]  + \cO(ab^2, b^3),
\no \\
\hat b_K & = & b_K - \cM^{IJ} {}_K(p) b_I b_J  + \cO(b^3),
\eea
and  the following relation between the forms $f^{I}{}_J $ and $g^I{}_J$, 
 \bea
\label{7.omf}
g^I{}_J(x,p)  =   f^I{}_J(x,p) - 2 \pi i \,  \om_J(x) \,  \Im \int_p^x   \om^I + \om_K(x) \cM^{KI}{}_J(p).
\eea 
The coefficients $\XX$ and $\cM$ are given by the first equation in (\ref{A.AB}) as well as (\ref{A.M}) and~$\xi$ is given in terms of $b$ to this order in (\ref{7.eta-xi}). }

\smallskip	

{\proof 
To prove the proposition, we begin by using the first equation in (\ref{3.hats}) to establish $\hat b_I = \eta_I$, which by the second line in (\ref{7.eta-xi}) readily proves the second equation in (\ref{A.hats}). To prove the first equation in (\ref{A.hats}), we substitute the expression for $\eta$, obtained in the second line of (\ref{7.eta-xi})  to second order in $b$, into the residue matching condition (\ref{3.res-m}),
\bea
{} [b_K, a^K] & = & \left [ b_K - \cM^{IJ}{} _{K}(p) \, b_I b_J , \ta^K - \xi ^K \right ] + \cO(b^3).
\eea
Since the Lie algebra $\mg$ is freely generated, the unique solution to first order in $a$ and zeroth order in $b$ is given by  $\ta^K=a^K + \cO(b)$. To first order in $b$ we obtain uniquely, 
\bea
\label{A.aD}
\ta ^K - \xi^K  = a^K + \cM^{KI}{}_J  (p) [b_I, a^J] + \cO(b^2),
\eea
which establishes the first line in (\ref{A.hats}). Finally, using  the relation (\ref{3.JKa}) to first order in $b$ by  expanding $\cU_\text{DHS}$ accordingly, we obtain the condition 
\bea
\cK_\text{E}( x,p;a,b) & = & \cJ_\text{DHS}^{(1,0)}  (x,p; \ta - \xi, \eta)
- \Big [ \muu_1 (x,p;\xi,\eta), \cJ_\text{DHS}^{(1,0)}  (x,p; \ta - \xi, \eta) \Big ] + \cO(b^2),\qquad
\eea
where $\cK_\text{E}( x,p;a,b) $ is given in the first line of (\ref{A.KJ}), $\muu_1(x,p;\xi,\eta)$ can be found in the first line of (\ref{A.muu}),  and $\cJ_\text{DHS}^{(1,0)}  (x,p; \ta - \xi, \eta) $ is given by, 
\bea
\cJ_\text{DHS}^{(1,0)}  (x,p; \ta - \xi, \eta) & = & 
\om_K(x) (\ta ^K - \xi^K) + f^I{}_J(x,p) [\eta_I, \ta^J - \xi ^J] + \cO(b^2).
\eea
Using the relation $\hat b _I=\eta_I$ and the solution of (\ref{A.aD}) for $\hat a^I$, we readily obtain the relation (\ref{7.omf}), thereby completing the proof of Proposition \ref{A.prop:1}. }


\subsection{Evaluation of $\XX^{KIJ}$ and $\cM^{IJ}{}_K$}
\label{sec:7.AL}

To evaluate the coefficients $\XX^{KIJ}$ and $\YY_K{}^{IJ}$ and thus $\cM^{IJ}{}_K$ in Lemma \ref{A.lem:1} we begin by simplifying the integral of the differential form $\lambda_{IJ}$ defined in (\ref{A.lambda}). To do so, we recall the expression for $\p_t \Phi^I {}_J(t)$, raise the index $J$ and anti-symmetrize in the indices $I,J$, 
\bea
\label{A.Phi}
\p_t \Phi ^{[IJ]} (t) = - {i \over 4} \int_\Sigma  \p_t \cG(t,\tp) 
\Big ( \bar \om^I(\tp) \wedge \om^J(\tp) - \bar \om^J(\tp) \wedge \om^I(\tp) \Big ).
\eea
The derivative of $\cG(t,\tp)$ may be expressed in terms of the prime form $E(x,y)$,
\bea
\p_t \cG(t,\tp) = - \p_t \ln E(t,\tp) - \p_t \gamma (t) - 2 \pi i \, \om_K(t) \,\Im \int ^t _{\tp} \om^K,
\eea
where the expression for $\gamma(t)$ may be found in \cite{DHoker:2017pvk}, but will not be needed here as its contribution is $y$-independent and integrates to zero against   $\bar \om^I(\tp) \wedge \om^J(\tp) - \bar \om^J(\tp) \wedge \om^I(\tp) $. To proceed, we denote the imaginary part of the Abelian integral on $\Sigma _p$ by
\bea
\phi^I(x) = \Im \int ^{x} _p \om^I.
\eea
One  verifies that $\phi^I$ has vanishing $\mA$-monodromies while its $\mB$-monodromies are given by $
\phi^I ( \mB_K \cdot x) = \phi^I (x) + \delta^I_K$. We also have the following relation, 
\bea
\bar \om^I(\tp) \wedge \om^J(\tp) - \bar \om^J(\tp) \wedge \om^I(\tp) = 4 \, d \phi^I (\tp) \wedge d \phi^J(\tp).
\eea
Expressing (\ref{A.Phi}) in terms of $\phi_I$ we have
\bea
\p_t \Phi ^{[IJ]} (t) & = &   \int_\Sigma  
\Big (   i \p_t \ln { E(t,\tp) \over E(t,p)} + 2 \pi  \, \om_K(t)  \phi ^K(\tp)   \Big ) d \phi^I(\tp) \wedge d \phi^J(\tp),
\eea
where we have used the fact that the contributions inside the parentheses proportional to $\phi^K(t)$ and $\p_t \ln E(t,p)$  integrate to zero against $d \phi^I \wedge d \phi^J$. The term in $\p_t \ln E(t,p)$ has been included to render the integrand monodromy free in $t$, while its monodromy in $\tp$ integrates to zero against $d \phi^I \wedge d \phi^J$. Using the fact that the first term inside the parentheses has vanishing $\mA$-periods in $t$ and that the $\mB$-periods are given by  
\bea
\oint _{ \mB_K} \p_t \ln { E(t,\tp) \over E(t,p)} =  2 \pi i \int ^{\tp}_p \om_K(t),
\eea
we evaluate the periods of $\p_t \Phi_{[IJ]} $ as follows, 
\bea
\oint _{\mA^K} \p_t \Phi^{[IJ]} & =  & 2 \pi \int _\Sigma \phi ^K \, d \phi^I \wedge d \phi^J,
\no \\
\oint _{\mB_K} \p_t \Phi^{[IJ]}  & = & 2 \pi  \int_\Sigma  
\left ( - \int^{\tp} _p  \om_K  + \Omega_{KL}  \phi^L(\tp)   \right ) d \phi^I(\tp) \wedge d \phi^J(\tp).
\eea
As  a result, we find 
\bea
\label{7.AIJK}
\XX^{KIJ} & = &  - 4 \pi^2   \int_\Sigma  \phi ^ K \, d\phi^I \wedge d \phi^J 
+ 2 \pi^2   \oint _{\mA^K} \Big ( \phi^J \, d \phi^I - \phi^I \, d \phi^J \Big ),
\no \\
\YY_{K}{}^{ IJ} (p) & = &  - 4 \pi^2   \int_\Sigma  \left ( - \Re \int^{\tp} _p \om_K 
+ \Re(\Omega) _{KL} \phi^L(\tp) \right ) d\phi^I(\tp) \wedge d \phi^J (\tp),
\no \\ &&
+ 2 \pi^2   \int _p^{\mB_K \cdot \, p}  \Big ( \phi^J \, d \phi^I - \phi^I \, d \phi^J \Big ).
\eea
Both expressions are manifestly real-valued and antisymmetric in $I$ and $J$ as expected on general grounds. The combination $\cM^{IJ}{}_K(p)$ then follows from (\ref{A.M}) and is given by
\bea
\cM^{IJ}{}_K (p)  & = &
-2 \pi i   \int_\Sigma  d\phi^I(\tp) \wedge d \phi^J (\tp) \,   \int^{\tp}_p \om_K
 \\ &&
+  \pi i   \int _p^{\mB_K \cdot \, p}  \Big ( \phi^J \, d \phi^I - \phi^I \, d \phi^J \Big )
-  \pi i \, \Omega_{KL} \oint _{\mA^L} \Big ( \phi^J \, d \phi^I - \phi^I \, d \phi^J \Big ).
\no
\eea
For later use, we record the following simplification
\bea
\label{B.phiphi}
\oint _{\mA^K} \Big ( \phi^J \, d \phi^I - \phi^I \, d \phi^J \Big ) = 
2 \oint _{\mA^K} \phi^J \, d \phi^I 
\eea
in view of the fact that the integrand has vanishing $\mA$-monodromies, so that $d(\phi^I \phi^J)$ has vanishing $\mA$-period. Note, however, that no such simplification applies to the corresponding integral over the $\mB_K$ cycles since $\phi^I$ does have non-vanishing $\mB$-monodromy.  For the same reason the integrals over the $\mB$ cycles depend on the base-point of the cycle, which is why we have left the base point $p$ exposed in the above formulas for $\mB$ periods.

\subsection{Verifying the cyclic property  of $\XX_{IJK}$}

The relation $[\eta _K, \xi^K]=0$ requires the following identity on $\XX$,
\bea
\XX^{KIJ} \big [b_K, [b_I, b_J] \big ]=0.
\eea 
Since the algebra $\mg_b$ is freely generated, the only relation the commutators can satisfy is the Jacobi identity. As a result, the above relation implies the following condition on $\XX$,
\bea
\label{7.cycl}
 \XX^{KIJ} = \XX^{JKI} = \XX^{IJK}
\eea
which is proven on general grounds in appendix \ref{sec:shprf}. Here, we shall verify that this condition holds by explicit calculation. We regroup the difference as follows,
\bea
\label{A.Aconj}
\XX^{JKI} - \XX^{KIJ} & = & 
- 4 \pi^2  \bigg\{ \int_\Sigma d \Big (  \phi ^ J  \phi^K d \phi^I \Big ) 
+    \oint _{\mA^J}   \phi^K \, d \phi^I 
+   \oint _{\mA^K}   \phi^J \, d \phi^I  \bigg\}
 \eea
 by using the simplifications of (\ref{B.phiphi}) to convert $ \phi^I \, d \phi^K  \rightarrow - \phi^K \, d \phi^I$
 and $ \phi^I \, d \phi^J \rightarrow - \phi^J \, d \phi^I $ in the integrand of $\mA^J$ and $\mA^K$, respectively.
 The first term may be recast as a line integral using Stokes's theorem, 
  \bea
 \int_D  d \Big ( \phi^J \phi^K  d \phi^I \Big ) = \oint _{\p D} \phi^J \phi^K  d \phi^I,
 \eea
 which in turn may be evaluated using the canonical decomposition of the boundary $\p D$ of the fundamental domain depicted in figure \ref{fig:2}, 
  \bea
 \oint _{\p D} \phi^J \phi^K  d \phi^I &  = &
 \sum_{L=1}^h  \int _{y} ^{\mA^L \cdot \, y} 
 \Big ( \phi^J (t) \phi^K (t)   - \phi^J ( \mB_L \cdot t ) \phi^K  (\mB_L \cdot t) \Big ) d \phi^I (t)
  \no \\ & = &
-  \sum_{L=1}^h   \int _{y} ^{\mA^L \cdot \, y} 
\Big ( \delta^J_L  \phi^K + \delta^K_L \phi^J + \delta ^J_L \delta ^K_L  \Big ) d \phi^I 
\no \\ & = & 
- \oint _{\mA^J} \phi^K d \phi ^I - \oint _{\mA^K} \phi^J d\phi^I.
\label{dphicancel}
 \eea
The last term in the integrand on the second line  above cancels by itself and the remaining integrals cancel the second and third integrals on the right side of (\ref{A.Aconj}), thus confirming the identity (\ref{7.cycl}).

\newpage

\section{Explicit construction of Enriquez kernels}
\setcounter{equation}{0}
\label{sec:B}

This appendix describes the explicit construction of the
Enriquez kernels $g^{I_1 \cdots I_r}{}_J(x,p)$ in (\ref{2.Kexp}) 
in terms of the $f$-tensors in Lemma \ref{2.lem:1}, Abelian differentials
as well as their iterated integrals.
In particular, we spell out detailed examples and intermediate results in the
procedure of section \ref{sec:3.6} including the
coefficients of the contributing series expansions.

\sm

Throughout this appendix, we shall denote the solutions $\hat \xi$ and $\hat \eta$ of the monodromy equations (\ref{3.mon1}) simply by $\xi$ and $\eta$ in order to avoid cluttering.

\subsection{Expanding the gauge transformation}
\label{sec:B.2}

Before determining the explicit form of the $\XX^{I J_1 \cdots J_r}$ coefficients  in the expansion (\ref{A.Lie}) of $\xi^I$, it is convenient to first compute the expansion of the path-ordered exponential $ {\cal U}_{\rm DHS}(x,p;\xi,\eta )^{-1} $ in (\ref{reinstated}) for generic $x$. By the vanishing of $[\eta_I, \xi^I]$ and Corollary \ref{3.cor:1}, the DHS polylogarithms in this expansion boil down to iterated integrals involving the $\delta^J_{I_r}$-traceless and $p$-independent parts $\partial_x \Phi^{I_{1} \cdots I_r}{}_J(x) $  of the $f^{I_1 \cdots I_r}{}_J(x,p)$-tensors in \eqref{phiastrless}. 
Homotopy invariance at the subleading order in $\eta_J$ relies on the combination of terms in
 \bea
  \label{gammaang}
 \Gamma^{\langle IJ \rangle}(x,p) = \int^x_p  \bigg ( \partial_t  \Phi^{IJ}(t) - \p_t \Phi^{JI}(t)
 + \pi  \omega^J(t_1) \! \int^{t_1}_p \! \!  \bar \omega^I(t_2) - \pi \omega^I(t_1) \! \int^{t_1}_p \! \! \bar \omega^J(t_2) \bigg )
\quad
 \eea
 with $ \Phi^{IJ}(t) =  \Phi^{I}{}_K(t) Y^{KJ}$ and manifest antisymmetry $ \Gamma^{\langle IJ \rangle}(x,p) 
= -  \Gamma^{\langle JI \rangle}(x,p) $. The expression (\ref{gammaang})
enters $ {\cal U}_{\rm DHS}(x,p;\xi,\eta )^{-1} $ in combination with the following iterated Abelian integrals 
 realized as DHS polylogarithms associated with $\mw = a^{I_1} \cdots a^{I_r}$ in (\ref{dhspoly})
 \bea
  \Gamma_{I_1 I_2 \cdots I_r}(x,p) = \int^x_p \om_{I_1}(t_1)\int^{t_1}_p \om_{I_2}(t_2) 
  \cdots \int^{t_{r-1}}_p \om_{I_r}(t_r) .
  \label{appbb.12}
\eea
In slight abuse of notation, we shall write $  \Gamma^{J_1 J_2 \cdots J_r}(x,p) $
for their contraction with $Y^{I_1 J_1}\cdots$ $Y^{I_r J_r}$ (which does not refer to
the DHS polylogarithms associated with $\mw = b_{J_1} \cdots b_{J_r}$) 
and $\overline{ \Gamma^{J_1 J_2 \cdots J_r}(x,p) }$ for the complex conjugates
of these contractions. 
 
\sm

With these prerequisites in place, we can write the coefficients $\TT^{I_1  \cdots I_r}(x,p)$ at the order
$r\leq 2$ of the expansion (\ref{reinstated}) of ${\cal U}_{\rm DHS}(x,p;\xi,\eta)^{-1}$
in the following form:
\begin{align}
\TT^{I}(x,p) &= \pi \big(   \overline{ \Gamma^{I}(x,p) } -   \Gamma^{I}(x,p) \big) ,
 \label{appbb.19} \\
\TT^{IJ}(x,p) &= - \XX_M{}^{IJ}  \Gamma^M(x,p) - \pi \Gamma^{ \langle IJ \rangle}(x,p) 
\notag \\
&\quad + \pi^2  \big( \Gamma^{JI}(x,p)  -  \overline{ \Gamma^{J}(x,p)  } \Gamma^{I}(x,p)   + \overline{ \Gamma^{JI}(x,p)   } \big) .\notag
\end{align}
Together with the results for the coefficients $\XX_M{}^{IJ}$ in 
section \ref{sec:B.3} below, the expressions (\ref{appbb.19}) determine all
instances of the key ingredients (\ref{dhse.u}) at $r\leq 2$, e.g.
\begin{align}
 h^{I }{}_J(x,p) &=   f^{I }{}_J(x,p)  
 + \TT^{I }(x,p)  \omega_J(x),  \label{appbb.20} \\
 h^{I_1 I_2}{}_J(x,p) &=   f^{I_1 I_2  }{}_J(x,p)  
 + \TT^{I_1}(x,p)   f^{ I_{2} }{}_J(x,p) 
 + \TT^{I_1 I_2 }(x,p) \omega_J(x).  \notag
\end{align}
This is an important step
towards expressing the Enriquez kernels $ g^{I_1 \cdots I_r}{}_J(x,p) $ 
at $r\leq 2$ in terms of $f$-tensors, Abelian
differentials and DHS polylogarithms.

\subsection{Determining the explicit form of the $\XX$-coefficients}
\label{sec:B.3}

We now proceed to determining the $\XX$-coefficients from the
$\mA$-monodromy condition (\ref{3.mon1}) which translates into the vanishing
\begin{align}
\TT^{I_1 \cdots I_r}(\mA^L \cdot p ,p) = 0 \, , \ \ \ \ r\geq 1 
\label{appbb.21} 
\end{align}
of the coefficients in the expansion (\ref{dhse.G}) at the special value
$x=p{+}\mA^L$. By their explicit form (\ref{appbb.19})
at $r\leq 2$ and for generic endpoints, the solution to (\ref{appbb.21}) will express
$\XX_M{}^{IJ}$ in terms of $\mA$-periods of iterated integrals of
Abelian differentials (\ref{appbb.12}) and antisymmetric kernels in (\ref{gammaang}).
In view of the similar $\mB$ periods to be encountered in section \ref{sec:B.5} below, 
we introduce the shorthand notation
\bea
 \alpha^{ L \langle IJ \rangle}(p)  & = &  \Gamma^{\langle IJ \rangle}( \mA^L \cdot p ,p) 
 \no \\ 
  \beta_L{}^{ \langle IJ \rangle}(p)  & = &  \Gamma^{\langle IJ \rangle}(\mB_L \cdot p ,p) 
 \label{appbb.22} 
\eea
and more generally write the special values $x=\mA^L \cdot p $ or $x=\mB_L \cdot p $ of the iterated Abelian integrals
in (\ref{appbb.12}) as follows
\bea
\alpha^{L}{}_{ I_1 \cdots I_r}(p) & = & \Gamma_{I_1 \cdots I_r}(\mA^L \cdot p ,p)  
\no \\ 
\beta_{L | I_1 \cdots I_r}(p) & =  & \Gamma_{I_1 \cdots I_r}(\mB_L \cdot p ,p) 
 \label{appbb.23} 
\eea
with the usual raising of indices via $Y^{JK}$ to $\alpha^{L  | I_1 \cdots I_r}(p) = \Gamma^{I_1 \cdots I_r}(\mA^L \cdot p ,p) $. While the periods (\ref{permat}) lead to simple, $p$-independent expressions at $r=1$,
\bea
\alpha^{L}{}_{ I}  = \delta^L_I  \, , \ \ \ \  \beta_{L | I } = \Omega_{LI}
\label{appbb.24} 
\eea
equivalent to $\alpha^{L | I} = Y^{LI}  \, , \  \beta_{L }{}^I = Y^{IK} \Omega_{KL}$,
generic instances of (\ref{appbb.23}) depend non-trivially on $p$ as exemplified by
$\partial_p \alpha^{L}{}_{ IJ }(p) = \omega_I(p) \delta^L_J -  \omega_J(p) \delta^L_I $.
The shuffle relations (\ref{1.shuffle}) straightforwardly propagate to (see (\ref{appbb.02}) 
and (\ref{appbb.in}) for $\shuffle$)
\bea
\alpha^{L}{}_{I_1 \cdots I_r \shuffle    J_1 \cdots J_s }(p) & = & 
\alpha^{L}{}_{I_1 \cdots I_r  }(p) \, \alpha^{L}{}_{ J_1 \cdots J_s }(p)  
\no \\
\beta_{L | ( I_1 \cdots I_r\shuffle J_1 \cdots J_s  )}(p) & = &  
\beta_{L |  I_1 \cdots I_r}(p) \, \beta_{L |  J_1 \cdots J_s   }(p)  
 \label{appbb.25} 
\eea
and reduce certain permutation sums of (\ref{appbb.23}) in the $I_j$ to
combinations of (\ref{appbb.24}). For instance, the fact that both of 
$\alpha^{L}{}_{ IJ}(p)+ \alpha^{L}{}_{J I}(p)$ and its complex conjugate simplify 
to $\delta^L_I \delta^L_J$ implies the antisymmetry of
 $\Im \alpha^{L}{}_{ IJ}(p) = - \Im \alpha^{L}{}_{ JI }(p)$. 
 
 \sm 
 
In this setting, we can express the contribution of $\TT^{IJ}(x,p)$ in (\ref{appbb.19}) to the $\mA$-monodromy of 
${\cal U}_{\rm DHS}(x,p;\xi,\eta)^{-1}$ as
\begin{align}
\TT^{IJ}(\mA^L \cdot p ,p) &= {-}\XX^{LIJ}
- \pi  \alpha^{ L \langle IJ \rangle}(p)  
 + \pi^2 \big( \alpha^{L | JI}(p) - Y^{LJ} Y^{LI} + \overline{ \alpha^{L | JI} (p) } \big).
 \label{appbb.26} 
\end{align}
Hence, the monodromy condition $\TT^{IJ}(\mA^L \cdot p ,p) =0$ determines
\bea
\XX^{LIJ} =
- \pi  \alpha^{ L \langle IJ \rangle}(p)  
 +\frac{\pi^2}{2} \big( \alpha^{L | JI}(p) + \overline{ \alpha^{L | JI}  }(p)  - (I \leftrightarrow J)\big),
 \label{appbb.27} 
\eea
where we have rewritten $\alpha^{L | JI}(p)  = \frac{1}{2}( \alpha^{L | JI}(p) - \alpha^{L | IJ}(p) + Y^{LJ} Y^{LI} )$
and similarly for its complex conjugate to expose the antisymmetry $\XX^{LIJ} =  -\XX^{LJI}$.
Even though individual terms on the right side of (\ref{appbb.27}) depend on $p$, one can verify
that $\partial_p \XX^{LIJ} = \partial_{\bar p} \XX^{LIJ}  = 0$
by combining the derivatives in (\ref{dhse.18a}) below.
Note that both of  $ \alpha^{ L \langle IJ \rangle}(p) $ and
$ \beta_L{}^{ \langle IJ \rangle}(p)$ in (\ref{appbb.22}) are antisymmetric
in $I,J$ since $\Gamma^{ \langle IJ \rangle}(x,p) $ is.

\sm

Inserting the result (\ref{appbb.27}) for $\XX^{LIJ}$ into the expressions (\ref{appbb.19}) 
makes the intermediate objects
$ h^{I_1 \cdots I_r}{}_J(x,p)$ in (\ref{dhse.u}) for the relations between Enriquez kernels
$ g^{I_1 \cdots I_r}{}_J(x,p)$ and $f$-tensors fully explicit for $r\leq 2$.
 
\subsection{Implementing the automorphism}
\label{sec:B.4} 

The next step is to express the Enriquez kernels $ g^{I_1 \cdots I_r}{}_J(x,p)$
in terms of the above $ h^{I_1 \cdots I_r}{}_J(x,p)$ and the expansion coefficients
$ \LL_I{}^{ J_1 \cdots J_r K}(p)$ and $\MM^{I I_1 \cdots I_r} {}_J (p) $
 in (\ref{exphatb}) and (\ref{exphatb1}) that implement the automorphism $a \cup b \to \ta \cup \tb$.
 This is most conveniently done after eliminating one of the infinite families
 of coefficients $\MM$ or $\LL$ in terms of the other by exploiting the matching
 (\ref{3.res-m}) of residues in the connections (\ref{3.JKa}).
 
 \sm
 
The order-by-order computations amount to inserting the expansions (\ref{exphatb}) and (\ref{exphatb1})
into $[b_I, a^I] =   [\eta_I, \ta^I - \xi^I]  $ and bringing all the nested brackets into
the standard form $B_{I_1}\cdots B_{I_r} a^J$ by exhaustive use of Jacobi identities, 
for instance $[ [b_I, b_J], a^K ] = [ b_I,  [ b_J, a^K ] ] $ $- [ b_J,  [ b_I, a^K ] ]$.
 Matching the coefficients of $B_{I_1}\cdots B_{I_r} a^J$ in (\ref{3.res-m}) then 
 implies identities such as 
 \begin{align}
\LL_I{}^{JK} - \LL_I{}^{KJ} &= -\MM^{JK}{}_I  , 
\label{defshu}\\
\LL_I{}^{JKP} - \LL_I{}^{JPK}-  \LL_I{}^{PJK}+  \LL_I{}^{PKJ} &= \MM^{JK}{}_Q \MM^{QP}{}_I -\MM^{JKP}{}_I .
\notag
\end{align}
Their generalization to higher order is most conveniently obtained by iterative use of (\ref{3.HB})
and leads to the closed formula (\ref{mshuf}) for the coefficients $\MM_\shuffle^{I I_1 \cdots I_r} {}_J (p) $ of the
reorganized expansion (\ref{3.bsh}) of $\eta_I$. By the shuffle symmetries
\bea
\LL_I{}^{  J_1 \cdots J_r \shuffle  K_1 \cdots K_s } = 0 
\, , \ \ \ \
r,s \geq 1
 \label{appbb.ree}
\eea
following from Ree's theorem and $\eta_I \in \mg$, we have
\bea
r \LL_I{}^{  J_1 \cdots J_r  } = - 
\MM_{\shuffle}^{J_1 \cdots J_r}{}_I \, , \ \ \ \ r\geq 2,
 \label{lvsmshuff}
\eea
where (\ref{defshu}) for instance identifies $\MM_{\shuffle}^{I_1 I_2 }{}_K
= \MM^{I_1 I_2 }{}_K $ and
 \bea
\MM_{\shuffle}^{I_1 I_2 I_3}{}_K
= \MM^{I_1 I_2 I_3}{}_K -  
   \MM^{I_1 I_2}{}_L  \MM^{L I_3}{}_K .
   \label{defshu.a}
 \eea
With the relations in (\ref{defshu}), one can now express the
Enriquez kernels $g^{I_1\cdots I_r}{}_J(x,p)$ solely in terms of the
expansion coefficients $h^{I_1\cdots I_r}{}_J(x,p)$ in (\ref{dhse.u}) and the
quantities $\MM^{I I_1 \cdots I_r} {}_J (p) $: Inserting all of (\ref{exphatb}),
(\ref{exphatb1}) and (\ref{defshu}) into (\ref{3.JKa}) and comparing the coefficients of
$B_{I_1}\cdots B_{I_r} a^J$ determines
 \bea
g^{I}{}_J(x,p) &=& h^{I}{}_J(x,p) + \omega_K(x) \MM^{KI}{}_J(p),
\label{gtoh.1} 
\no  \\
g^{I_1 I_2}{}_J(x,p) &= & h^{I_1 I_2}{}_J(x,p) 
  +  h^{I_1 }{}_K(x,p) \MM^{KI_2}{}_J(p)
 \no \\ &&
 -   h^{K }{}_J(x,p) \MM^{I_1 I_2}{}_K(p)
 +  \omega_K(x) \MM^{KI_1 I_2}{}_J(p).
  \eea
 Now the only missing piece of information in the above relations between
$g^{I_1\cdots I_r}{}_J$ and $h^{I_1\cdots I_r}{}_J$ is the
 explicit form of the $\MM^{J K_1 \cdots K_r}{}_I$ which will be determined next.

\subsection{Explicit form of the automorphism and Enriquez kernels}
\label{sec:B.5} 
 
The last step in our explicit construction of the Enriquez kernels is the
extraction of the coefficients $\MM^{J K_1 \cdots K_r}{}_I$ in the
automorphism (\ref{exphatb1}) from the $\mB$-monodromy condition 
in (\ref{3.mon1}). The latter is particularly suitable for order-by-order computations when rewritten
in the form
\begin{align}
e^{-2\pi i b_K} &=  {\cal U}_{\rm DHS}\big(\mB_K \cdot p ,p;\xi,\eta \big)^{-1}
= 1 + \sum_{r=1}^{\infty} \eta_{I_1} \cdots \eta_{I_r} \TT^{I_1 \cdots I_r}(\mB_K \cdot p ,p)
 \label{appbb.31}
\end{align}
that incorporates the expansion in (\ref{dhse.G}) with explicit
results for its coefficients $ \TT^{I_1 \cdots I_r}$ in (\ref{appbb.19}).
The desired $\MM^{J K_1 \cdots K_r}{}_I$ are determined by lining up
the Lie-algebra valued expansion variables $\eta_I$ and $b_K$ on the two sides
of (\ref{appbb.31}) which can be done in a variety of ways. We found it convenient
to invert the expansion of $\eta_I$ in (\ref{3.bsh}) while eliminating the coefficients
$\MM_{\shuffle}$ in favor of $\MM$ using (\ref{mshuf}), e.g.\footnote{The coefficient of 
$ \eta_{I_1} \eta_{I_2}  \eta_{I_3} \eta_{I_4}$ at the next order of the expansion (\ref{appbb.32}) of $b_J$ is given by 
$ \MM^{I_1 I_2 I_3 I_4}{}_J 
+ \MM^{I_1 I_2 K}{}_J   \MM^{I_3 I_4}{}_K  
+  \MM^{I_1 K}{}_J    \MM^{I_2 I_3 I_4}{}_K  
+   \MM^{I_1 K I_4}{}_J    \MM^{I_2 I_3}{}_K  
 +   \MM^{I_1K}{}_J   \MM^{ I_2 L}{}_K   \MM^{I_3 I_4}{}_L$ and does not line up with a
closed formula where the number of terms doubles at each order as in the coefficients (\ref{mshuf})
of the inverse expansion of $\eta_J$ in terms of $b_{I_1}\cdots b_{I_r}$.}
\bea
  \label{appbb.32}
 b_J & = & \eta_J + \eta_{I_1} \eta_{I_2} \MM^{I_1 I_2}{}_J(p) 
 \no \\ &&
 +  \eta_{I_1} \eta_{I_2} \eta_{I_3}
 \big( \MM^{I_1 I_2 I_3}{}_J(p)  + \MM^{I_3 I_2}{}_K(p)  \MM^{KI_1}{}_J(p) 
 \big)+ {\cal O}(\eta^4),
\eea
where the coefficient of $  \eta_{I_1} \eta_{I_2} \eta_{I_3}$ is another shuffle-symmetric combination
different from $ \MM_{\shuffle}^{I_1 I_2 I_3}{}_I(p) $ in (\ref{defshu.a}). After inserting the 
expansion (\ref{appbb.32}) into the exponentials of (\ref{appbb.31}),
\begin{align}
 e^{-2\pi i b_K} &= 1 - 2\pi i \eta_{I} \delta^I_K +   \eta_{I_1}  \eta_{I_2}
 \big( \tfrac{1}{2} (2\pi i)^2 \delta^{I_1 I_2}_K - 2\pi i  \MM^{I_1 I_2}{}_K(p) \big)   \label{appbb.33} 
 \\
 &\quad  +   \eta_{I_1}  \eta_{I_2} \eta_{I_3}
 \big( {-}\tfrac{1}{6} (2\pi i)^3 \delta^{I_1 I_2 I_3}_K
  + \tfrac{1}{2} (2\pi i)^2  \MM^{I_1 I_2}{}_K(p) \delta^{I_3}_K
  + \tfrac{1}{2} (2\pi i)^2  \delta^{I_1}_K  \MM^{I_2 I_3}{}_K(p) \notag \\
  &\quad\quad\quad\quad\quad\quad
  - 2\pi i \big[ \MM^{I_1 I_2 I_3}{}_K(p) - \MM^{I_2 I_3}{}_L(p)  \MM^{L I_1 }{}_K(p)  \big] \big)  + {\cal O}(\eta^4),
\notag
\end{align}
we can equate the unknown $\MM$-dependent coefficients of $ \eta_{I_1} \cdots \eta_{I_r}$ with
the computable quantities $\TT^{I_1 \cdots I_r}(p+\mB_K,p)$. Among the resulting conditions at
rank $r \leq 3$,
\begin{align}
\TT^{I}(\mB_K \cdot p ,p) &= -2\pi i \delta^I_K,
\label{appbb.34}  \\
\TT^{I_1 I_2}(\mB_K \cdot p ,p) &= \tfrac{1}{2} (2\pi i)^2 \delta^{I_1 I_2}_K - 2\pi i  \MM^{I_1 I_2}{}_K(p),
\notag
\\
\TT^{I_1 I_2 I_3}(\mB_K \cdot p ,p) &= {-}\tfrac{1}{6} (2\pi i)^3 \delta^{I_1 I_2 I_3}_K
  + \tfrac{1}{2} (2\pi i)^2  \MM^{I_1 I_2}{}_K(p) \delta^{I_3}_K
  + \tfrac{1}{2} (2\pi i)^2  \delta^{I_1}_K  \MM^{I_2 I_3}{}_K(p) \notag \\
  &\quad
  + 2\pi i   \MM^{I_2 I_3}{}_N(p)  \MM^{N I_1 }{}_K(p) 
    - 2\pi i  \MM^{I_1 I_2 I_3}{}_K(p) ,\notag
\end{align}
the first line is trivially satisfied since the leading order $ b_I = \eta_I +  {\cal O}(\eta^2)$ in
the expansions (\ref{exphatb}) and (\ref{appbb.32}) already takes our computations
in (\ref{3.xietaA}) into account. The second line of (\ref{appbb.34}) in turn can be solved
for $ \MM^{IJ}{}_K(p)$ in terms of the following $\mB$-periods
\begin{align}
\TT^{I J}(\mB_K \cdot p ,p) &= 
{-} \Omega_{KR} \XX^{RIJ}  
 - \pi   \beta_K{}^{ \langle IJ \rangle}(p) \label{appbb.35}  \\
 &\quad
+ \pi^2 \big[ \beta_K{}^{JI}(p)
- Y^{JR} \bar \Omega_{RK} Y^{IS} \Omega_{SK}
+ \overline{\beta_K{}^{JI}(p)}
\big]  \notag 
\end{align}
obtained from (\ref{appbb.19}) at $x= \mB_K \cdot p $,
with $\XX$ given by the $\mA$-periods (\ref{appbb.27}),
and the $\alpha,\beta$ notation introduced in (\ref{appbb.22}) as well
as (\ref{appbb.23}). One can view the second 
line of (\ref{appbb.34}) as providing both an expression for the
unknown coefficient $ \MM^{I J}{}_K(p)$ and as a crosscheck of the $\mB$-periods
(\ref{appbb.35}): the antisymmetry of the solution
\begin{align}
 \MM^{I J}{}_K(p) = i\pi  \delta^{IJ}_K + \frac{i}{2\pi }   \TT^{I J}(\mB_K \cdot p ,p) 
\label{appbb.36} 
\end{align}
in $I \leftrightarrow J$ derived from (\ref{defshu}) is not manifest term by term and relies on finding the symmetric part
$\TT^{I J}(\mB_K \cdot p ,p)+ \TT^{JI}(\mB_K \cdot p ,p)= (2\pi i)^2 \delta^{I J}_K$ in (\ref{appbb.35}). This can be verified using the consequence $ \beta_{K|IJ}(p) +  \beta_{K| JI}(p)=  \Omega_{KI}\Omega_{KJ}$ of the shuffle relations (\ref{appbb.25}), so the desired coefficient can be alternatively expressed as the antisymmetric part
 \begin{align}
 \MM^{I J}{}_K(p)&= \frac{i}{4\pi } \big[   \TT^{I J}( \mB_K \cdot p ,p) - \TT^{JI}(\mB_K \cdot p ,p) \big]
\label{appbb.37}  \\
&=  \frac{i}{2} \Big(  
\Omega_{KR} \alpha^{R \langle IJ \rangle}(p) -   \beta_K{}^{ \langle IJ \rangle}(p) \Big) \notag \\
&\quad +
 \frac{i \pi}{4} \Big(
\beta_K{}^{JI}(p)
- \Omega_{KR} \alpha^{R| JI }(p)
+ Y^{IR} \bar \Omega_{RK}   Y^{J S} \Omega_{SK} \notag \\
&\quad \quad \quad
+ \overline{\beta_K{}^{JI}(p)}
- \Omega_{KR} \overline{ \alpha^{R| JI }(p) }
 - (I \leftrightarrow J)
\Big), \notag
\end{align}
where the antisymmetrization prescription $- (I \leftrightarrow J)$ applies to the last two lines.

\sm

One can similarly solve the third equation of (\ref{appbb.34}) for $ \MM^{I_1 I_2 I_3}{}_K(p)$ in terms of $\mB$-monodromies $\TT^{I_1 I_2 I_3}(\mB_K \cdot p ,p)$ and lower-order expressions.
Again, the shuffle symmetry of $\MM^{I_1 I_2 I_3}{}_K(p) - \MM^{I_2 I_3}{}_L(p)  \MM^{L I_1 }{}_K(p)$ 
will not be obvious from the resulting expression but can be verified using a sequence of
shuffle relations (\ref{appbb.25}) of the $\mA$- and $\mB$ periods and their special cases
(\ref{appbb.24}). An alternative approach towards $\MM^{I_1 I_2 I_3}{}_K(p)$ is to project the
third equation of (\ref{appbb.34}) to those symmetry components with respect to the permutation
group of $I_j$ that eliminate all the admixtures of Kronecker deltas as done in (\ref{appbb.37}).
 
\subsection{Summary and further simplifications}
\label{sec:B.6}  
 
The explicit form of low rank Enriquez kernels may be obtained by combining (\ref{appbb.20}) with (\ref{gtoh.1}), resulting in equation (\ref{intro.41}) of the Introduction.  The simplest components $ \TT^{I}(x,p)$ and 
$\TT^{I_1 I_2}(x,p) $ of the gauge transformation are explicitly available by combining (\ref{appbb.19}) with the expression (\ref{appbb.27}) for $\XX^{KIJ}$. The coefficients $\MM^{KI_1 \ldots I_r}{}_J(p)$ of the automorphism are explicitly given in (\ref{appbb.37}) for $r=1$ and implicitly given in (\ref{appbb.34}) for $r=2$, resulting from the second- and third-order expansion of  $ {\cal U}_{\rm DHS}(\mB_K \cdot p ,p;\xi,\eta)^{-1}$, respectively.

\sm

The analogous expressions for $g^{I_1 \cdots I_r}{}_J(x,p)$ at higher order $r\geq 3$
require the contributions to the automorphism up to and including the
rank of $\MM^{KI_1 \ldots I_r}{}_J(p)$ which is computed from expansions of
$ {\cal U}_{\rm DHS}\big(\mB_K \cdot p ,p;\xi,\eta \big)^{-1}$ to the $(r+1)^{\rm th}$
order in $\eta_I$ or $b_I$. With the expression for $ \MM^{KI}{}_J(p)$ in (\ref{appbb.37}), we
have the fully explicit form of the Enriquez kernel $g^I{}_J(x,p)$,
and the solution $\MM^{KI_1 I_2}{}_J(p)$ of the last equation in 
(\ref{appbb.34}) completely fixes $g^{I_1 I_2}{}_J(x,p)$ in (\ref{intro.41}).

\sm

However, we have not yet attempted an exhaustive simplification of the $\mA$- and $\mB$-periods appearing in the expressions for  $\MM^{KI_1 \cdots I_r}{}_J(p)$ due to the procedure in section \ref{sec:3.6}. The expression (\ref{appbb.37}) for $ \MM^{KI}{}_J(p)$ admits a more minimal form that can be anticipated from the differential equations
\begin{align}
\partial_p \MM^{IJ}{}_K(p) &= \pi \big(  \delta^J_K \omega^I(p) - \delta^I_K \omega^J(p)  \big),
\label{dhse.20}  \\
\partial_{\bar p} \MM^{IJ}{}_K(p) &= \pi \big(  \delta^I_K  \bar \omega^J(p)  -  \delta^J_K  \bar \omega^I(p)  \big),
\notag
\end{align}
which follow from
\begin{align}
\partial_p \alpha^{K \langle IJ  \rangle}(p) &= \pi \big( \omega^J(p) Y^{KI}  -  \omega^I(p) Y^{KJ} \big),
\label{dhse.18a}  \\
\partial_{\bar p}  \alpha^{K \langle IJ  \rangle}(p)&=  \pi \big( \bar \omega^J(p) Y^{KI}  -  \bar \omega^I(p) Y^{KJ} \big),
\notag \\
\partial_p \alpha^{K|  IJ  }(p) &=  \omega^I(p) Y^{JK} {-}\omega^J(p) Y^{IK}, \notag
\end{align}
as well as
\begin{align}
\partial_p \beta_K{}^{\langle IJ \rangle}(p) &= \pi  \big( \omega^J(p) Y^{IR}  - \omega^I(p) Y^{JR}  \big) \bar \Omega_{RK} ,
\label{dhse.18b}  \\
\partial_{\bar p} \beta_K{}^{\langle IJ \rangle}(p)&=  \pi \big(  \bar \omega^J(p) Y^{IR} - \bar \omega^I(p) Y^{JR}  \big) \Omega_{RK} ,
\notag  \\
\partial_p \beta_K{}^{  IJ  }(p) &=  \big(\omega^I(p) Y^{JR} {-}\omega^J(p) Y^{IR} \big) \Omega_{RK}.  \notag
\end{align}
Matching the $p$- and $\bar p$-derivatives of (\ref{dhse.20}) with those of $2\pi i \, \Im \alpha_K{}^{IJ}(p) $, we establish %
\bea
\MM^{IJ}{}_K(p) = 2\pi i \, \Im \alpha_K{}^{IJ}(p) + \notc^{IJ}{}_K,
\label{dhse.21} 
\eea
where $\notc^{IJ}{}_K = - \notc^{JI}{}_K$ is independent on $p$ but may depend non-meromorphically
on the moduli of $\Sigma$. Hence, the final form of the first non-trivial Enriquez
kernel to be provided in this work is given by
\begin{align}
 g^I{}_J(x,p)= f^I{}_J(x,p) - 2\pi i \, \omega_J(x)  \Im \! \int^x_p \! \omega^I  
 + \omega_K(x) \Big (  2\pi i \,  \Im \alpha_J{}^{KI}(p) +  \notc^{KI}{}_J \Big ) ,
\label{gijfinal} 
\end{align}
which follows from (\ref{intro.41}) and (\ref{dhse.21}). While meromorphicity
in $x$ is a consequence of the absence of $(0,1)$-form components in (\ref{3.KJ}),
meromorphicity of $ g^I{}_J(x,p)$ in $p$ relies on the interplay of the first three terms in
(\ref{gijfinal}) and is guaranteed by the uniqueness of the Enriquez kernels based on the
defining properties of ${\cal K}_{\rm E}$.
We leave obtaining a direct derivation of (\ref{dhse.21}), an
explicit formula for $\notc^{IJ}{}_K$, and the generalizations to simplify higher-rank
contributions $\MM^{KI_1 \ldots I_r}{}_J(p)$ to the automorphism to future work.

\newpage

\end{document}

\end{document}